\documentclass{article}

\usepackage[utf8]{inputenc}

\usepackage{amsmath}
\usepackage{amsfonts}
\usepackage{amssymb}
\usepackage{amsthm}
\usepackage{enumerate}
\usepackage{color}
\usepackage{stmaryrd}
\usepackage{slashed}
\usepackage{setspace}
\usepackage{verbatim} % comments

\usepackage{relsize}
\usepackage{tikz-cd}
\usepackage{tikz}
\usetikzlibrary{decorations.markings}
\usetikzlibrary{decorations.pathmorphing}
\usetikzlibrary{cd}
\usepackage{comment}
\usepackage{wrapfig}

\usepackage{geometry}
\addtolength{\headheight}{-10pt}
\addtolength{\textheight}{10pt}
\addtolength{\textwidth}{5pt}

\usepackage{hyperref}
\usepackage{epigraph} %quote before chapter start

\newcommand{\Bord}{\operatorname{Bord}}

\newcommand{\Aut}{\operatorname{Aut}}
\newcommand{\Hom}{\operatorname{Hom}}

\newcommand{\pt}{\operatorname{pt}}

\newcommand{\Spin}{\operatorname{Spin}}
\newcommand{\Pin}{\operatorname{Pin}}
\newcommand{\id}{\operatorname{id}}
\newcommand{\Herm}{\operatorname{Herm}}

\newcommand{\sVect}{\operatorname{sVect}}
\newcommand{\sHerm}{\operatorname{sHerm }}
\newcommand{\Hilb}{\operatorname{Hilb}}
\newcommand{\sHilb}{\operatorname{sHilb}}
\newcommand{\Vect}{\operatorname{Vect}}
\newcommand{\ev}{\operatorname{ev}}
\newcommand{\coev}{\operatorname{coev}}
\newcommand{\obj}{\operatorname{obj}}

\newcommand{\tr}{\operatorname{tr}}
\newcommand{\Fun}{\operatorname{Fun}}

\newcommand{\ol}{\overline}

\newcommand{\Z}{\mathbb{Z}}
\newcommand{\N}{\mathbb{N}}

\newcommand{\C}{\mathbb{C}}
\newcommand{\R}{\mathbb{R}}

\newcommand{\Cat}{\operatorname{Cat}}
\newcommand{\aICat}{\operatorname{aICat}}

\newcommand{\op}{\mathrm{op}}

\newcommand{\cop}{\mathrm{op}}

\newcommand{\Pos}{\operatorname{Pos}}

%numbering based on section
\newtheorem{theorem}{Theorem}[section]
%I am giving everything the counter [theorem] so that they get the same numbering
\newtheorem{proposition}[theorem]{Proposition}
\newtheorem{lemma}[theorem]{Lemma}

\newtheorem{corollary}[theorem]{Corollary}

%theorem repetitions
\makeatletter
\newtheorem*{rep@theorem}{\rep@title}
\newcommand{\newreptheorem}[2]{%
\newenvironment{rep#1}[1]{%
 \def\rep@title{#2 \ref{##1}}%
 \begin{rep@theorem}}%
 {\end{rep@theorem}}}
\makeatother

\newreptheorem{theorem}{Theorem}
\newreptheorem{proposition}{Proposition}
\newreptheorem{corollary}{Corollary}

\theoremstyle{definition}

\newtheorem{definition}[theorem]{Definition}

\theoremstyle{remark}

\newtheorem{remark}[theorem]{Remark}
\newtheorem{example}[theorem]{Example}

\title{The spin-statistics theorem for topological quantum field theories}
\author{Luuk Stehouwer}
\date{\today}

\begin{document}

\maketitle

\begin{abstract}
We establish the spin-statistics theorem for topological quantum field theories (TQFTs) in the framework of Atiyah. 
We incorporate spin via spin structures on bordisms, and represent statistics using super vector spaces. 
Unitarity is implemented using dagger categories, in a manner that is equivalent to the approach of Freed-Hopkins, who employed $\mathbb{Z}/2$-equivariant functors to address reflection-positivity.
A key contribution of our work is the introduction of the notion of fermionically dagger compact categories, which extends the well-established concept of dagger compact categories. 
We show that both the spin bordism category and the category of super Hilbert spaces are examples of fermionically dagger compact categories. 
The spin-statistics theorem for TQFTs emerges as a specific case of a more general result concerning symmetric monoidal dagger functors between fermionically dagger compact categories. 
\end{abstract}

\tableofcontents

\newpage

\section{Introduction}

 \epigraph{``[Spin and statistics] must necessarily go together, but we have not been able
to find a way of reproducing his [(Pauli's)] arguments on an elementary level."}{Richard Feynman}

The Spin-Statistics Theorem, an essential cornerstone in quantum field theory (QFT), relates the spin of elementary particles to their statistical behavior. 
This theorem asserts that particles with integer spin conform to Bose-Einstein statistics, while those with half-integer spin follow Fermi-Dirac statistics. 
Beyond its theoretical elegance, the fact that the floor on which we are standing right now does not collapse into a condensate, is due to the fact that matter has half-integer spin, applying the Pauli exclusion principle.
Ever since the first proof of the Spin-Statistics Theorem in QFT by Wolfgang Pauli in 1940 \cite{paulispinstatistics}, mathematical physicists have been seeking proofs that are mathematically rigorous and emphasize the exact necessary assumptions.
Even though the Spin-Statistics Theorem has been a textbook account for decades \cite{streaterwightman, weinberg1995quantum}, mathematical physicists continued to find new proofs in several axiomatic frameworks for quantum field theory, such as the Wightman axioms \cite{burgoynespinstatistics, deligne1999quantum} and in algebraic QFT \cite{guido1995algebraicspinstatistics}.
On the other hand, even recent theoretical physics literature is still discussing the possibility of providing more conceptually insightful proofs \cite{duck1998toward, santamatospinstatistics}.

In this context, our work employs category-theoretical language to provide a proof of the Spin-Statistics Theorem for topological quantum field theories (TQFTs). 
Our contribution aligns with the recent trend of applying category theory to quantum physics, not only in TQFTs, but also in non-invertible symmetries of non-topological QFTs and topological phases of matter.
We strongly rely on the work of Freed and Hopkins on reflection-positivity, who demonstrated the theorem's validity in the realm of invertible topological field theories \cite{freedhopkins}.
One main insight of our work is to encapsulate the distinct properties of categories describing fermionic systems, such as super Hilbert spaces and spin bordism, which are necessary to make Spin-Statistics a theorem. 
This is the concept of ``fermionically dagger compact" categories, a novel extension of the well-known notion of dagger compact categories relevant in quantum information theory \cite{abramsky2004categorical, selinger2007positive}, see Definition \ref{def:fermdagger}.
We hope our approach enriches the mathematical physicist's understanding of the Spin-Statistics Theorem and inspires further research at the intersection of category theory and QFT.

\subsection*{Unitary TQFT and dagger categories}

It has long been understood that unitarity is a crucial assumption in the proof of the Spin-Statistics Theorem.\footnote{In the rest of this paper, we will use the term `unitary' to refer to both Lorentzian signature unitary QFTs as well as reflection-positive Euclidean QFTs. On the one hand, this is justified, because a Lorentzian QFT is unitary if and only if its Wick-rotated reflection-positive QFT is reflection-positive. On the other hand it ignores the fact that this relationship between the reflection-positivity axiom and unitarity is non-obvious.}
Several authors \cite{atiyahTFT, baezdagger} \cite[Appendix G]{turaev} have suggested that unitarity should be implemented in functorial field theory by requiring the theory to be a dagger functor. \footnote{The formulation using $\Z/2$-equivariant functors by \cite{freedhopkins} can be shown to be equivalent to this approach \cite[Theorem 6.2.7]{mythesis}.} Here a dagger category is a category $\mathcal{D}$ together with an involution functor $\dagger\colon \mathcal{D} \to \mathcal{D}^{\op}$, which is the identity on objects.
We refer to Sections \ref{sec:Jan} and \ref{sec:symmon} for preliminaries on dagger categories and symmetric monoidal dagger categories, respectively.
A \emph{bosonic unitary $d$-dimensional TQFT} is thus a symmetric monoidal dagger functor
\[
 \Bord^{SO}_{d} \to \Hilb
\]
from the symmetric monoidal dagger oriented bordism category\footnote{We will not go into details on how to make bordism categories into symmetric monoidal dagger categories in this paper, instead referring to \cite{mythesis,higherdagger}.} to the symmetric monoidal dagger category of Hilbert spaces.
This definition in particular implies that the partition function on the orientation reversed manifold is the complex conjugate, a fact familiar to physicists.

Including fermions in this definition is straightforward:
state spaces should be $\Z/2$-graded by the fermion parity operator $(-1)^F$. 
The crucial property that distinguishes odd from even states in this grading is their statistics: bosonic states have Bose statistics and fermionic states have Fermi statistics. 
This is implemented by requiring the appropriate exchange operations
\[
\phi_1 \phi_2 = \phi_2 \phi_1 \qquad \psi_1 \psi_2 = - \psi_2 \psi_1,
\]
for bosons $\phi_1,\phi_2$ and fermions $\psi_1,\psi_2$.
Mathematically, we have to equip the monoidal category of $\Z/2$-graded vector spaces with the braiding that implements the Koszul sign rule. 
Therefore we have to replace Hilbert spaces by super Hilbert spaces $\sHilb$, see Section \ref{sec:shilb} for our conventions.
The reader is referred to Appendix \ref{app:boringsign} for a detailed analysis comparing alternative conventions, in particular Theorem \ref{th:sHilbvsZ2grhilb}.
On the geometric side, we need to include spin structures on our spacetime bordisms in order to talk about spinors.
We arrive at the following definition.

\begin{definition}
A \emph{fermionic unitary $d$-dimensional TQFT} is a symmetric monoidal dagger functor
\[
 Z\colon \Bord^{\Spin}_{d} \to \sHilb.
\]
\end{definition}

\subsection*{Spin and statistics for TQFTs}

In order to formulate the connection between spin and statistics in the context of TQFTs, let us consider a time slice $Y^{d-1}$, i.e. an object of $\Bord^{\Spin}_d$.
The state space $Z(Y)$ of a fermionic unitary TQFT $Z$ comes equipped with two $\Z/2$-gradings. 
The obvious grading is given by the fermion parity operation $(-1)_{Z(Y)}^F$, the supergrading involution which maps an odd state $\psi$ to $-\psi$ and and even states $\phi$ to themselves.
The other grading is the involution $Z((-1)^{2s}_Y)$ induced by the spin flip $(-1)^{2s}_Y$ on the spin manifold $Y$. 
In more detail, note that the spin group has a canonical element $(-1)^{2s} \in \Spin(d)$, the nontrivial element in the kernel to $SO(d)$.
Since this element is central, acting with it on the $\Spin(d)$-principal bundle induces an automorphism of the spin structure on $Y$.
It can be thought of as rotating particles by $360^\circ$.
This in turn induces a mapping cylinder bordism $(-1)^{2s}_Y\colon Y \to Y$, which as a manifold with boundary is simply $Y \times [0,1]$, but the identification with $Y$ is twisted on one side with the above involution.

We will now argue that the spin-statistics connection should say that $(-1)_{Z(Y)}^F = Z((-1)^{2s}_Y)$.
For this, first recall the usual classification of irreducible representations of $\Spin(3) = SU(2)$ in terms of spin in the special case $d = 3$.
In that case, $(-1)^{2s}$ acts trivially on integer spin representations and nontrivially on half-integer spin representations.
For more general $d$, we no longer have a classification of irreducible representations of $\Spin(d)$ in terms of spin.
However, the spin group still connects to the concept of spin, because in $d$-dimensional Euclidean signature quantum field theory, spinful particles transform in irreducible representations of $\Spin(d)$.
Moreover, since $(-1)^{2s}$ is central, it still acts by $\pm 1$ and so we can still define integer and half-integer spin particles by how $(-1)^{2s}$ acts in the representation.
One consequence of this viewpoint is that a particle has integer spin if and only if its spin representation lifts to $SO(d)$.
We conclude that $(-1)_{Z(Y)}^F = Z((-1)^{2s}_Y)$ is equivalent to saying ``a particle has half-integer spin if and only if it satisfies Fermi statistics and a particle has integer spin if and only if it satisfies Bose statistics".

The first step to translate this formulation of the spin-statistics theorem to a more categorical language is to realize that the spin-statistics connection expresses equivariance with respect to the $2$-group $B\Z/2$, as observed in \cite{TJFspinstatistics}, see Definition \ref{def:antiinvBAaction}.
Both $\sHilb$ and $\Bord_{d}^{\Spin}$ come equipped with $B\Z/2$-actions as described above.
A fermionic TQFT has a spin-statistics connection if and only if it is $B\Z/2$-equivariant:
\[
\begin{tikzcd}
    \arrow[loop, "{\text{ Spin } B\Z/2}"', "Y \mapsto (-1)^{2s}_Y", outer sep=5pt, distance=3.8cm] \Bord^{\Spin}_{d} \ar[rr,"Z"] && \sHilb \arrow[loop, "{\text{ Statistics } B\Z/2}"', "V \mapsto (-1)^F_V", outer sep=4pt, distance=4cm].
\end{tikzcd}
\]

\subsection*{Main results}

With the above explanation in mind, the following deserves to be called `the spin-statistics theorem for TQFTs'.

\begin{reptheorem}{cor:spinstatistics}
    Every unitary fermionic TQFT is $B\Z/2$-equivariant.    
\end{reptheorem}

Our main insight to prove this result, is to isolate a crucial property that categories of fermionic nature such as $\Bord^{\Spin}_d$ and $\sHilb$ share.
Roughly speaking, if $\mathcal{H} = \mathcal{H}_0 \oplus \mathcal{H}_1 \in \sHilb$ is a super Hilbert space, then the `categorically canonical' inner product on $\mathcal{H}^*$ will be positive definite on $\mathcal{H}_0$ but negative definite on $\mathcal{H}_1$, see Corollary \ref{cor:shilbfermdaggercpt}.
In more mathematical details, we first show that if $\mathcal{D}$ is a symmetric monoidal dagger category, there is a symmetric monoidal dagger category $\Herm \mathcal{D}$ in which $\mathcal{D}$ fully faithfully embeds, see Definition \ref{def:Hermitian completion} and Example \ref{ex:notionofposofdaggercat}.
If $\mathcal{D}$ has duals, we show that $\Herm \mathcal{D}$ comes equipped with a canonical dual functor which is a symmetric monoidal dagger functor, see Definition \ref{def:standardUDF}.
Many dagger categories of `bosonic nature' are dagger compact: we show that a symmetric monoidal dagger category is \emph{dagger compact} in the sense of \cite{selinger2007positive} if this standard dual functor restricts to $\mathcal{D}$, see Definition \ref{def:moderndaggercpt} and Proposition \ref{prop:daggercptTFAE}.
If $\mathcal{D}$ additionally comes equipped with a unitary $B\Z/2$-action, we call it \emph{fermionically dagger compact} if the standard dual functor lands only in $\mathcal{D}$ after twisting it by the $B\Z/2$-action (Definition \ref{def:fermdagger}).
Our main result on fermionically dagger compact categories which implies the spin-statistics theorem is:

\begin{reptheorem}{mainth}
    Let $F\colon \mathcal{D}_1 \to \mathcal{D}_2$ be a symmetric monoidal dagger functor between fermionically dagger compact categories.
    Suppose that for every isomorphism $f\colon x \to y$ in $\mathcal{D}_2$ such that $f^\dagger f$ is an involution, we have that $f^\dagger f$ is the identity.
    Then $F$ is $B\Z/2$-equivariant.
\end{reptheorem}

The main idea of the proof is to first show that a symmetric monoidal dagger functor intertwines the respective standard dagger dual functors. 
Since the standard dagger functors are related to the $B\Z/2$-actions through fermionic dagger compactness, we can derive a relationship between the $B\Z/2$-actions on $\mathcal{D}_1$ and $\mathcal{D}_2$.

\subsection*{Acknowledgements}

First and foremost, I thank my advisor Pete Teichner for the scrutiny of my PhD thesis and subsequent suggestions that led to this article.
I am grateful to Lukas M\"uller and Stephan Stolz for countless discussions on bordism theory, Hermitian pairings and dagger categories.
I want to thank all coauthors of the workshop on 
\href{http://categorified.net/dagger2023.html}{dagger higher categories} for their interest in generalizing my work to higher categories.
My special thanks go to David Reutter for having the patience to explain his ideas, and Theo Johnson-Freyd for bringing the group together and his feedback on my work.
I would like to mention Dan Freed as a major inspiration for all of my work.
Finally I thank two anonymous referees for finding several mistakes and improving the overall exposition.

I am grateful to the Max Planck Institute for Mathematics in Bonn where I wrote my PhD thesis on which this article is based.
I would like to thank Dalhousie University for providing the facilities to carry out my research.

\section{Dagger categories}

\subsection{Anti-involutions and positivity}
\label{sec:Jan}

The goal of this section is to review joint work \cite{jandagger} with Jan Steinebrunner on the relationship between dagger categories and anti-involutive categories using Hermitian pairings.
We start by reviewing the basic definitions of the theory of $\dagger$-categories:

\begin{definition} \label{def:daggerbasics}
A \emph{dagger category} or \emph{$\dagger$-category} is a category $\mathcal{D}$ equipped with a contravariant strict involution $\dagger\colon \mathcal{D} \to \mathcal{D}^{\op}$ which is the identity on objects.
An endomorphism $f\colon x \to x$ in a dagger category is called \emph{self-adjoint} if $f^\dagger = f$.
An isomorphism $f\colon x_1 \to x_2$ is \emph{unitary} if $f^\dagger = f^{-1}$.
A morphism $f \colon x_1 \to x_2$ is \emph{isometric} if $f^\dagger f = \id_{x_1}$.
\end{definition}

We denote by $\pi_0(\mathcal{D})$ the set of isomorphism classes of objects of $\mathcal{D}$.
We denote by $\pi_0^U(\mathcal{D})$ the unitary isomorphism classes, i.e.\ objects of $\mathcal{D}$ modulo the equivalence relation given by unitary isomorphism.

\begin{example}
    The category $\Hilb$ of finite-dimensional complex Hilbert spaces is a dagger category in which the functor $\dagger$ is given by the adjoint of a linear map.
    The notions of unitary and self-adjoint morphisms agree with the familiar notions.
    Since every finite-dimensional vector space admits a Hilbert space structure and any two such are isometrically isomorphic, we have that $\pi_0^U (\Hilb) \cong \pi_0 (\Hilb) \cong \mathbb{N}$ given by the dimension.
    \footnote{Infinite-dimensional Hilbert spaces also form a dagger category. Because we focus on topological field theory, we will not discuss how the theory in this section can be adapted to infinite-dimensional settings, but see \cite[Remark 1.3.13, Remark 1.3.14]{mythesis}.}
\end{example}

\begin{remark}
Motivated by the last example, $\dagger$-categories were additionally introduced in the literature as $\C$-anti-linear involutions on $\C$-linear categories under the name `$*$-categories' \cite{ghez1985}. 
    This excludes bordism dagger categories, which are relevant examples in this work.
\end{remark}

A $\dagger$-functor between $\dagger$-categories is a functor $F$ which strictly commutes with the functor $\dagger$, i.e. $F(f^\dagger) = F(f)^\dagger$ for all morphisms.
A natural transformation is called \emph{isometric} if it evaluates to an isometry on every object.
This makes $\dagger$-categories into a $2$-category $\dagger\Cat$ in which equivalence is called \emph{$\dagger$-equivalence}.

There is also an analogous `weak' version of a dagger category, in which the equation $\dagger \circ \dagger = \id$ is replaced by a natural transformation.
We will refer to such a `weak' dagger category as an anti-involutive category.
This notion will be convenient because it behaves well under equivalences of categories.
More precisely, an anti-involution on a category is a fixed point for the $\Z/2$-action $\mathcal{C} \mapsto \mathcal{C}^{\op}$ on the $2$-category of categories.
Explicitly writing out this definition using the notion of fixed points in bicategories \cite[Appendix A]{mullerstehouwer} \cite{hesse2016frobenius}, results in the following.

\begin{definition}
\label{def:antiinvcat}
An \emph{anti-involution} on a category $\mathcal{C}$ consists of a functor $d\colon \mathcal{C} \to \mathcal{C}^{\op}$ together with a natural isomorphism $\eta\colon \id_{\mathcal{C}} \Rightarrow d^2$ such that $d\eta_c$ is the inverse of $\eta_{dc}$.
We call a category with anti-involution an \emph{anti-involutive category.}
\end{definition}

\begin{example}
\label{ex:vectorspaceantiinv}
    Let $\mathcal{C}$ be a symmetric monoidal category with duals.
    Then a choice of dual functor $x \mapsto x^*$ can be made into an anti-involution \cite[Lemma A.1.5]{mythesis}.
    The natural isomorphism $\eta$ witnesses the fact that $x$ and $x^{**}$ are both duals of $x^*$.
\end{example}

\begin{example}
    If $V$ is a complex vector space, we denote the \emph{complex conjugate vector space} by $\overline{V} = \{\bar{v}: v \in V\}$.
This vector space is equal to $(V,+)$ as an abelian group, but with complex conjugated scalar multiplication $z \bar{v} := \overline{\bar{z} v}$.
The operation $V \mapsto \overline{V}$ defines a functor on the category $\mathcal{C} = \Vect$ of finite-dimensional vector spaces if we set $\overline{T}(\overline{v}) := \overline{T(v)}$ for a linear map $T\colon V \to W$.
    The functor $V \mapsto \ol{V}^*$ is an anti-involution. The natural isomorphism $\eta_V: V \to \ol{\ol{V}^*}^*$ uses the isomorphism $V \to V^{**}$ given by evaluating functionals, the canonical isomorphism $\overline{\overline{V}} \cong V$ and the isomorphism $T\colon \ol{V^*} \cong \ol{V}^*$ given by
    \[
    T(\ol{f})(\ol{v}) := \ol{f(v)}.
    \]
\end{example}

\begin{remark}
    Calling $(\mathcal{C},d,\eta)$ an anti-involutive category is not standard terminology in the literature.
    In the surgery theory literature, this structure is sometimes called a category with duality.
    However, we find this terminology confusing: in many relevant examples, $dc$ will not be the dual of $c$ in any standard sense.
    Our terminology is motivated by what is sometimes called an anti-involution on an algebra $A$: a linear map $d\colon A \to A$ such that $d^2 = \id_A$ and $d(ab) = db \cdot da$.
\end{remark}

While dagger functors strictly preserve $\dagger$, anti-involutive functors are equipped with a datum specifying how the anti-involution is preserved:

\begin{definition}
\label{def:anti-invfunctor}
An \emph{anti-involutive functor} between anti-involutive categories $(\mathcal{C}_1, d_1, \eta_1),(\mathcal{C}_2, d_2, \eta_2)$ consists of a functor $F\colon \mathcal{C}_1 \to \mathcal{C}_2$ and a natural isomorphism $\phi\colon F \circ d_1 \Rightarrow d_2 \circ F$ such that the following diagram commutes:
    \begin{equation}
    \label{di:invfunctordata}
        \begin{tikzcd}[column sep = 60]
            F(x) \ar[d, "(\eta_2)_{F(x)}"] \ar[r, "F((\eta_1)_x)"] & (F \circ d_1 \circ d_1)(x) \ar[d, "\phi_{d_1(x)}"]\\
            (d_{2} \circ d_2 \circ F)(x) \ar[r, "d_2 (\phi_x)"] & (d_2 \circ F \circ d_1)(x) 
        \end{tikzcd}.
    \end{equation}
    An \emph{anti-involutive equivalence} is an anti-involutive functor which is also an equivalence of categories.
    \end{definition}

By \cite[Lemma 2.5]{jandagger}, an anti-involutive functor is an an anti-involutive equivalence if and only if it admits an anti-involutive inverse up to anti-involutive natural isomorphism.

\begin{example} 
\label{ex:oppositeantiinv}
    Let $(\mathcal{C}, d, \eta)$ be an anti-involutive category.
    Then $\mathcal{C}^{\op}$ is an anti-involutive category with inverted $\eta$.
    This makes $d$ into an anti-involutive functor $\mathcal{C} \to \mathcal{C}^{\op}$.
\end{example}

\begin{definition}
    An \emph{anti-involutive natural transformation} $u$ between anti-involutive functors $(F,\phi)$ and $(G,\psi)$ is a natural transformation such that the diagram
    \[
    \begin{tikzcd}
        (F \circ d_1)(x) \ar[d, "\phi"] \ar[r,"u_{d_1 x}"] & (G \circ d_1)(x) \ar[d, "\psi"]
        \\
        (d_2 \circ F)(x)  & (d_2 \circ G)(x) \ar[l,"d_2 u_{ x}"]
    \end{tikzcd}
    \]
    commutes for all objects $x$.
\end{definition}

Let $\aICat$ denote the $2$-category of anti-involutive categories.
There is a canonical $2$-functor 
\[
\dagger \Cat \to \aICat,
\]
which assigns the trivial coherence data $\eta_x = \id_x$.
This functor is not an equivalence, but it turns out it does admit a $2$-right adjoint 
\[
\Herm\colon \aICat \to \dagger \Cat,
\]
see Theorem \ref{th:herm2functor}.
If $(\mathcal{C},d,\eta)$ is anti-involutive, $\Herm (\mathcal{C},d,\eta)$ has a concrete description using what we call Hermitian pairings, the generalization of Hermitian pairings on vector spaces to arbitrary anti-involutive categories:

\begin{example}
\label{ex:vectherm}
    Let $\mathcal{C} = \Vect$ be the category of finite-dimensional vector spaces equipped with the anti-involution $\overline{(.)}^*$ of Example \ref{ex:vectorspaceantiinv}.
    Then an isomorphism $h\colon V \to \overline{V}^*$ is equivalent to a nondegenerate sesquilinear pairing $\langle \cdot ,\cdot \rangle$ on $V$.
    It is straightforward to verify that
    \[
\begin{tikzcd}
V \ar[r,"h"] \ar[d] & \ol{V}^*
\\
\ol{\ol{V}^*}^* \ar[ru,"\ol{h}^*"']& 
\end{tikzcd}
\]
commutes if and only if the equation
    \[
    \langle v,w \rangle = \overline{\langle w,v \rangle}
    \]
    holds for all $v,w \in V$.
\end{example}

This example motivates the following definition of a Hermitian pairing in an anti-involutive category, which we learned from \cite[Definition B.14.]{freedhopkins}.

\begin{definition}
\label{def:hermstr}
Let $(\mathcal{C},d, \eta)$ be an anti-involutive category.
A \emph{Hermitian pairing} in $(\mathcal{C},d, \eta)$ is defined to be an isomorphism $h\colon c \to dc$ such that
\begin{equation}
\begin{tikzcd}
c \ar[r,"h"] \ar[d,"\eta_c"] & dc
\\
d^2 c \ar[ru,"dh"']& 
\end{tikzcd}
\end{equation}
commutes.
\end{definition}

We assemble objects equipped with a Hermitian pairing into a category:

\begin{definition}
\label{def:Hermitian completion}
    The \emph{Hermitian completion} $\Herm \mathcal{C}$ of an anti-involutive category $\mathcal{C}$ is the category in which objects consists of Hermitian pairings $h\colon c \to dc$ and morphisms $(c_1, h_1) \to (c_2, h_2)$ are simply morphisms $c_1 \to c_2$ in $\mathcal{C}$.
\end{definition}

The Hermitian completion becomes a dagger category if we define the dagger of a morphism $f\colon (c_1, h_1) \to (c_2, h_2)$ as the composition
\begin{equation}
\label{eq:adjoint}
f^\dagger\colon c_2 \xrightarrow{h_2} dc_2 \xrightarrow{d f} dc_1 \xrightarrow{h_1^{-1}} c_1,
\end{equation}
see \cite[Lemma 3.4]{jandagger}.

\begin{example}
If $\mathcal{C} = \Vect$ and $d = \ol{(.)}^*$, then it follows by Example \ref{ex:vectherm} that the Hermitian completion is the category of finite-dimensional Hermitian vector spaces $\Herm_\C$.
It is straightforward to verify that also the dagger structure corresponds to the usual adjoint of linear maps.
In particular, the notions of isometric, unitary and self-adjoint morphisms in this dagger category give the usual notions of isometric, unitary and self-adjoint linear maps.
\end{example}

\begin{remark}
The fact that we required a morphism $(c_1, h_1) \to (c_2, h_2)$ in $\Herm(\mathcal{C},d,\eta)$ to simply be a morphism $c_1 \to c_2$ in $\mathcal{C}$ can be counter-intuitive.
This trivially implies that the canonical functor $\Herm (\mathcal{C},d,\eta) \to \mathcal{C}$ is fully faithful.
So when every object in $\mathcal{C}$ admits a Hermitian pairing, it is a confusing fact that it is an equivalence of categories.\footnote{This is even an equivalence of anti-involutive categories \cite[Lemma 4.6.]{jandagger}.}
For example, note that the functor $\Herm_\C \to \Vect$ is an equivalence of categories.
The reader might have expected us to require the obvious compatibility condition with the $h_i$ given by the diagram
\[
\begin{tikzcd}
    c_1 \ar[d,"h_1"] \ar[r,"f"] & c_2 \ar[d,"h_2"]
    \\
    dc_1 & dc_2 \ar[l,"df"]
\end{tikzcd}.
\]
This diagram is saying that $f$ is an isometry in the dagger category $\Herm (\mathcal{C},d,\eta)$.
For example, here $f$ is invertible if and only if $f$ is unitary.
The main point is thus that as a dagger category $\Herm (\mathcal{C},d,\eta)$ remembers much more than just this category $\mathcal{C}$.
It in particular remembers the groupoid of homotopy fixed points $(\mathcal{C}^{\cong})^{\Z/2}$ for the $\Z/2$-action of $d$ on the core as the groupoid of unitary isomorphisms of $\Herm$.
\end{remark}

From now on we will often implicitly assume that every object of $(\mathcal{C},d,\eta)$ admits a Hermitian pairing. 

\begin{example}
Consider a dagger category $\mathcal{C}$ as an anti-involutive category with $d=\dagger$ and $\eta_c = \id_c$.
The Hermitian completion of $\mathcal{C}$ has objects consisting of pairs of objects $c$ of $\mathcal{C}$ together with a self-adjoint automorphism $h\colon c \to dc = c^\dagger = c$.
The new dagger $\ddagger$ on $\Herm \mathcal{C}$ on a morphisms $f\colon (c_1, h_1) \to (c_2, h_2)$ is defined as $f^\ddagger = h_1^{-1} \circ f^\dagger \circ h_2$.
For example, starting with the dagger category of finite-dimensional Hilbert spaces, new objects are triples $(V,(\cdot ,\cdot ),A)$ consisting of a Hilbert space $(V,(.,.))$ and a self-adjoint invertible linear operator $A$ on $V$.
We can identify such triples with the not-necessarily-positive Hermitian pairing $(\cdot ,A \cdot )$ to realize a $\dagger$-equivalence between $\Herm(\Hilb)$ and $\Herm_\C$.
\end{example}

It turns out that the Hermitian completion construction has very nice categorical properties.

\begin{theorem}\cite[Lemma 3.13,Theorem 4.9]{jandagger}
\label{th:herm2functor}
The Hermitian completion extends to a $2$-functor
\[
\Herm\colon \aICat \to \dagger \Cat,
\]
which is strictly right adjoint to the canonical functor $\dagger \Cat \to \aICat$ in the sense that the triangle identities hold on the nose.
\end{theorem}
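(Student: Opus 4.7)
The plan is to construct $\Herm$ on $1$-morphisms and $2$-morphisms, then exhibit the unit and counit of the adjunction and verify the triangle identities strictly.

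First, I would define $\Herm$ on an anti-involutive functor $(F,\phi)\colon(\mathcal{C}_1,d_1,\eta_1)\to(\mathcal{C}_2,d_2,\eta_2)$ by sending an object $(c,h\colon c\to d_1 c)$ of $\Herm\mathcal{C}_1$ to the pair $(F(c),\phi_c\circ F(h))$, and by acting as $F$ on morphisms. To see this lands in $\Herm\mathcal{C}_2$ one has to check that $\phi_c\circ F(h)$ is a Hermitian pairing in the sense of Definition~\ref{def:hermstr}; this is precisely where the coherence square~(\ref{di:invfunctordata}) of an anti-involutive functor is used, combined with the fact that $h=d_1 h\circ (\eta_1)_c$. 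Next, to see $\Herm(F,\phi)$ is a $\dagger$-functor, one unwinds the formula~(\ref{eq:adjoint}) for the adjoint on both sides: the identity $(F f)^{\dagger_2}=F(f^{\dagger_1})$ reduces, after cancelling the $Fh_i$'s, to $d_2(F f)\circ\phi_{c_2}=\phi_{c_1}\circ F(d_1 f)$, which is just the naturality of $\phi$. On an anti-involutive natural transformation $u$ the definition of $\Herm u$ is to take the underlying natural transformation; the compatibility with the new $\dagger$'s follows immediately from the defining square of anti-involutive natural transformations. Strict functoriality of $\Herm$ in all directions is then visible at the level of underlying data.

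For the adjunction with $J\colon\dagger\Cat\to\aICat$ (which equips a dagger category with trivial coherence $\eta=\id$), I would define the unit $\nu\colon\id_{\dagger\Cat}\Rightarrow\Herm\circ J$ on a dagger category $\mathcal{D}$ by the $\dagger$-functor $d\mapsto(d,\id_d)$, which is a Hermitian pairing because $\eta=\id$ in $J\mathcal{D}$, and which is a $\dagger$-functor since on morphisms the new $\dagger$ on the image reduces via~(\ref{eq:adjoint}) to the original one. The counit $\epsilon\colon J\Herm\Rightarrow\id_{\aICat}$ on $(\mathcal{C},d,\eta)$ is the forgetful functor $(c,h)\mapsto c$, and the crucial point is that its anti-involutive coherence datum is provided by the Hermitian pairings themselves: set $\phi_{(c,h)}:=h\colon c\to dc$. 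The coherence diagram~(\ref{di:invfunctordata}) for $\epsilon$ evaluated at $(c,h)$ is exactly the Hermitian condition $dh\circ\eta_c=h$, so there is nothing further to check. Naturality of $\nu$ and $\epsilon$ in $\dagger$- and anti-involutive functors respectively is routine.

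Finally, I would verify both triangle identities on the nose rather than up to $2$-cell. For $\mathcal{D}\in\dagger\Cat$, the composite $J\mathcal{D}\xrightarrow{J\nu_\mathcal{D}} J\Herm J\mathcal{D}\xrightarrow{\epsilon_{J\mathcal{D}}} J\mathcal{D}$ sends $d\mapsto(d,\id_d)\mapsto d$ and on coherence composes the trivial $\phi$ of $J\nu_\mathcal{D}$ with $\phi_{(d,\id)}=\id_d$, giving the identity strictly. For $(\mathcal{C},d,\eta)\in\aICat$, the composite $\Herm\mathcal{C}\xrightarrow{\nu_{\Herm\mathcal{C}}}\Herm J\Herm\mathcal{C}\xrightarrow{\Herm\epsilon_\mathcal{C}}\Herm\mathcal{C}$ sends $(c,h)\mapsto((c,h),\id_{(c,h)})\mapsto(c,\phi_{(c,h)}\circ F(\id))=(c,h)$ by the definition of $\Herm$ on functors, again on the nose.

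I expect the only real obstacle to be the purely notational bookkeeping, especially keeping straight the two different roles $h$ plays simultaneously — as an object of $\Herm\mathcal{C}$ and as the coherence datum of the counit — and checking that the formula~(\ref{eq:adjoint}) for the new dagger interacts correctly with $\phi$ when verifying that $\Herm(F,\phi)$ preserves $\dagger$. Once those calculations are written out carefully, every condition reduces either to the Hermitian pairing equation of Definition~\ref{def:hermstr} or to the naturality/coherence equations of anti-involutive functors and transformations, and the triangle identities hold strictly rather than up to isomorphism because $J$ attaches only trivial coherence data.
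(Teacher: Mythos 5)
Your construction is correct and coincides with the one the paper relies on (from \cite[Lemma 3.13, Theorem 4.9]{jandagger}, reviewed in the proof of Theorem \ref{th:monoidaljanth} in Appendix \ref{app:symmondagger}): $\Herm$ acts on an anti-involutive functor $(F,\phi)$ by $(c,h)\mapsto (F(c),\phi_c\circ F(h))$, the unit is the inclusion $x\mapsto (x,\id_x)$, and the counit is the forgetful functor whose anti-involutive coherence is given by the Hermitian pairings themselves, with the triangle identities holding strictly exactly as you verify. So this is essentially the same proof as the paper's.
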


It can aid the intuition to reformulate the above theorem as a universal property: the Hermitian completion is the `cofree dagger category generated by a category with anti-involution':

\begin{corollary}
\label{cor:cofree}
    Let $(\mathcal{C},d,\eta)$ be an anti-involutive category and $\mathcal{D}$ a dagger category.
    Then there is an isomorphism of categories
    \[
    \Fun_{\aICat}(\mathcal{D}, \mathcal{C}) \cong \Fun_\dagger (\mathcal{D}, \Herm \mathcal{C}).
    \]
\end{corollary}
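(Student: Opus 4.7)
The approach is to unpack the strict $2$-adjunction of Theorem \ref{th:herm2functor}: since the canonical inclusion $\iota\colon \dagger \Cat \to \aICat$ is left $2$-adjoint to $\Herm$, its hom-category isomorphism $\Fun_\dagger(\mathcal{D}, \Herm \mathcal{C}) \cong \Fun_{\aICat}(\iota \mathcal{D}, \mathcal{C})$ is, after identifying $\iota \mathcal{D}$ with $\mathcal{D}$ (with trivial coherence $\eta = \id$), exactly the claimed isomorphism. So in principle nothing remains to prove; the point of the corollary is simply to re-read the adjunction as a universal property.

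To make this concrete, I would construct mutually inverse functors. The functor $\Phi\colon \Fun_\dagger(\mathcal{D}, \Herm \mathcal{C}) \to \Fun_{\aICat}(\mathcal{D}, \mathcal{C})$ is post-composition with a ``forgetful'' anti-involutive functor $U\colon \Herm \mathcal{C} \to \mathcal{C}$, $(c,h) \mapsto c$, equipped with the coherence natural isomorphism $U \circ \dagger_{\Herm \mathcal{C}} \Rightarrow d \circ U$ whose component at $(c,h)$ is $h$ itself; the $\eta$-compatibility datum needed for $U$ to be anti-involutive is exactly the Hermitian pairing condition on $h$. In the other direction, $\Psi$ sends an anti-involutive functor $(G,\phi)$ to the dagger functor $\tilde{G}$ with $\tilde{G}(x) := (G(x), \phi_x)$ and $\tilde{G}(f) := G(f)$. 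Two checks are needed: (i) each $\phi_x$ is a Hermitian pairing, which follows because diagram \eqref{di:invfunctordata} of Definition \ref{def:anti-invfunctor}, specialized to the trivial $\eta_1 = \id$ of $\iota \mathcal{D}$, collapses to exactly the equation $d\phi_x \circ \eta_{G(x)} = \phi_x$; and (ii) $\tilde{G}$ preserves the dagger, which follows by reading naturality of $\phi$ on a morphism $f\colon x \to y$ as $\phi_x \circ G(f^\dagger) = d(G(f)) \circ \phi_y$ and applying \eqref{eq:adjoint} to rewrite the right-hand side as $\tilde{G}(f)^\ddagger$.

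On $2$-cells, the compatibility defining an anti-involutive natural transformation reduces in the present setting ($d_1 = \dagger$, $d_1 x = x$) to $d(u_x) \circ \psi_x \circ u_x = \phi_x$; using \eqref{eq:adjoint} this is equivalent to $\tilde{u}_x^\ddagger \tilde{u}_x = \id$, i.e., $\tilde{u}_x$ being isometric as a morphism in $\Herm \mathcal{C}$, which matches the appropriate notion of $2$-cell in $\dagger \Cat$. The mutual inverseness of $\Phi$ and $\Psi$ is then immediate from the explicit formulas. The main obstacle is purely bookkeeping: one must carefully track the contravariant-functor conventions for $d$ and $\dagger$ through the naturality squares, the $\eta$-coherence, and the adjoint formula \eqref{eq:adjoint} on $\Herm \mathcal{C}$, with no new categorical input required beyond the universal property already secured by Theorem \ref{th:herm2functor}.
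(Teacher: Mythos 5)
Your proposal is correct and takes essentially the same route as the paper: the corollary is simply the hom-category isomorphism of the strict $2$-adjunction of Theorem \ref{th:herm2functor}, which is all the paper invokes. Your explicit unpacking of the unit/counit data (the pairing condition as the $\eta$-coherence of the forgetful functor, the dagger formula \eqref{eq:adjoint} as naturality of the coherence, and isometries matching anti-involutive $2$-cells) is accurate and agrees with the constructions the paper cites from \cite{jandagger}.
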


We introduce the following operation of transferring Hermitian pairings along isomorphisms.

\begin{definition}\cite[Definition 5.1]{jandagger}
\label{def:transfer}
    Let $h\colon c \to dc$ be a Hermitian pairing and $g\colon c' \to c$ any isomorphism.
    Then $dg \circ h \circ g$ is a Hermitian pairing on $c'$, which we will call the \emph{transfer of $h$ along $g$}.
\end{definition}

Transferring a Hilbert space pairing on a vector space along an invertible linear map $f\colon V_1 \to V_2$, amounts to modifying the pairing with the positive operator $f^\dagger f$.

Hermitian completions are in some sense `maximal' dagger categories, but we are often more interested in `smaller' dagger categories which contain very few Hermitian pairings.
Especially small are `minimal' dagger categories like $\Hilb$, in which unitary isomorphism classes agree with usual isomorphism classes.
We introduce the following terminology.

\begin{definition}
    A dagger category $\mathcal{D}$ is called \emph{minimal} if the map $\pi^U_0(\mathcal{D}) \to \pi_0(\mathcal{D})$ is bijective.
    A dagger category is called \emph{maximal} if is unitarily equivalent to a Hermitian completion.
\end{definition}

Given an anti-involutive category, we can restrict the collection of `allowed' Hermitian pairings on the Hermitian completion, similarly to how we can restrict the pairings on Hermitian vector spaces to the positive definite ones:

\begin{definition}
\label{def:positivitystructure}
    A \emph{positivity structure} on a category $\mathcal{C}$ with anti-involution $(d,\eta)$ is a collection $P$ of objects of $\Herm \mathcal{C}$ that surjects onto the objects of $\mathcal{C}$ under the forgetful map.
    We will call elements $h \in P$ \emph{positive (Hermitian) pairings}.
Given a positivity structure $P$ we denote by $\mathcal{C}_P \subseteq \Herm \mathcal{C}$ the full dagger subcategory on those objects $(c,h) \in \Herm \mathcal{C}$ such that $h \in P$.
    A positivity structure is \emph{closed} if it is closed under transfer in the sense of Example \ref{def:transfer}: for every $(h\colon c \to dc) \in P$ and every isomorphism $g\colon c' \to c$ also $dg \circ h \circ g \in P$. 
    Two positivity structures are \emph{equivalent} if they have the same closure.
\end{definition}

\begin{example}
\label{ex:nonclosedhilb}
    Consider the Hermitian completion of the category (equivalent to $\Vect$) in which objects are $\C^n$ and morphisms are given by matrices.
    Consider the positivity structure given by only allowing the standard inner product $\langle \cdot , \cdot \rangle_{st}$, giving us the dagger category $\mathcal{D} \subseteq \Hilb$ of Hilbert spaces of the form $(\C^n, \langle \cdot , \cdot \rangle_{st})$.
    The inclusion is an equivalence of dagger categories.
    However, this positivity structure is not closed.
    Its closure adds all inner products of the form $\langle . ,A^\dagger A . \rangle_{st}$ for $A$ an invertible matrix, which are in fact all positive definite inner products on $\C^n$.
\end{example}

\begin{remark}
    In Definition \ref{def:positivitystructure} of a positivity structure, we could have required the collection $P$ to only essentially surject onto the objects of $\mathcal{C}$.
     However, it is convenient in practice when every object of $\mathcal{C}$ admits at least one positive pairing.
     Moreover, this distinction is irrelevant for equivalence classes of positivity structures.
\end{remark}

\begin{example}
\label{ex:span}
Let $\mathcal{C}$ be a symmetric monoidal category with duals and let $d = (.)^*\colon \mathcal{C} \to \mathcal{C}^{\op}$ be the induced anti-involution.
    Then a Hermitian pairing on an object $c$ is the same as a self-duality $\ev_c\colon c \otimes c \to 1$, which is symmetric in the sense that it stays invariant under the braiding $\sigma_{c,c}$.
As a subexample, let $\mathcal{C}'$ be a category with pullbacks and a terminal object $\pt$.
Let $\mathcal{C} := \mathbf{Span} \, \mathcal{C}'$ be the category of spans.
    This category has objects $\obj \mathcal{C}$ and morphisms from $x$ to $y$ are equivalence classes of spans.
    Here a span is defined to be a pair of morphisms $(f,g)$ of the shape
    \[
    x \xleftarrow{f} z  \xrightarrow{g} y
    \]
    and an equivalence with a span $(z',f',g')$ is an isomorphism $z \cong z'$ making the obvious diagram commute.
    Composition of spans is given by pullback. 
    This category is symmetric monoidal under the cartesian product with monoidal unit $\pt$.
    Every object $x$ is canonically symmetrically self-dual.
    Indeed for the evaluation we can take $z = x$ with $f$ the diagonal map and $g$ the unique map to $\pt$, and similar for the coevaluation.
    The dagger category obtained by taking these self-dualities as the positivity structure, is the dagger category of spans with
    \[
    (x \xleftarrow{f} z  \xrightarrow{g} y)^\dagger = y \xleftarrow{g} z \xrightarrow{f} x.
    \]
%    Taking the opposite category of a category with pushouts, we obtain the dagger category of cospans.
\end{example}

Define the Hermitian isomorphism classes of an anti-involutive category $\mathcal{C}$ as $\pi_0^h(\mathcal{C}) := \pi_0^U( \Herm \mathcal{C})$.
By taking positivity structures, we can construct all possible dagger categories with a fixed underlying anti-involution up to unitary equivalence:

\begin{lemma}\cite[Corollary 5.7, Theorem 5.14]{jandagger}
\label{lem:compareposstrs}
Let $(\mathcal{C},d,\eta)$ be an anti-involutive category.
    There is a bijection between
    \begin{enumerate}
        \item equivalence classes $[P]$ of positivity structures;
        \item dagger categories $\mathcal{C}_P$ equipped with an anti-involutive equivalence $(\mathcal{C}_P,\dagger) \cong (\mathcal{C},d)$ modulo the following equivalence relation. 
        We say $\mathcal{C}_P$ and $\mathcal{C}_{P'}$ are equivalent when there exists a $\dagger$-equivalence $\mathcal{C}_{P} \cong \mathcal{C}_{P'}$ such that the triangle of anti-involutive functors
        \[
        \begin{tikzcd}
            \mathcal{C}_P \ar[rr] \ar[rd]& & \mathcal{C}_{P'} \ar[ld]
            \\
            & \mathcal{C} &
        \end{tikzcd}
        \]
        can be filled by an anti-involutive natural isomorphism;
        \item subsets $\pi_0^U(\mathcal{C}_P) \subseteq \pi_0^h(\mathcal{C})$ such that the composition
        \[
        \pi_0^U(\mathcal{C}_P) \subseteq \pi_0^h(\mathcal{C}) \to \pi_0(\mathcal{C})
        \]
        is surjective.
    \end{enumerate}
\end{lemma}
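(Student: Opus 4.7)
The plan is to establish the three-way bijection by constructing two compatible bijections $(1) \leftrightarrow (3)$ and $(1) \leftrightarrow (2)$, relying centrally on Corollary \ref{cor:cofree} to translate between dagger structures on $\mathcal{C}$ and collections of Hermitian pairings. The organizing principle is that closing a positivity structure $P$ under transfer in the sense of Definition \ref{def:transfer} coincides with closing its image in $\ob \Herm \mathcal{C}$ under unitary isomorphism, so that closed positivity structures are the same data as subsets of $\pi_0^h(\mathcal{C}) = \pi_0^U(\Herm \mathcal{C})$.

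For $(1) \leftrightarrow (3)$, I would first observe that for any isomorphism $g\colon c' \to c$ in $\mathcal{C}$, viewing $g$ as a morphism in $\Herm \mathcal{C}$ from $(c', dg \circ h \circ g)$ to $(c, h)$ makes it unitary: its dagger, computed via the formula \eqref{eq:adjoint}, is exactly $g^{-1}$. Conversely, any unitary isomorphism between Hermitian objects arises this way, so two pairings lie in the same unitary isomorphism class of $\Herm \mathcal{C}$ precisely when they are related by a chain of transfers. The closure of $P$ under transfer therefore equals the preimage in $\ob \Herm \mathcal{C}$ of the image of $P$ in $\pi_0^h(\mathcal{C})$, and the surjectivity conditions in (1) and (3) match exactly, yielding the bijection $[P] \mapsto \mathrm{im}(P) \subseteq \pi_0^h(\mathcal{C})$.

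For $(1) \leftrightarrow (2)$, a positivity structure $P$ produces the full dagger subcategory $\mathcal{C}_P \subseteq \Herm \mathcal{C}$. The composition $\mathcal{C}_P \hookrightarrow \Herm \mathcal{C} \to \mathcal{C}$ is fully faithful (since $\Herm \mathcal{C} \to \mathcal{C}$ is) and essentially surjective by the surjectivity clause in the definition of a positivity structure, hence an anti-involutive equivalence. Conversely, given a dagger category $\mathcal{D}$ with an anti-involutive equivalence $F\colon \mathcal{D} \to \mathcal{C}$, Corollary \ref{cor:cofree} produces a unique dagger lift $\tilde F\colon \mathcal{D} \to \Herm \mathcal{C}$, and $P := \{\tilde F(x) : x \in \ob \mathcal{D}\}$ is the associated positivity structure. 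The two constructions are mutually inverse up to the declared equivalence relations because the cofree adjunction has strict triangle identities, so the lifts arising from $\mathcal{C}_P$ and from an abstract $(\mathcal{D}, F)$ match up to a unique anti-involutive isomorphism.

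The main obstacle will be verifying that the equivalence relation in (2) — existence of a dagger equivalence compatible with the maps down to $\mathcal{C}$ via an anti-involutive natural isomorphism — matches equality of closures in (1). One direction is straightforward: an anti-involutive natural isomorphism between two lifts $\tilde F_1, \tilde F_2\colon \mathcal{D} \to \Herm \mathcal{C}$ yields, objectwise, a unitary isomorphism between $\tilde F_1(x)$ and $\tilde F_2(x)$, so the associated positivity structures have the same image in $\pi_0^h(\mathcal{C})$. The reverse direction requires building such a natural transformation coherently from pointwise unitary isomorphisms; this is where one leverages that $\Herm \mathcal{C} \to \mathcal{C}$ is fully faithful to promote the pointwise choices into a genuine natural transformation, and the cofree universal property to ensure it is anti-involutive.
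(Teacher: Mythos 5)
This lemma is not proved in the paper at all: it is imported verbatim from \cite[Corollary 5.7, Theorem 5.14]{jandagger}, so there is no in-text argument to compare against. Your reconstruction is sound and follows exactly the route the paper's setup suggests: the computation with \eqref{eq:adjoint} showing that $g$ is unitary from the transferred pairing to $h$ (and conversely that unitarity forces the transfer equation) is the correct mechanism identifying equivalence classes of positivity structures with subsets of $\pi_0^h(\mathcal{C})=\pi_0^U(\Herm\mathcal{C})$, and the passage $(1)\leftrightarrow(2)$ via the strict adjunction of Corollary \ref{cor:cofree} (lifting an anti-involutive equivalence $F\colon\mathcal{D}\to\mathcal{C}$ to a dagger functor $\tilde F$ with pairings $\phi_x$) is the intended use of the cofreeness.

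Two points you gloss over deserve explicit treatment. First, the set $P=\{\tilde F(x)\}$ extracted from an abstract $(\mathcal{D},F)$ only \emph{essentially} surjects onto $\ob\mathcal{C}$, whereas Definition \ref{def:positivitystructure} demands surjectivity; this is harmless (pass to the closure, whose image hits every object because transfers along arbitrary isomorphisms are available), but it should be said. Second, in matching the equivalence relations, the anti-involutive natural isomorphism filling the triangle only gives you, objectwise, that each pairing of $\mathcal{C}_P$ is a transfer of one of $\mathcal{C}_{P'}$, i.e.\ one inclusion of closures; to get equality you must also use that the $\dagger$-equivalence $G$ is an equivalence in $\dagger\Cat$, hence \emph{unitarily} essentially surjective (or whisker the filling with the unitary natural isomorphism $GH\cong\id$ for a dagger quasi-inverse $H$). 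Mere essential surjectivity of the underlying functor would not suffice — compare $\Hilb\hookrightarrow\Herm_\C$, which is fully faithful and essentially surjective but not a $\dagger$-equivalence. Likewise, in the converse direction you should say how $G$ itself is built (choose, for each $h\in P$, an element of $P'$ in the same transfer class together with the witnessing isomorphism, and transport morphisms using full faithfulness of $\Herm\mathcal{C}\to\mathcal{C}$); the anti-involutivity of the resulting filling is then a direct rewriting of the transfer equation rather than a consequence of the cofree universal property. With these points made explicit, your argument is complete.
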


With the above lemma in mind, we will from now on assume all positivity structures are closed.

There are typically many choices for $P$ that make $\mathcal{C}_P$ minimal: 

\begin{example}
\label{ex:hermstrsonsvec}
    Let $\mathcal{C} = \Vect$ come equipped with the anti-involution $V \mapsto \ol{V}^*$.
    For every $d \in \Z_{\geq 0}$, pick a pair of integers $(p,q)$ such that $p+q =d$.
    We could then call the $d$-dimensional Hermitian vector space $\C^d$ with signature $(p,q)$ positive.
    This will result in a minimal dagger category.
Note that at this stage, there is no condition forcing signatures of vector spaces of different dimensions to be compatible.
Also note that some of these dagger categories are $\dagger$-equivalent, while some are not.
For example, the dagger category of finite-dimensional Hilbert spaces is $\dagger$-equivalent to the dagger category of finite-dimensional negative definite Hermitian vector spaces $\Hilb_{neg}$.
Note that this does not contradict Lemma \ref{lem:compareposstrs}.
Indeed, the two anti-involutive functors $(F,\phi),(F',\phi')\colon \Hilb \to \Vect$ in the diagram
\[
        \begin{tikzcd}
            \Hilb \ar[rr] \ar[rd]& & \Hilb_{neg} \ar[ld]
            \\
            & \Vect &
        \end{tikzcd}
        \]
satisfy $F = F'$, but $\phi = - \phi'$.
However, all natural automorphisms of $F$ are given by $u_\lambda (V,\langle.,. \rangle) := \lambda \id_V$ for some $\lambda \in \C^\times$.
The condition that $u_{\lambda}\colon (F,\phi) \Rightarrow (F',\phi')$ is anti-involutive is equivalent to $\lambda \ol{\lambda} = -1$, which is impossible.
%because the $\dagger$-equivalence between negative definite and positive definite Hermitian vector spaces covers the identity functor equipped with the anti-involutive structure given by the natural automorphism $-\id_V\colon V \to V$.
\end{example}

\begin{definition}
\label{def:preservepositivity}
Let $(F,\phi)\colon (\mathcal{C}_1,d, P_1) \to (\mathcal{C}_2, d, P_2)$ be an anti-involutive functor between anti-involutive categories equipped with positivity structures.\footnote{From now on, we will often denote anti-involutions on different categories by the same letter.}
Then $F$ is said to \emph{preserve the positivity structures} if for all $(h\colon c \to dc) \in P_1$ the composition
\begin{equation}
\label{eq:F(P)}
F(c) \xrightarrow{F(h)} F(dc) \xrightarrow{\phi_c} dF(c)
\end{equation}
is in $P_2$.
\end{definition}

In general, anti-involutive functors always send a Hermitian pairing to the Hermitian pairing \eqref{eq:F(P)}. 
So $(F,\phi)\colon (\mathcal{C}_1,d) \to (\mathcal{C}_2, d)$ will map a subset $P$ of $\pi_0^h(\mathcal{C}_1)$ to some subset of $\pi_0^h(\mathcal{C}_2)$, which we will sloppily denote $F(P)$.
In particular, $F$ preserves positivity structures if $P_1$ will be mapped to a subset of $P_2$ under this procedure. 

We will now introduce terminology for endomorphisms in dagger categories that behave similarly to positive definite matrices.

\begin{definition}
\label{def:aut-positive}
    An endomorphism $f\colon c \to c$ is called 
    \begin{enumerate}
        \item \emph{positive} if it is of the form $g^\dagger g$ for some morphism $g\colon c \to c'$; 
        \item \emph{end-positive} if it is of the form $g^\dagger g$ for some endomorphism $g\colon c \to c$;
        \item \emph{iso-positive} if it is of the form $g^\dagger g$ for some isomorphism $g\colon c \to c'$;
        \item \emph{aut-positive} if it is of the form $g^\dagger g$ for some automorphism $g\colon c \to c$.
    \end{enumerate}
\end{definition}

In the above definition, only the first terminology is standard \cite{selinger2007positive}.
Note that every positive morphism is self-adjoint.

\begin{remark}
    By replacing $g$ with $g^\dagger$, we see that any of the above notions of positivity could have equivalently been stated with the dagger on the right morphism.
    Also note that being iso- and aut-positive is closed under taking inverses.
    However, being positive is not closed under composition in general. 
\end{remark}

We now provide some explicit equivalent conditions for the dagger category $\mathcal{C}_P$ to be minimal or maximal, motivated by the following example.

\begin{example}
\label{ex:positivesinHilb}
    Let $\mathcal{D} = \Hilb$ be the dagger category of finite-dimensional Hilbert spaces.
    A positive morphism in the dagger category of Hilbert spaces is a positive semidefinite operator, which in particular need not be invertible.
    Every positive morphism is end-positive.
    An automorphism of a Hilbert space $\mathcal{H}$ is iso-positive if and only if it is aut-positive if and only if it is a positive definite operator.
    In the dagger category $\mathcal{D} = \Herm$ of finite-dimensional Hermitian vector spaces, not all aut-positives are positive definite operators. In fact, all self-adjoint automorphisms are iso-positive.
\end{example}

Example \ref{ex:positivesinHilb} generalizes to the following statement.

\begin{lemma}
\label{lem:minimalcondition}
    \begin{enumerate}
        \item A dagger category is maximal if and only if every self-adjoint automorphism is iso-positive;
        \item A dagger category is minimal if and only if every iso-positive is aut-positive.
    \end{enumerate}
\end{lemma}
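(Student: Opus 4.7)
My plan is to treat the two statements separately, starting with part (2), which is more direct. The map $\pi_0^U(\mathcal{D}) \to \pi_0(\mathcal{D})$ is automatically surjective, so minimality amounts to the assertion that any two isomorphic objects are unitarily isomorphic. For the implication ``every iso-positive is aut-positive $\Rightarrow$ minimality'': given an arbitrary isomorphism $f\colon c \to c'$, the morphism $f^\dagger f$ is iso-positive, so by hypothesis equals $g^\dagger g$ for some automorphism $g\colon c \to c$. A direct computation then gives $(f g^{-1})^\dagger (f g^{-1}) = (g^{-1})^\dagger g^\dagger g g^{-1} = \id_c$, so $f g^{-1}\colon c \to c'$ is unitary. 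Conversely, suppose $\mathcal{D}$ is minimal and $h = f^\dagger f$ is iso-positive with $f\colon c \to c'$. Using minimality, pick a unitary $u\colon c \to c'$ and set $g := u^{-1} f\colon c \to c$; this is an automorphism, and since $u^{-1} = u^\dagger$ one has $g^\dagger g = f^\dagger u u^{-1} f = f^\dagger f = h$, exhibiting $h$ as aut-positive.

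For part (1), I would apply Lemma \ref{lem:compareposstrs}. View $\mathcal{D}$ as its own underlying anti-involutive category via the strict involution $d = \dagger$ and $\eta = \id$. Then $\mathcal{D}$ arises as $\mathcal{C}_P$ for $\mathcal{C} = \mathcal{D}$ and the trivial positivity structure $P = \{(c, \id_c) : c \in \mathcal{D}\}$. In this strict setup the Hermitian condition $dh \circ \eta_c = h$ collapses to $h^\dagger = h$, so the Hermitian pairings on an object $c$ are exactly the self-adjoint automorphisms of $c$. The transfer of $\id_c$ along an isomorphism $g\colon c' \to c$ is $g^\dagger \id_c g = g^\dagger g$, so the closure of $P$ under transfer is precisely the collection of iso-positive pairings. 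By Lemma \ref{lem:compareposstrs}, $\mathcal{D}$ is $\dagger$-equivalent to $\Herm \mathcal{C}$ (i.e.\ is maximal) if and only if $P$ shares its closure with the full positivity structure consisting of all Hermitian pairings, which is exactly the condition that every self-adjoint automorphism be iso-positive.

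The main obstacle is the bookkeeping in part (1): one has to match the intrinsic notions in $\mathcal{D}$ (self-adjoint automorphism, iso-positive) against the abstract ones (Hermitian pairing, transfer equivalence) in the correspondence of Lemma \ref{lem:compareposstrs}. Once these identifications are unwound, both parts reduce to short computations of the kind performed in part (2).
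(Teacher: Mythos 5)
Your part (2) is correct and is essentially the paper's own argument: minimality is rephrased as ``every isomorphism admits a parallel unitary,'' and the computation $(fg^{-1})^\dagger(fg^{-1})=\id$ (resp.\ $g:=u^{-1}f$) passes back and forth between the two conditions.

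For part (1), which the paper itself only cites from the reference, your argument has a genuine gap in the direction ``maximal $\Rightarrow$ every self-adjoint automorphism is iso-positive.'' Lemma \ref{lem:compareposstrs} classifies positivity structures on a fixed anti-involutive category $\mathcal{C}$ up to $\dagger$-equivalence \emph{over} $\mathcal{C}$ (a $\dagger$-equivalence together with an anti-involutive natural isomorphism filling the triangle), not up to abstract $\dagger$-equivalence; and maximality of $\mathcal{D}$ only provides an abstract unitary equivalence with the Hermitian completion of \emph{some} anti-involutive category, not an equivalence over $\mathcal{D}$'s own underlying anti-involutive category. So you cannot read off ``maximal iff the closure of $P=\{\id_c\}$ is the full set of Hermitian pairings'' from that lemma: the paper's Example \ref{ex:hermstrsonsvec} ($\Hilb$ versus $\Hilb_{neg}$) shows precisely that inequivalent positivity structures can produce abstractly $\dagger$-equivalent dagger categories, so abstract $\dagger$-equivalence does not detect the equivalence class of the positivity structure. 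Your other direction (if every self-adjoint automorphism is iso-positive, then the closure of $P$ is the full positivity structure, hence $\mathcal{D}\simeq\Herm\mathcal{D}$ is maximal) does follow from the lemma and is fine. To close the gap, argue directly that any Hermitian completion has the stated property and that the property is invariant under $\dagger$-equivalence: if $f=f^\dagger$ is an automorphism of $(c,h)\in\Herm\mathcal{C}'$, then $hf=df\circ h$ is again a Hermitian pairing on $c$, and $g:=\id_c\colon (c,h)\to(c,hf)$ is an isomorphism in $\Herm\mathcal{C}'$ with $g^\dagger g=h^{-1}\circ\id\circ(hf)=f$, so $f$ is iso-positive; invariance under $\dagger$-equivalence follows because the natural isomorphisms witnessing such an equivalence are unitary, so self-adjointness and iso-positivity transport along it.
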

\begin{proof}
    The first point is shown in \cite[Section 4]{jandagger}.
    We prove the second point here.
    Let $\mathcal{D}$ be a dagger category.
    Since the map $\pi_0^U (\mathcal{D}) \to \pi_0(\mathcal{D})$ is surjective, $\mathcal{D}$ is minimal if and only if for every isomorphism $g\colon c \to c'$, there is a unitary isomorphism $\mu \colon c \to c'$.
    On the other hand, every iso-positive is aut-positive if and only if for every isomorphism $g\colon c \to c'$, there exists an automorphism $f\colon c \to c$ such that $f^\dagger f = g^\dagger g$.
    This expression can be rewritten as $(gf^{-1})^\dagger gf^{-1} = \id_c$.
    By setting $\mu = gf^{-1}$, we see that this is equivalent to the existence of the unitary isomorphism $\mu\colon c \to c'$.
\end{proof}

In other words, the notion of iso-positive in a dagger category can vary all the way between self-adjoint automorphism and aut-positive, depending on how large its positivity structure is.
Clearly we have the following implications
\[
\begin{tikzcd}
    \text{aut-positive} \ar[d] \ar[r] & \text{iso-positive} \ar[d] &
    \\
    \text{end-positive} \ar[r] & \text{positive} \ar[r] & \text{self-adjoint}
\end{tikzcd}
\]
Confusingly, it is not true in general that every end-positive isomorphism is aut-positive.
In a $\C$-linear setting this issue usually does not occur.
In particular, we have the following result.

\begin{proposition}
\label{prop:noninvminimal}
    Let $\mathcal{D}$ be a dagger category equipped with a conservative dagger functor $F\colon \mathcal{D} \to \Hilb$.
    Suppose that every positive endomorphism $f\colon c \to c$ is end-positive.
    Then $\mathcal{D}$ is minimal.
\end{proposition}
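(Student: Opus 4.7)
The plan is to apply Lemma \ref{lem:minimalcondition}(2), which reduces minimality to showing that every iso-positive morphism is aut-positive. So I start with an isomorphism $g\colon c \to c'$ and must produce an automorphism $h\colon c \to c$ with $h^\dagger h = g^\dagger g$.

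The natural first move is to observe that $g^\dagger g \colon c \to c$ is a positive endomorphism of $c$ (taking $g$ itself as witness in Definition \ref{def:aut-positive}(1)). The hypothesis then hands us an endomorphism $h \colon c \to c$ with $h^\dagger h = g^\dagger g$. The whole content of the proposition is therefore to upgrade $h$ from an endomorphism to an automorphism, and this is where the conservative dagger functor $F$ to $\Hilb$ enters.

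Apply $F$ to obtain $F(h)^\dagger F(h) = F(g)^\dagger F(g)$ in $\Hilb$. Since $g$ is an isomorphism in $\mathcal{D}$, $F(g)$ is a linear isomorphism, so $F(g)^\dagger F(g)$ is a positive definite, invertible operator on the finite-dimensional Hilbert space $F(c)$. Consequently $F(h)^\dagger F(h)$ is invertible, which forces $F(h)$ to have trivial kernel; in finite dimensions an injective endomorphism of $F(c)$ is an isomorphism. Now conservativity of $F$ promotes $F(h)$ being invertible to $h$ itself being invertible in $\mathcal{D}$. Thus $h$ is an automorphism and $g^\dagger g = h^\dagger h$ is aut-positive, completing the proof via Lemma \ref{lem:minimalcondition}(2).

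The argument is essentially a transport-of-invertibility across $F$, and the only subtle point is the finite-dimensionality of objects in $\Hilb$: without it, $F(h)^\dagger F(h)$ being invertible would not suffice to conclude $F(h)$ is invertible (one would at best get injectivity, not surjectivity). Since the paper's $\Hilb$ consists of finite-dimensional Hilbert spaces, there is no obstacle, and conservativity of $F$ does the rest. I expect no routine calculation beyond the one displayed equation $F(h)^\dagger F(h) = F(g)^\dagger F(g)$.
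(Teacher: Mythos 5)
Your proof is correct and follows essentially the same route as the paper's: reduce via Lemma \ref{lem:minimalcondition}(2), invoke the hypothesis to get an endomorphism witness, and use the conservative dagger functor to $\Hilb$ together with finite-dimensionality to upgrade it to an automorphism. The only cosmetic difference is that you conclude invertibility of $F(h)$ from injectivity (trivial kernel) in finite dimensions, whereas the paper runs the same step through a determinant computation.
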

\begin{proof}
By Lemma \ref{lem:minimalcondition}, it suffices to show that every iso-positive automorphism $f\colon c \to c$ is aut-positive.
    Let $f\colon c \to c$ be an iso-positive automorphism, which is then clearly also positive.
    By assumption, there exists an endomorphism $g\colon c \to c$ such that $g^\dagger g = f$.
    If $g$ would not be an isomorphism, then $F(f)$ would not be an isomorphism because
    \[
    \det F(f) = \det F(g^\dagger g) = \det F(g)^\dagger \det F(g) = 0.
    \]
    Using the assumption that $F$ is conservative, we see that $f$ is aut-positive.
\end{proof}

\begin{remark}
    The minimality condition that every positive morphism is end-positive, is one axiom required for $C^*$-categories.
\end{remark}

\begin{remark}
    The converse of the above proposition is false: as explained in Example \ref{ex:hermstrsonsvec}, there are many positivity structures on $\Vect$ resulting in a minimal dagger category.
    However, most of these do not satisfy that every positive endomorphism $f\colon V \to V$ is end-positive.
    Indeed, suppose $P$ is a positivity structure, for which there exist $V_1$ and $V_2$ which are not both positive definite or both negative definite.
    Let $v_1 \in V_1$ and $v_2 \in V_2$ be vectors in some orthonormal basis of opposite norm.
    Let $T\colon V_1 \to V_2$ be the linear map sending $v_1$ to $v_2$ and all vectors orthogonal to $v_1$ to zero.
    Then $T^\dagger T$ is the negative of the orthogonal projection onto the line spanned by $v_1$, which is not positive as an operator.
    Therefore there are only two positivity structures on $\Vect$ so that every positive morphism is end-positive: the positive definite inner products and the negative definite inner products.
\end{remark}

The following example explains why we call $P$ a positivity structure:

\begin{example}
\label{ex:notionofposofdaggercat}
    If $\mathcal{D}$ is a $\dagger$-category considered as an anti-involutive category, recall that a Hermitian pairing is simply given by self adjoint automorphisms $h\colon c\to c^\dagger = c$.
    This anti-involutive category has a canonical closed positivity structure given by 
    \[
    \Pos(\mathcal{D}) := \{h\colon c \xrightarrow{\sim} c : h \text{ is iso-positive}\}.
    \]
    This positivity structure reproduces $\mathcal{D}$ inside its Hermitian completion.
    More precisely, the $\dagger$-functor including $\mathcal{D}$ into its Hermitian completion induces an $\dagger$-equivalence $\mathcal{D} \cong \mathcal{D}_{\Pos(\mathcal{D})}$.
    We can view $\mathcal{D}_{\Pos(\mathcal{D})}$ as a `closure' of $\mathcal{D}$ in the sense of Definition \ref{def:positivitystructure}, compare Example \ref{ex:nonclosedhilb}.
\end{example}

In the above example, we saw that any $\dagger$-category can up to $\dagger$-equivalence be presented in the form $\mathcal{C}_P$, where $P$ is a positivity structure on an anti-involutive category $\mathcal{C}$.
More generally, let $\aICat^{pos}$ be the $2$-category of anti-involutive categories with positivity structures in which anti-involutive functors are required to preserve the positivity structures.
There are $2$-functors 
\[
\begin{tikzcd}[column sep=40]
    \aICat^{pos} \arrow[r, bend left, "{(\mathcal{C},P) \mapsto \mathcal{C}_P}"] & \dagger \Cat \arrow[l,bend left,"{\mathcal{D} \mapsto (\mathcal{D}, \Pos(\mathcal{D}))}"].
\end{tikzcd}
\]

\begin{theorem}\cite[Theorem 5.14]{jandagger}
\label{th:jantheorem}
The above $2$-functors are inverse equivalences.
\end{theorem}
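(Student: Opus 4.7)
The plan is to exhibit $2$-natural equivalences showing that each composite of the two $2$-functors is equivalent to the identity $2$-functor.

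For the round trip $\mathcal{D} \mapsto (\mathcal{D}, \Pos(\mathcal{D})) \mapsto \mathcal{D}_{\Pos(\mathcal{D})}$, Example \ref{ex:notionofposofdaggercat} already supplies a $\dagger$-equivalence $\mathcal{D} \xrightarrow{\sim} \mathcal{D}_{\Pos(\mathcal{D})}$ on each object, sending $c$ to $(c, \id_c)$. Naturality in $\mathcal{D}$ is automatic, because any $\dagger$-functor $F$ sends iso-positives to iso-positives (since $F(g^\dagger g) = F(g)^\dagger F(g)$ and $F$ preserves isomorphisms) and hence automatically preserves the canonical positivity structures.

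For the other round trip $(\mathcal{C}, d, \eta, P) \mapsto \mathcal{C}_P \mapsto (\mathcal{C}_P, \dagger, \id, \Pos(\mathcal{C}_P))$, I would show that the forgetful functor $\mathcal{C}_P \to \mathcal{C}$, $(c, h) \mapsto c$, extends to an anti-involutive equivalence $(F, \phi)$ with coherence $\phi_{(c,h)} := h$. The compatibility diagram \eqref{di:invfunctordata} for this $\phi$ reduces exactly to the Hermitian pairing condition of Definition \ref{def:hermstr} for $h$, and essential surjectivity follows from the standing assumption that every object of $\mathcal{C}$ admits a pairing in $P$. The nontrivial point is then that $(F, \phi)$ identifies $\Pos(\mathcal{C}_P)$ with $P$ as subsets of $\pi_0^h(\mathcal{C})$. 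In one direction, any $h \in P$ is the image of the identity Hermitian pairing on $(c, h) \in \mathcal{C}_P$, which is trivially iso-positive. In the other direction, an iso-positive endomorphism $k = g^\dagger g$ of $(c, h) \in \mathcal{C}_P$, where $g\colon (c, h) \to (c', h')$ is an isomorphism in $\mathcal{C}_P$ with $h' \in P$, has image under $(F, \phi)$ given by
\[
\phi_{(c,h)} \circ F(k) = h \circ \bigl(h^{-1} \circ dg \circ h' \circ g\bigr) = dg \circ h' \circ g,
\]
using formula \eqref{eq:adjoint} for the dagger on the Hermitian completion. The right-hand side is exactly the transfer of $h'$ along $g$ in the sense of Definition \ref{def:transfer}, hence lies in $P$ since $P$ is assumed closed.

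The remaining work is $2$-categorical bookkeeping. On $1$-morphisms, I would verify that a positivity-preserving anti-involutive functor $(F, \phi)\colon (\mathcal{C}_1, P_1) \to (\mathcal{C}_2, P_2)$ corresponds to the $\dagger$-functor $(c, h) \mapsto (Fc, \phi_c \circ F(h))$ (using diagram \eqref{di:invfunctordata} for strict compatibility with $\dagger$), and conversely that a $\dagger$-functor between $\mathcal{C}_{1,P_1}$ and $\mathcal{C}_{2,P_2}$ descends to such an anti-involutive functor. On $2$-morphisms, both anti-involutive natural transformations and $\dagger$-natural transformations unwind to ordinary natural transformations, so matching them is immediate. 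I expect the main obstacle to be the transfer computation above: it is the crucial step showing that the two positivity structures agree, and without the closure hypothesis on $P$ the correspondence would only be a reflective inclusion rather than an equivalence.
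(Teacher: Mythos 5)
Your construction is essentially the argument behind the cited result, and it is the same skeleton this paper itself recalls in the proof of Theorem \ref{th:monoidaljanth}: the unit $c \mapsto (c,\id_c)$ of Example \ref{ex:notionofposofdaggercat} handles one round trip, and the counit (the forgetful functor with coherence $\phi_{(c,h)} = h$, whose anti-involutivity is exactly the Hermitian-pairing condition) handles the other. Your transfer computation $h \circ g^\dagger g = dg \circ h' \circ g$ is precisely where closure of $P$ enters, and correctly identifies $\Pos(\mathcal{C}_P)$ with $P$; this is the crucial step and you have it right.

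The one claim that is not correct as stated is the 2-morphism bookkeeping: it is not true that ``both anti-involutive natural transformations and $\dagger$-natural transformations unwind to ordinary natural transformations.'' With the conventions of this paper the 2-morphisms of $\dagger\Cat$ are \emph{isometric} natural transformations, while those of $\aICat^{pos}$ are \emph{anti-involutive} ones, and each is a genuine condition. What makes the 2-cells match is a short computation, not triviality: for functors valued in a category of the form $\mathcal{C}_P$, the anti-involutivity condition on a transformation $u$ (the commuting square of Definition \ref{def:anti-invfunctor} adapted to transformations, i.e.\ $d(u_x)\circ \psi_x \circ u_{d x} = \phi_x$) unwinds, using formula \eqref{eq:adjoint} for the dagger, to $u_x^\dagger u_x = \id$, i.e.\ to $u$ being isometric; conversely an isometric transformation between the induced dagger functors is anti-involutive. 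With that one-line translation inserted, your proof is complete and agrees with the paper's route.
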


We conclude that much of the $2$-categorical theory of dagger categories can be captured by the theory of anti-involutions with positivity structures.
In the rest of this article, we tactfully and often implicitly switch between these two perspectives.

\subsection{Symmetric monoidal dagger categories}
\label{sec:symmon}

In this section, we provide results in the symmetric monoidal setting that are straightforward analogies to what we discussed in Section \ref{sec:Jan}, referring to Appendix \ref{app:symmondagger} for proofs.
Similarly to before, we define the $2$-category of symmetric monoidal anti-involutive categories to be the fixed points under the $\Z/2$-action $\mathcal{C} \mapsto \mathcal{C}^{\cop}$ on symmetric monoidal categories:

\begin{definition}
Define the $2$-category $\aICat_{\mathbb{E}_\infty}$ with 
\begin{itemize}
    \item objects: \emph{symmetric monoidal anti-involutive categories}, i.e. symmetric monoidal categories $\mathcal{C}$ equipped with a symmetric monoidal\footnote{In this article, `monoidal functor' will mean `strong monoidal functor'.} functor $d\colon \mathcal{C} 
\to \mathcal{C}^{\op}$ and a monoidal natural isomorphism $\eta: \id_{\mathcal{C}} \Rightarrow d^2$ such that $\eta_{dx} = d\eta_x$ for all $x \in \mathcal{C}$;
    \item $1$-morphisms: \emph{symmetric monoidal anti-involutive functors}, i.e. anti-involutive functors 
    \[
    (F,\phi)\colon (\mathcal{C}_1, d_1, \eta_1) \to (\mathcal{C}_2, d_2, \eta_2),
    \]
    which are also symmetric monoidal functors, such that the natural isomorphism $\phi\colon F \circ d \cong d \circ F$ of functors $\mathcal{C}_1 \to \mathcal{C}_2^{\op}$ is monoidal;
    \item $2$-morphisms: \emph{symmetric monoidal anti-involutive natural transformations}, i.e. natural transformations that are both anti-involutive and monoidal.
\end{itemize}
\end{definition}

Unpacking the above definition, we see that a symmetric monoidal anti-involution $d$ comes equipped with additional data
\[
\chi_{c_1, c_2}\colon d c_1 \otimes d c_2 \to d(c_1 \otimes c_2) \quad u\colon 1 \to d(1),
\]
satisfying various conditions.
Being a symmetric monoidal anti-involutive functor means that the diagrams
\begin{equation}\label{eq:monoidalunital}
\begin{tikzcd}[column sep = 33]
F(1_{\mathcal{C}_1}) \arrow[r,"{F(u_{\mathcal{C}_1})}"] & F(d1_{\mathcal{C}_1}) \arrow[r,"\phi_1"] & dF(1_{\mathcal{C}_1}) \arrow[d,"{d \epsilon}"]
\\
1_{\mathcal{C}_2} \arrow[u,"\epsilon"] \arrow[rr,"{u_{\mathcal{C}_2}}"] && d1_{\mathcal{C}_2}
\end{tikzcd}
\end{equation}
and
\begin{equation}
\label{eq:monoidalantiinvfunctor}
\begin{tikzcd}[column sep = 40]
F(dx) \otimes F(dx') \ar[r,"{\phi_x \otimes \phi_{x'}}"] \ar[d] & dF(x) \otimes dF(x') \ar[r,"{\chi_{F(x) \otimes F(x')}}"] & d(F(x) \otimes F(x'))
\\
F(dx \otimes dx') \arrow[r,"{F(\chi_{x,x'})}"] & F(d(x \otimes x')) \arrow[r,"{\phi_{x\otimes x'}}"] & dF(x\otimes x') \arrow[u]
\end{tikzcd}
\end{equation}
commute for all objects $x,x'$ of $\mathcal{C}_1$.
Here $\epsilon\colon 1_{\mathcal{C}_1} \to F(1_{\mathcal{C}_2})$ is the data of $F$ preserving the monoidal unit and $u_{\mathcal{C}_i}\colon 1_{\mathcal{C}_i} \to d 1_{\mathcal{C}_i}$ the data of $d$ preserving the monoidal unit for $i = 1,2$.
From now on we will often suppress data such as $d,\eta,\chi,u, \phi$ and $\epsilon$ from the notation, so the reader can expect statements of the form `let $\mathcal{C}$ be a symmetric monoidal anti-involutive category' or `$F$ is an anti-involutive functor'.

\begin{definition}
\label{def:monoidal dagger}
Define the $2$-category $\Cat^\dagger_{\mathbb{E}_\infty}$ with
\begin{itemize}
    \item objects: \emph{symmetric monoidal dagger categories}, i.e. dagger categories that are also symmetric monoidal categories such that $\otimes$ is a $\dagger$-functor and the unitor, the associator and the braiding are unitary;
    \item $1$-morphisms: \emph{symmetric monoidal dagger functors}, i.e. symmetric monoidal functors which are dagger functors such that $F(c_1) \otimes F(c_2) \to F(c_1 \otimes c_2)$ and $\epsilon\colon 1_{\mathcal{C}_2} \to F(1_{\mathcal{C}_1})$ are unitary;
    \item $2$-morphisms: \emph{symmetric monoidal isometric transformations}, i.e. natural transformations that are both isometries and monoidal.
\end{itemize}
\end{definition}

Let $(\mathcal{C}, d, \eta, \chi)$ be a symmetric monoidal anti-involutive category.
We define the tensor product of Hermitian pairings $h_1\colon c_1 \to dc_1$ and $h_2\colon c_2 \to dc_2$ by
\begin{equation}
\label{def:tensorproductofhermstrs}
c_1 \otimes c_2 \xrightarrow{h_1 \otimes h_2} dc_1 \otimes dc_2 \xrightarrow{\chi_{c_1,c_2}} d(c_1 \otimes c_2).
\end{equation}
The other data needed to make $\Herm \mathcal{C}$ into a symmetric monoidal dagger category is given by the symmetric monoidal structure of $\mathcal{C}$.
To obtain symmetric monoidal dagger categories that are not $\dagger$-equivalent to Hermitian completions, we need to discuss positivity structures.
The only thing to keep in mind, is that a full subcategory of a symmetric monoidal category generated by a collection of objects is only symmetric monoidal if it is closed under tensor product:

\begin{definition}
\label{def:monoidalposstr}
    Let $\mathcal{C}$ be a symmetric monoidal anti-involutive category.
    A positivity structure $P \subseteq \obj \Herm \mathcal{C}$ is called \emph{monoidal} if it is closed under tensor product.
\end{definition}

Note that the fact that $\Herm \mathcal{C}$ is a symmetric monoidal dagger category implies that $\pi_0^U(\Herm \mathcal{C})$ is a commutative monoid under tensor product.
The following lemma is clear.

\begin{lemma}
    The following are equivalent for a closed positivity structure $P$:
    \begin{enumerate}
        \item $P$ is monoidal;
        \item $P$ is a submonoid of $\pi_0^U(\Herm \mathcal{C})$;
        \item $\mathcal{C}_P$ is a symmetric monoidal dagger category.
    \end{enumerate}
\end{lemma}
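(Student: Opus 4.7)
The plan is to use that $\Herm \mathcal{C}$ has already been equipped with a symmetric monoidal dagger structure---the tensor product of Hermitian pairings given by \eqref{def:tensorproductofhermstrs}, with monoidal unit and coherences inherited from $\mathcal{C}$. Since $\mathcal{C}_P$ is by construction a full dagger subcategory of $\Herm \mathcal{C}$, the lemma reduces to the standard fact that a full subcategory of a symmetric monoidal category inherits the symmetric monoidal structure precisely when its collection of objects is closed under the monoidal operations.

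For $(3) \Rightarrow (1)$, I would simply observe that if $\mathcal{C}_P$ is symmetric monoidal, then $P$ is by definition closed under the tensor product of $\Herm \mathcal{C}$. For $(1) \Rightarrow (3)$, I would restrict the symmetric monoidal structure of $\Herm \mathcal{C}$ along the inclusion $\mathcal{C}_P \hookrightarrow \Herm \mathcal{C}$, noting that the associator, braiding and unitors of $\Herm \mathcal{C}$ are already unitary and the tensor bifunctor is already a dagger functor, so all these properties automatically survive restriction.

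For $(1) \Leftrightarrow (2)$, the key observation is that since $P$ is closed under transfer, it is saturated under unitary isomorphism: two unitarily isomorphic pairings are related by transfer along the connecting unitary. Hence $P$ is precisely the preimage in $\obj \Herm \mathcal{C}$ of its image in $\pi_0^U(\Herm \mathcal{C})$, so closure of $P$ under the binary tensor product of pairings is equivalent to closure of its image under the induced commutative monoid operation on $\pi_0^U(\Herm \mathcal{C})$.

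The only delicate point in either equivalence concerns the nullary case, namely ensuring that the class of the monoidal unit $u\colon 1 \to d(1)$ lies in $P$. For this I would use the surjectivity condition on positivity structures to extract some Hermitian pairing on $1_{\mathcal{C}}$ in $P$, and then apply closure under transfer along an automorphism of $1_{\mathcal{C}}$ to move to the canonical pairing $u$. Once this is settled, the remaining verifications are routine unpacking of definitions, matching the author's assertion that the lemma is clear.
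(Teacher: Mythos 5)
Your overall strategy is the right one (the paper itself offers no argument, declaring the lemma clear): view $\mathcal{C}_P$ as a full dagger subcategory of the symmetric monoidal dagger category $\Herm\mathcal{C}$ of Proposition \ref{prop:monoidalherm}, and use that a \emph{closed} positivity structure is saturated under unitary isomorphism (two unitarily isomorphic pairings are transfers of one another along the connecting unitary), so that closure of $P$ under the binary tensor product is the same as closure of its image in $\pi_0^U(\Herm\mathcal{C})$ under the induced multiplication. All the binary parts of $(1)\Leftrightarrow(2)\Leftrightarrow(3)$ go through exactly as you say.

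The gap is in your treatment of the nullary case. You claim that any Hermitian pairing on $1_{\mathcal{C}}$ belonging to $P$ can be moved to the canonical pairing $u\colon 1\to d(1)$ by transfer along an automorphism of $1_{\mathcal{C}}$. This is false in general: transfers of $u$ along automorphisms $g$ of $1$ are exactly the pairings $u\circ g^\dagger g$ (dagger with respect to $u$), i.e.\ they form a single unitary isomorphism class, which need not exhaust the Hermitian pairings on $1$. In $\Vect$ with $d=\ol{(.)}^*$ the negative definite pairing on $\C$ is Hermitian but is not a transfer of the standard one; indeed the closed positivity structure consisting of all negative definite pairings surjects onto objects yet contains nothing in the class of $u$. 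So surjectivity plus closedness alone, which is all your argument uses, do not put the unit class in $P$; you must invoke monoidality. A correct argument: pick $(h\colon 1\to d1)\in P$ by surjectivity and write $h=u\circ a$ with $a=u^{-1}h$. Hermitianness of $h$ and of $u$ gives $da\circ u = u\circ a$, i.e.\ $a$ is self-adjoint for the pairing $u$. Now $(1,h)\otimes(1,h)\in P$ by monoidality, and transferring its pairing $\chi_{1,1}\circ(h\otimes h)$ along the unitor $1\cong 1\otimes 1$ (allowed since $P$ is closed) yields, using that the unitor is unitary for $(1,u)$ by Proposition \ref{prop:monoidalherm} and naturality of the unitors, the pairing $u\circ a^2 = u\circ a^\dagger a$ on $1$, which \emph{is} the transfer of $u$ along $a$. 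Hence the class of $(1,u)$ lies in the image of $P$, and by saturation $(1,u)\in\mathcal{C}_P$; this is exactly what both $(1)\Rightarrow(2)$ (the submonoid contains the unit) and $(1)\Rightarrow(3)$ (the restricted structure has a monoidal unit) require.
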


\begin{example}
\label{ex:monoidaldaggerpositivity}
    Let $\mathcal{D}$ be a symmetric monoidal dagger category. 
    Then the associated canonical positivity structure in $\Herm \mathcal{D}$ of Example \ref{ex:notionofposofdaggercat} is monoidal, since $\otimes\colon \mathcal{D} \times \mathcal{D} \to \mathcal{D}$ is a $\dagger$-functor.
\end{example}

\begin{example}
For finite-dimensional vector spaces, the monoid structure on $\pi_0 (\Vect) = \N$ coming from the monoidal structure on $\Vect$ is given by multiplication.
The monoid structure on $\pi_0^U(\Herm \Vect) \cong \N \times \N$ is given by the formula telling us how the signature of a tensor product looks:
\[
(p_1,q_1) (p_2,q_2) = (p_1 p_2 + q_1 q_2, p_1 q_2 + p_2 q_1).
\]
Even after requiring positivity structures $P \subseteq \pi_0^U(\Herm \Vect)$ to be monoidal, there are still many positivity structures giving neither $\Herm_\C$ nor $\Hilb$.
One possibility would be that the positive definite and negative definite line are both contained in $P$, but their direct sum given by the hyperbolic 2-dimensional space is not.

To avoid these more pathological examples, we could require more compatibility conditions.
For example, if we require $P$ to be additionally closed under direct sums, then only $\Hilb$ and $\Herm_\C$ remain as possibilities.
Alternatively, we could follow a $C^*$-category style definition to completely single out $\Hilb$ here.
More precisely, we could require the monoidal dagger category to satisfy the stronger minimality requirement that every positive operator is end-positive, compare Proposition \ref{prop:noninvminimal}.
See \cite{matthewdagger} for a different approach to characterize dagger categories of Hilbert spaces.

Note however that these considerations are very specific about the $\C$-linear situation.
For example, we are agnostic on which bordism dagger categories should be considered `positive' in the sense $\Hilb$ might be called a `positive' dagger category.
Namely, bordism categories do not have direct sums and asking every positive morphism to be end-positive is typically unreasonable.
\end{example}

\begin{example}
\label{ex:oppositeposstr}
If $\mathcal{C}$ is a symmetric monoidal anti-involutive category with monoidal positivity structure $P$, then there is a canonical induced monoidal positivity structure $P^{\op}$ on $\mathcal{C}^{\cop}$ with the anti-involution from Example \ref{ex:oppositeantiinv} given by the inverses of elements of $P$.
This construction is in agreement with the obvious definition of the opposite of a symmetric monoidal dagger category in the sense that $(\mathcal{C}_P)^{\op} = (\mathcal{C}^{\op})_{P^{\op}}$.
\end{example}

Analogously to Theorem \ref{th:jantheorem}, we prove in Appendix \ref{app:symmondagger}:

\begin{theorem}
\label{th:symmetricjantheorem}
The Hermitian completion extends to a $2$-functor 
\[
    \aICat_{\mathbb{E}_\infty} \to \Cat^\dagger_{\mathbb{E}_\infty}
    \]
    right adjoint to the canonical functor $\Cat^\dagger_{\mathbb{E}_\infty} \to \aICat_{\mathbb{E}_\infty}$.
This induces an equivalence between $\Cat^\dagger_{\mathbb{E}_\infty}$ and the $2$-category of symmetric monoidal anti-involutive categories equipped with monoidal positivity structures:
    \[
    \Cat^\dagger_{\mathbb{E}_\infty} \cong (\aICat_{\mathbb{E}_\infty})_P.
    \]
\end{theorem}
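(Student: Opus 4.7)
The plan is to promote Theorem \ref{th:jantheorem} to the symmetric monoidal setting by carefully tracking the monoidal coherence data through every step, never actually redoing the adjunction proof but rather showing that the structures assembled in the non-symmetric-monoidal case automatically enhance. I will carry out three stages.

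First I would verify that for a symmetric monoidal anti-involutive category $(\mathcal{C},d,\eta,\chi,u)$, the Hermitian completion $\Herm \mathcal{C}$ is naturally a symmetric monoidal dagger category. The underlying tensor product is inherited from $\mathcal{C}$, and the tensor product of objects is defined by equation \eqref{def:tensorproductofhermstrs}; the unit object is $(1,u)$. The monoidal-unitality condition $\eta_{dx} = d\eta_x$ together with the fact that $\chi$ is monoidal imply that \eqref{def:tensorproductofhermstrs} is again a Hermitian pairing, i.e.\ satisfies the condition in Definition \ref{def:hermstr}. The associator, unitor and braiding of $\mathcal{C}$ lift to $\Herm \mathcal{C}$; using the formula \eqref{eq:adjoint} for $\dagger$ together with the monoidality of $\eta$, $\chi$ and the symmetry, one checks that each of these morphisms is unitary, and that $\otimes$ is itself a $\dagger$-functor. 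This makes $\Herm \mathcal{C} \in \Cat^\dagger_{\mathbb{E}_\infty}$.

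Second I would promote the 2-functor and adjunction of Theorem \ref{th:herm2functor} to the symmetric monoidal setting. Given a symmetric monoidal anti-involutive functor $(F,\phi,\epsilon)$, the induced dagger functor $\Herm F\colon \Herm \mathcal{C}_1 \to \Herm \mathcal{C}_2$ sends $(c,h)$ to the composite $F(c) \xrightarrow{F(h)} F(dc) \xrightarrow{\phi_c} dF(c)$; the coherence diagrams \eqref{eq:monoidalunital} and \eqref{eq:monoidalantiinvfunctor} are precisely what guarantee that the monoidal comparison morphism $F(c_1)\otimes F(c_2) \to F(c_1 \otimes c_2)$ of $F$ intertwines the induced Hermitian pairings, i.e.\ is a morphism in $\Herm \mathcal{C}_2$, and that $\epsilon$ relates the unit Hermitian pairings. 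Since the comparison morphisms were already unitary in $\mathcal{C}_2$ only to the extent that they relate the pairings, one checks via \eqref{eq:adjoint} that they are unitary in $\Herm \mathcal{C}_2$. Similarly, monoidality of an anti-involutive natural transformation automatically delivers monoidality of the associated isometric transformation. The triangle identities for the adjunction already hold by Theorem \ref{th:herm2functor}; one only needs to verify that the unit and counit are symmetric monoidal, which is immediate from the explicit formulas.

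Third, for the equivalence $\Cat^\dagger_{\mathbb{E}_\infty} \cong (\aICat_{\mathbb{E}_\infty})_P$, I would use Example \ref{ex:monoidaldaggerpositivity} to produce the 2-functor $\mathcal{D} \mapsto (\mathcal{D}, \Pos(\mathcal{D}))$, with codomain the 2-category of symmetric monoidal anti-involutive categories equipped with \emph{monoidal} positivity structures; the inverse sends $(\mathcal{C}, P)$ to $\mathcal{C}_P \subseteq \Herm \mathcal{C}$, which is a symmetric monoidal dagger subcategory precisely because $P$ is closed under tensor product (Definition \ref{def:monoidalposstr}). The fact that these constructions form mutually inverse equivalences, together with the preservation/reflection of the monoidality of the positivity structure under the natural equivalences, then reduces to Theorem \ref{th:jantheorem} applied underlying-ly. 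Functoriality with respect to symmetric monoidal anti-involutive functors follows from Definition \ref{def:preservepositivity} applied in the tensor-closed context.

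The main obstacle is the bookkeeping in stage one: showing that the associator, unitor, braiding and the tensor-product functor all genuinely promote to the dagger category $\Herm \mathcal{C}$ requires verifying unitarity against the adjoint formula \eqref{eq:adjoint}, and this forces a careful diagram chase combining $\eta_{dx} = d\eta_x$, monoidality of $\chi$, and the hexagon/pentagon axioms of $\mathcal{C}$. Once this coherence is in hand, the rest of the proof is a routine enhancement of Theorem \ref{th:jantheorem} and Theorem \ref{th:herm2functor} that costs only the checking of monoidality conditions on data already constructed. For this reason it is natural, as the author suggests, to relegate the full verification to Appendix \ref{app:symmondagger}.
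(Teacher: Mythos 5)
Your proposal is correct and takes essentially the same route as the paper's proof in Appendix \ref{app:symmondagger}: there, too, one checks that the tensor product \eqref{def:tensorproductofhermstrs} of Hermitian pairings is again a Hermitian pairing and that the associator, unitors and braiding are unitary in $\Herm \mathcal{C}$, uses the diagrams \eqref{eq:monoidalunital} and \eqref{eq:monoidalantiinvfunctor} to show $\Herm F$ is a symmetric monoidal dagger functor, verifies that the unit $U_{\mathcal{D}}$ and counit $K_{\mathcal{C}}$ of the adjunction from Theorem \ref{th:herm2functor} are monoidal, and obtains the equivalence via $\mathcal{D} \mapsto (\mathcal{D},\Pos(\mathcal{D}))$ and $(\mathcal{C},P) \mapsto \mathcal{C}_P$. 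The only organizational difference is that the paper first treats the plain monoidal ($\mathbb{E}_1$) case and adds the braiding and symmetry at the very end, whereas you carry the symmetric monoidal coherence through in a single pass.
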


\subsection{Dagger duality}
\label{sec:daggerduality}

In this section, we will study duals in symmetric monoidal dagger categories using the perspective developed in the previous sections.
With this goal in mind, we will first review dual functors on symmetric monoidal categories $\mathcal{C}$ and then study them on anti-involutive categories.
Dual functors on symmetric monoidal anti-involutive categories come equipped with canonical equivariance data for the anti-involution. 
This induces a canonical dual functor on the Hermitian completion, which will be a symmetric monoidal dagger functor.

We start by giving a lightning review on duals in symmetric monoidal categories. 
Let $x \in \mathcal{C}$ be an object of a symmetric monoidal category.
A \emph{dual} of $x$ is an object $x^* \in \mathcal{C}$ equipped with evaluation and coevaluation morphisms
\[
\ev_x\colon x^* \otimes x \to 1 \quad \coev_x\colon 1 \to x \otimes x^* 
\]
such that the triangle identities hold.
Duals are unique in the sense that if $(x', \ev'_x,\coev_x')$ is another dual, then there is a unique isomorphism $x' \cong x^*$ compatible with the respective evaluations and coevaluations.
Explicitly it is given by\footnote{We will often suppress associators and unitors from the notation to improve readability.}
\begin{equation}
\label{eq:dualsuniqueformula}
x^* \xrightarrow{\id_{x^*} \otimes \coev_x'} x^* \otimes x \otimes x' \xrightarrow{\ev_x \otimes \id_{x'}} x'.
\end{equation}
In particular, if $x^{*}$ is a dual of $x$ we obtain that $x$ is also a dual of $x^{*}$ under
\[
x \otimes x^{*} \xrightarrow{\sigma_{x, x^{*}}} x^{*} \otimes x \xrightarrow{\ev_{x}} 1,
\]
using the braiding of $\mathcal{C}$.
If $x^{**}$ is a dual of $x^*$, we can use uniqueness of duals to get an isomorphism $x \cong x^{**}$, which by \eqref{eq:dualsuniqueformula} is given as
    \begin{equation}
\label{eq:canonicalpivotalstr}
    x \xrightarrow{\coev_{x^*} \otimes \id_x} x^* \otimes x^{**} \otimes x \xrightarrow{\id_{x^*} \otimes \sigma_{x^{**}, x}} x^* \otimes x \otimes x^{**} \xrightarrow{\ev_x \otimes \id_{x^{**}}} x^{**}.
    \end{equation}
If every object of $\mathcal{C}$ admits some dual, we say that $\mathcal{C}$ \emph{has duals}.
After picking a specific dual for every object of $\mathcal{C}$, we can construct a symmetric monoidal functor 
\[
(.)^*\colon \mathcal{C} \to \mathcal{C}^{\op}.
\]
The dual of a morphism $f\colon c_1 \to c_2$ is the unique morphism $f^*\colon c_2^* \to c_1^*$ making the diagram 
\[
\begin{tikzcd}[column sep = 35]
    c_2^* \otimes c_1 \ar[d,"f^* \otimes \id_{c_2}"'] \ar[r, "\id_{c_2^*} \otimes f"] & c_2^* \otimes c_2 \ar[d,"\ev_{c_2}"]
    \\
    c_1^* \otimes c_1 \ar[r,"\ev_{c_1}"] & 1
\end{tikzcd}
\]
commute.
The monoidal data of the functor $(.)^*$ is constructed by uniqueness of duals, using the fact that $x_1^* \otimes x_2^*$ becomes a dual of $x_1 \otimes x_2$ after applying the braiding.
Note that a dual functor on $\mathcal{C}$ induces a dual functor on $\mathcal{C}^{\op}$ in which the evaluation and coevaluation maps are exchanged.
% I prove that a dual functor is braided in the appendix of my PhD thesis {lem:braidingdualcompatible}

\begin{definition}
    Let $\mathcal{C}$ be symmetric monoidal category.
    Then a choice of dual for every object induces a symmetric monoidal functor
    $(.)^*\colon \mathcal{C} \to \mathcal{C}^{\op}$
    called a \emph{dual functor}.
\end{definition}

The statement of uniqueness of duals can now be strengthened to obtain that any two dual functors are equivalent:

\begin{proposition}
\label{prop:dualsuniqeu}
Let $(.)^*$ be a dual functor coming from the choices
\[
\ev\colon x^* \otimes x \to 1 \quad \coev\colon 1 \to x \otimes x^*
\]
for all $x \in \mathcal{C}$.
    Another choice of duals 
    \[
\ev'\colon x' \otimes x \to 1 \quad \coev'\colon 1 \to x \otimes x'
\]
for every object induces a unique monoidal natural isomorphism between the two induced dual functors $(.)' \cong (.)^*$ with the property that it intertwines the respective evaluation and coevaluation maps.
Conversely, if $F\colon \mathcal{C} \to \mathcal{C}^{\op}$ is any functor equipped with a monoidal natural isomorphism $F \cong (.)^*$, there is a unique way to make $F(x)$ into a dual of $x$ such that this natural isomorphism intertwines the respective evaluation and coevaluation maps.
\end{proposition}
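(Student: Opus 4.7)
The plan is to derive both directions from the pointwise uniqueness of duals already cited in the excerpt (the formula \eqref{eq:dualsuniqueformula}), and to promote this pointwise statement to a coherent natural isomorphism by exploiting universal-property characterizations throughout.

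For the forward direction, I would define, for each object $x$, the isomorphism $\phi_x \colon x' \to x^*$ via the analogue of \eqref{eq:dualsuniqueformula}; by construction this is the unique isomorphism of duals intertwining $(\ev'_x, \coev'_x)$ with $(\ev_x, \coev_x)$. To verify naturality, I would observe that for any $f \colon x \to y$ both $f^*$ and $f'$ are characterized as the unique morphisms making the evaluation square commute; since $\phi$ intertwines $\ev'$ with $\ev$, precomposing the $f'$-characterization with $\phi_y$ yields exactly the $f^*$-characterization, forcing $\phi_x \circ f' = f^* \circ \phi_y$. For monoidality, I would recall that the coherence isomorphism $x_1^* \otimes x_2^* \cong (x_1 \otimes x_2)^*$ is itself obtained by applying uniqueness of duals to the canonical dual structure on $x_1^* \otimes x_2^*$ produced by the braiding, and identically for $(.)'$. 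Since $\phi_{x_1} \otimes \phi_{x_2}$ intertwines the canonical dual structures on $x_1' \otimes x_2'$ and $x_1^* \otimes x_2^*$, a further application of uniqueness of duals forces $\phi_{x_1 \otimes x_2}$ to intertwine the two coherence isomorphisms. Uniqueness of $\phi$ as a monoidal natural isomorphism with the intertwining property reduces to the pointwise uniqueness.

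For the converse direction, given a monoidal natural isomorphism $\alpha \colon F \Rightarrow (.)^*$, I would transport the evaluation and coevaluation along $\alpha$:
\[
\ev^F_x := \ev_x \circ (\alpha_x \otimes \id_x), \qquad \coev^F_x := (\id_x \otimes \alpha_x^{-1}) \circ \coev_x.
\]
The triangle identities for $(F(x), \ev^F_x, \coev^F_x)$ follow by transporting those of $(x^*, \ev_x, \coev_x)$ across $\alpha_x$, so $F(x)$ becomes a dual of $x$. The intertwining property is built into the definition, and uniqueness is immediate: any other dual structure on $F(x)$ intertwined with $(\ev_x, \coev_x)$ by the fixed invertible $\alpha_x$ must satisfy the displayed equations.

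The main obstacle is the monoidality step in the forward direction. It requires unfolding the two layers of universal property, namely the definition of the coherence data on each dual functor and then the compatibility of $\phi$ with both, and tracking several associators and braidings. The argument is nevertheless mechanical because every morphism in play is characterized uniquely by its compatibility with evaluation; once this is set up, matching the characterizations on both sides suffices rather than any direct calculation.
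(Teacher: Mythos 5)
Your proof is correct and takes essentially the route the paper relies on: the paper states this proposition without a written proof, treating it as the standard strengthening of pointwise uniqueness of duals, and your argument (defining $\phi_x$ by the zigzag formula, then obtaining naturality and monoidality by matching universal-property characterizations, and transporting the duality data along the given isomorphism for the converse) is exactly that standard argument. The only cosmetic omission is the unit coherence $1' \cong 1^*$, which is handled by the same uniqueness-of-duals reasoning as the tensor coherence.
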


In particular, we see that $F$ being a dual functor is equivalent to the condition that $F(x)$ can be made into a dual of $x$ compatibly with the isomorphisms $F(x \otimes y) \cong F(x) \otimes F(y)$ and $F(1) \cong 1$, such that $F(f)$ is the dual of $f$ using the given evaluation maps.
    Now, if $(.)^*\colon \mathcal{C}_1 \to \mathcal{C}_1^{\op}$ is a dual functor and $F\colon \mathcal{C}_1 \to \mathcal{C}_2$ is a symmetric monoidal functor, then $F(x)^*$ and $F(x^*)$ are both dual to $F(x)$.
    This allows us to construct a canonical monoidal natural isomorphism $F \circ (.)^* \Rightarrow (.)^* \circ F$ similarly to Proposition \ref{prop:dualsuniqeu}.
Note however that we used the braiding in making the dual functor monoidal, as well as in the isomorphism $x \cong x^{**}$.
Therefore with our conventions, these data are only preserved by \emph{symmetric} monoidal functors.

\begin{example}
    Let $\Vect_\C$ be the symmetric monoidal category of finite-dimensional vector spaces.
    Let $\ev_V$ and $\coev_V$ denote the standard evaluation and coevaluation maps giving us a symmetric monoidal dual functor $(.)^*$.
    Fix $\alpha \in \C^\times$ and define new evaluation and coevaluation maps by $\ev'_V := \alpha \circ \ev_V$ and $\coev_V := \coev_V \circ \alpha^{-1}$.
    It is straightforward to verify the zigzag equations.
    This gives us a second dual functor $(.)'$ which is equal to $(.)^*$ as a functor.
    However, note that the isomorphisms $1' \cong 1$ and $(x \otimes y)' \cong x' \otimes y'$ are modified by multiplication with $\alpha$.
    The natural isomorphism $\eta\colon (.)^* \Rightarrow (.)'$ specifying uniqueness of duals is given by multiplication with $\alpha$, which is indeed a monoidal natural transformation.
\end{example}

We now discuss duality in the context of symmetric monoidal dagger categories, adopting the terminology of \cite{penneysUDF}. 

\begin{definition}
    Let $\mathcal{D}$ be a symmetric monoidal dagger category with duals.
    A choice of dual functor $\mathcal{D} \to \mathcal{D}^{\cop}$ is called a \emph{unitary dual functor} if it is a symmetric monoidal dagger functor. 
\end{definition}

Note that it is a condition for a dual functor on a symmetric monoidal dagger category to be a unitary dual functor: we need that $f^{*\dagger} = f^{\dagger *}$ for all morphisms and that the preferred isomorphism $(x \otimes y)^* \cong x^* \otimes y^*$ is unitary for all $x,y \in \mathcal{D}$.
    As emphasized in \cite{penneysUDF}, unitary dual functors are subtle.
    For example, the canonical natural isomorphism between two choices of unitary dual functor need not be unitary.
    We argue that the perspective using anti-involutions sheds some light on these issues. 
Therefore, we study dual functors on anti-involutive categories and how they preserve Hermitian pairings.

\begin{lemma}
\label{lem:canonicalcandidateUDF}
Let $\mathcal{C}$ be a symmetric monoidal anti-involutive category which has duals.
A choice of dual functor $(.)^*\colon \mathcal{C} \to \mathcal{C}^{\cop}$ is canonically anti-involutive. 
\end{lemma}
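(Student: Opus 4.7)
The plan is to apply Proposition \ref{prop:dualsuniqeu} to the underlying symmetric monoidal functor $d\colon \mathcal{C} \to \mathcal{C}^{\op}$ of the anti-involution, and then upgrade the resulting compatibility data so that it interacts correctly with $\eta$. Recall that to equip $(.)^*\colon \mathcal{C}\to \mathcal{C}^{\cop}$ with anti-involutive structure I need to produce a natural isomorphism $\phi\colon (.)^*\circ d \Rightarrow d\circ (.)^*$ fitting into diagram \eqref{di:invfunctordata}, where $\mathcal{C}^{\cop}$ is equipped with the anti-involution $(d,\eta^{-1})$ of Example \ref{ex:oppositeantiinv}.

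First I would construct $\phi$. The chosen dual functor on $\mathcal{C}$ restricts to a dual functor on $\mathcal{C}^{\op}$, where the evaluation and coevaluation of $x$ in $\mathcal{C}^{\op}$ are obtained by reversing those of $\mathcal{C}$ (together with the braiding). Because $d$ is symmetric monoidal, the triple consisting of $d(x^*)$ with evaluation $d(\ev_x)$ and coevaluation $d(\coev_x)$, transported through the monoidal coherence isomorphisms of $d$ and reoriented in $\mathcal{C}^{\op}$, exhibits $d(x^*)$ as a dual of $dx$ in $\mathcal{C}^{\op}$. Proposition \ref{prop:dualsuniqeu} then produces a unique monoidal natural isomorphism
\[
\phi\colon (.)^*\circ d \;\Longrightarrow\; d\circ (.)^*
\]
intertwining the two ev/coev presentations. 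Naturality and monoidality come directly from the statement of that proposition.

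Next I would verify diagram \eqref{di:invfunctordata}. Both composites are isomorphisms of the shape $x^*\to d^2(x^*)$ (read in $\mathcal{C}^{\op}$; equivalently $d^2(x^*)\to x^*$ in $\mathcal{C}$), built from $\phi$ and $\eta$. My approach is to show that both composites satisfy the same universal intertwining property with respect to the canonical ev/coev on the relevant duals, whence the uniqueness clause of Proposition \ref{prop:dualsuniqeu} forces them to agree. The key input is that $\eta$ is a symmetric monoidal natural isomorphism, and the anti-involutive compatibility $d\eta_x = \eta_{dx}^{-1}$, which together let me identify $(\eta_x)^*$ with the composite of $\phi$'s and $\eta_{x^*}$'s prescribed by the diagram.

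The main obstacle will be organizational: the target anti-involution uses $\eta^{-1}$, the dual functor on $\mathcal{C}^{\op}$ differs from that on $\mathcal{C}$ by a braiding, and the evaluation/coevaluation morphisms swap roles under $\op$. Once these variances are cleanly bookkept, no genuine diagram chase remains, as everything collapses to the universal characterization of the comparison isomorphism of duals under a symmetric monoidal functor.
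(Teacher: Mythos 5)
Your proposal matches the paper's proof in essence: the comparison isomorphism between $d(x^*)$ and $(dx)^*$ is produced exactly as in the paper, by noting that the symmetric monoidal (contravariant) functor $d$ preserves duals and invoking uniqueness of duals (Proposition \ref{prop:dualsuniqeu}) to get a monoidal natural isomorphism, and the coherence square \eqref{di:invfunctordata} is then checked by the same universal-property argument that the paper packages as Lemma \ref{lem:dualofnattr} applied to the monoidal natural isomorphism $\eta\colon \id \Rightarrow d^2$. The only cosmetic differences are bookkeeping (direction of $\phi$, working in $\mathcal{C}^{\op}$ versus with endofunctors of $\mathcal{C}$), so this is the same argument.
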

\begin{proof}
Let $(.)^*\colon \mathcal{C} \to \mathcal{C}^{\cop}$ be a dual functor.
Recall that uniqueness of duals provides a canonical monoidal natural isomorphism $d \circ (.)^* \Rightarrow (.)^* \circ d$ of symmetric monoidal functors $\mathcal{C} \to \mathcal{C}$.
We then only have to show that the diagram
\[
\begin{tikzcd}
    x^* \ar[d,"\eta_{x^*}"] & (d^2 x)^* \ar[l,"\eta_x^*"]
    \\
    d^2(x^*)  & d(dx^*) \ar[u] \ar[l]
\end{tikzcd}
\]
commutes.
This follows from the fact that $\eta$ is a monoidal natural isomorphism, see the next lemma.
\end{proof}

\begin{lemma}
\label{lem:dualofnattr}
        Let $F, G\colon \mathcal{C} \to \mathcal{D}$ be monoidal functors with a monoidal natural isomorphism $\phi\colon F \Rightarrow G$.
    Then, the following diagram commutes
    \[
    \begin{tikzcd}
    F(x^*) \ar[r,"{\phi_{x^*}}"] \ar[d] & G(x^*) \ar[d]
    \\
    F(x)^* & \ar[l,"{\phi_x^*}"] G(x)^*
    \end{tikzcd}.
    \]
\end{lemma}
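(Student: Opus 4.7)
The plan is to prove the lemma by applying Proposition \ref{prop:dualsuniqeu} (uniqueness of duals). The two vertical arrows in the square are, by construction, the canonical isomorphisms arising from the fact that for any monoidal functor $H$, the object $H(x^*)$ is made into a dual of $H(x)$ via the evaluation
\[
\widetilde{\ev}_H\colon H(x^*) \otimes H(x) \xrightarrow{H_{x^*,x}} H(x^* \otimes x) \xrightarrow{H(\ev_x)} H(1) \xrightarrow{H_0^{-1}} 1,
\]
and $H(x^*) \to H(x)^*$ is the unique isomorphism intertwining $\widetilde{\ev}_H$ with the chosen $\ev_{H(x)}$. In particular, the left vertical $\psi_F\colon F(x^*) \to F(x)^*$ is uniquely characterized by $\ev_{F(x)} \circ (\psi_F \otimes \id_{F(x)}) = \widetilde{\ev}_F$, so it suffices to show that the composition $\alpha := \phi_x^* \circ \psi_G \circ \phi_{x^*}$ around the other three sides satisfies the same compatibility.

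To verify $\ev_{F(x)} \circ (\alpha \otimes \id_{F(x)}) = \widetilde{\ev}_F$, I would carry out the following short diagram chase. First, the defining property of the dual morphism $\phi_x^*\colon G(x)^* \to F(x)^*$ gives $\ev_{F(x)} \circ (\phi_x^* \otimes \id_{F(x)}) = \ev_{G(x)} \circ (\id_{G(x)^*} \otimes \phi_x)$. Inserting this and using the defining property of $\psi_G$, the left-hand side becomes $\widetilde{\ev}_G \circ (\phi_{x^*} \otimes \phi_x)$. Monoidality of $\phi$ then lets me replace $G_{x^*,x} \circ (\phi_{x^*} \otimes \phi_x)$ by $\phi_{x^* \otimes x} \circ F_{x^*,x}$, and naturality of $\phi$ applied to $\ev_x\colon x^* \otimes x \to 1$ lets me replace $G(\ev_x) \circ \phi_{x^* \otimes x}$ by $\phi_1 \circ F(\ev_x)$. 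Finally, the unit axiom for $\phi$, namely $\phi_1 \circ F_0 = G_0$, cancels to leave exactly $F_0^{-1} \circ F(\ev_x) \circ F_{x^*,x} = \widetilde{\ev}_F$, as required.

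By Proposition \ref{prop:dualsuniqeu}, this forces $\alpha = \psi_F$, which is precisely commutativity of the square. The only subtlety is bookkeeping: none of the steps is difficult individually, but one must keep track of the coherence isomorphisms $F_{x^*,x}$, $F_0$, $G_{x^*,x}$, $G_0$ and the associators/unitors suppressed in the statement; writing the argument as three successive rewrites (dual morphism, monoidal compatibility of $\phi$, naturality of $\phi$) keeps the chase transparent and avoids duplicating work.
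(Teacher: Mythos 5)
Your proposal is correct and follows exactly the paper's argument: the paper also proves the lemma by observing that the composite $\phi_x^* \circ (G(x^*) \to G(x)^*) \circ \phi_{x^*}$ satisfies the evaluation-compatibility property that uniquely characterises the canonical isomorphism $F(x^*) \cong F(x)^*$ from Proposition \ref{prop:dualsuniqeu}, leaving the diagram chase implicit. Your write-up simply makes that chase (dual-morphism property, monoidality of $\phi$, naturality at $\ev_x$, unit axiom) explicit.
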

\begin{proof}
    It follows by a diagram chase that the composite isomorphism 
    \[
        F(x^*) \xrightarrow{\phi_{x^*}} G(x^*) \to G(x)^* \xrightarrow{\phi_x^*} F(x)^*
    \]
    satisfies the property that characterises $F(x^*) \cong F(x)^*$. 
    % This means we have to show the diagram
    % \[
    % \begin{tikzcd}[column sep = 35]
    %     F(x^*) \otimes F(x) \ar[r,"\phi_{x^*} \otimes \id_{F(x)}"] \ar[d,"\id_{F(x^*)} \otimes \phi_x"] \ar[dd, bend right = 75] & G(x^*) \otimes F(x) \ar[r,"\mu_G \otimes \id_{F(x)}"] \ar[d,"\id_{G(x^*)} \otimes \phi_x"] & G(x)^* \otimes F(x) \ar[r,"\phi_x^* \otimes \id_{F(x)}"] \ar[d,"\id_{G(x)^*} \otimes \phi_x"] & F(x)^* \otimes F(x) \ar[d,"\ev_{F(x)}"]
    %     \\
    %     F(x^*) \otimes G(x) \ar[r,"\phi_{x^*} \otimes \id_{G(x)}"] & G(x^*) \otimes G(x) \ar[r,"\mu_G \otimes \id_{G(x)}"] \ar[d] & G(x)^* \otimes G(x) \ar[r,"\ev_{G(x)}",shorten >=3] & 1 \ar[d] \ar[dl,shorten <=3]
    %     \\
    %     F(x^* \otimes x) \ar[r,"\phi_{x^* \otimes x}"] \ar[rrr,"F(\ev_x)", bend right] & G(x^* \otimes x) \ar[r,"G(\ev_x)"] & G(1) & F(1) \ar[l,"\phi_1"]
    % \end{tikzcd}
    % \]
    % commutes.
    % The two leftmost upper squares commute by an interchange and the right upper square commutes by the universal property of the dual of a morphism.
    % The lower left figure and the triangle commute because $\phi$ is monoidal.
    % The lowest figure commutes because $\phi$ is natural and the remaining square commutes because $\mu_G$ expresses uniqueness of duals.
\end{proof}

\begin{remark}
    Recall that a dual on $\mathcal{C}$ induces a dual on $\mathcal{C}^{\op}$ in which evaluation maps and coevaluation maps are exchanged.
Therefore $d$ preserving duals in Lemma \ref{lem:canonicalcandidateUDF} means it maps evaluation maps to coevaluation maps.
\end{remark}

\begin{remark}
\label{rem:udfsequivalent}
    Note that the anti-involutive dual functor of an anti-involutive symmetric monoidal category with duals is completely canonical in the following precise sense.
    Let $(.)^\vee\colon \mathcal{C} \to \mathcal{C}^{\cop}$ be another choice of dual functor and let $\sigma_x\colon x^* \cong x^\vee$ be the monoidal natural isomorphism as in Proposition \ref{prop:dualsuniqeu}.
    Then $\sigma_x$ is an anti-involutive natural isomorphism.
    Indeed, 
    \[
    \begin{tikzcd}
        d(c^*) \ar[r] & d(c)^* \ar[d,"\sigma_{dc}"]
        \\
        d(c^\vee) \ar[u,"d \sigma_c"] \ar[r] & d(c)^\vee
    \end{tikzcd}
    \]
    commutes because all natural isomorphisms involved are uniqueness of dual isomorphisms for duals of $dc$.
\end{remark}

\begin{remark}
\label{rem:lambdasubtle}
    We can apply Lemma \ref{lem:canonicalcandidateUDF} to the special case where the anti-involutive category is a symmetric monoidal dagger category $\mathcal{D}$.
    If we fix a dual functor, then formula \eqref{eq:dualsuniqueformula} tells us that the canonical natural isomorphism $\lambda\colon (.)^{* \dagger} \cong (.)^{\dagger *}$ is given by
    \begin{equation}
    \label{eq:canonicaldualautom}
   x^* \xrightarrow{\id_{x^*} \otimes \coev_x} x^* \otimes x \otimes x^* \xrightarrow{\sigma_{x^*,x} \otimes \id_{x^*}} x \otimes x^* \otimes x^* \xrightarrow{\coev^\dagger_x \otimes \id_{x^*}}x^*.
    \end{equation}
    Because $\lambda$ expresses uniqueness of duals, it is well-behaved categorically, but in a subtle way.
    For example, note how the fact that the isomorphism $\lambda\colon (.)^{* \dagger} \cong (.)^{\dagger *}$ is natural, seems to imply at first sight that the automorphism \eqref{eq:canonicaldualautom} measures the failure of morphisms $f\colon x_1 \to x_2$ satisfying $f^{\dagger *} = f^{* \dagger}$.
    In particular, if $\lambda$ is the identity on all objects, then $(.)^*$ is a unitary dual functor.
    This follows because natural isomorphisms specifying uniqueness of duals are monoidal.
    However, $\lambda$ need not be the identity even for a unitary dual functor.
    %It does follow in that case that $\lambda$ is a natural automorphism of the functor $(.)^{*\dagger} = (.)^{\dagger *}$.
\end{remark}

\begin{definition}
    \label{def:standardUDF}
    Let $\mathcal{C}$ be a symmetric monoidal anti-involutive category with duals.
    Then a \emph{standard unitary dual functor} on $\Herm \mathcal{C}$ is the symmetric monoidal dagger functor induced by the symmetric monoidal anti-involutive structure of Lemma \ref{lem:canonicalcandidateUDF} on a choice of dual functor $(.)^*$ on $\mathcal{C}$ under Theorem \ref{th:symmetricjantheorem}.
\end{definition}

\begin{remark}
    Because the morphisms of $\Herm \mathcal{C}$ are simply the underlying morphisms of $\mathcal{C}$, the standard unitary dual functor is clearly a dual functor.
\end{remark}

\begin{remark}
    By Remark \ref{rem:udfsequivalent} and the correspondence between anti-involutive and unitary natural isomorphisms, we know that given two standard unitary dual functors, there is a canonical unitary natural isomorphism between them.
\end{remark}

Next we want to generalize the discussion from the Hermitian completion to general dagger categories.
For this, we first make some observations on what the natural Hermitian pairing on the dual $c^*$ looks like explicitly.
First note that there is a natural Hermitian pairing on $dc$:

\begin{example}[{\cite[Example 3.10]{jandagger}}]
\label{ex:hermstondc}
Let $(\mathcal{C},d,\eta)$ be an anti-involutive category and $c$ an object of $\mathcal{C}$.
    If $h\colon c \to dc$ is a Hermitian pairing, on $c$, then $(dh)^{-1}\colon dc \to d^2 c$ is a Hermitian pairing on $dc$.
    Moreover, $h$ is a unitary isomorphism between $c$ and $dc$.
\end{example}

The Hermitian pairing on the dual is similar in spirit:

\begin{definition}
\label{def:dualhermstr}
    Let $\mathcal{C}$ be a symmetric monoidal anti-involutive category equipped with its canonical anti-involutive dual functor and let $h\colon c \to dc$ be a Hermitian pairing.
    The \emph{dual Hermitian pairing} on $c^*$ is given by the composition
    \[
c^* \xrightarrow{h^{*-1}} (dc)^* \cong d(c^*),
\]
where the isomorphism used the anti-involutivity data of Lemma \ref{lem:canonicalcandidateUDF}.
\end{definition}

Note that the standard unitary dual functor on the Hermitian completion maps an object $(c,h)$ to $c^*$ together with the dual Hermitian pairing of $c$ in the sense of Definition \ref{def:dualhermstr}.
The standard dual functor induces an anti-involution of monoids 
    \[
    \pi_0^h (\mathcal{C}) \to \pi_0^h (\mathcal{C}^{\op}) \cong \pi_0^h (\mathcal{C})^{\op} \quad P \mapsto P^*,
    \]
    compare with the notation explained under Definition \ref{def:preservepositivity}.
Using the isomorphism of monoids $\pi_0^h (\mathcal{C})^{\op} \cong \pi_0^h (\mathcal{C})$ given by taking inverses, we can view this also as an involution on $\pi_0^h (\mathcal{C})$, compare Example \ref{ex:oppositeposstr}.

    In general, there is no reason for $c^*$ to be isomorphic to $c$, so that $P \mapsto P^*$ can cover a nontrivial involution of monoids $\pi_0 \mathcal{C} \to \pi_0 \mathcal{C}$.
But even if $c^* \cong c$ and $h$ is a Hermitian pairing on $c$, then $c$ might not be unitarily isomorphic to $c^*$ with the dual Hermitian pairing.
We will see in Section \ref{sec:shilb} that this is the case for super vector spaces.
    However, fixed points $P = P^*$ for the involution on $\pi_0^h(\mathcal{C})$ do give standard unitary dual functors on $\mathcal{C}_P$.
    Indeed, note that the standard unitary dual functor restricts to a functor 
    \[
    \mathcal{C}_P \to \mathcal{C}^{\op}_{P^{\op}}
    \]
    if and only if $P^* \subseteq P$.
    Since the dual functor is an equivalence of anti-involutive categories, this happens if and only if $P^* = P$.
    This situation will be studied in the next section.

\begin{remark}
    Every unitary dual functor on $\mathcal{C}_P$ recovers the dual functor on $\mathcal{C}$, but that it need not recover its canonical anti-involutive data.
    In other words, there can be potentially interesting unitary dual functors on $\mathcal{C}_P$ that come from other anti-involutive dual functors.
    This in particular is the case for categories of fermionic nature such as $\sHilb$ and `spin-like' bordism categories, as we will see in Section \ref{sec:fermdagger}.
\end{remark}

\subsection{Dagger compactness}
\label{sec:daggercompact}

Next we relate the discussion of the last section with what are called dagger compact categories in the literature \cite{selinger2007positive}.

\begin{definition}
\label{def:moderndaggercpt}
    Let $\mathcal{D}$ be a symmetric monoidal dagger category with duals, which comes presented in the form $\mathcal{D} \cong \mathcal{C}_P$, for a symmetric monoidal anti-involutive category $\mathcal{C}$ and monoidal positivity structure $P$. 
    We say that $\mathcal{D}$ is \emph{dagger compact} if $P^* = P$.
\end{definition}

Note that $\mathcal{C}_P$ is dagger compact if and only if the standard unitary dual functor on $\Herm \mathcal{C}$ restricts to $\mathcal{C}_P$.
In other words, for all $h \in P$, the dual Hermitian pairing \ref{def:dualhermstr} is again in $P$.
To show the above definition is well-defined, we need a couple of lemmas that are proven below.

\begin{lemma}
\label{lem:daggercptwelldefd}
Dagger compactness is a well-defined property of the symmetric monoidal dagger category $\mathcal{D}$, which is preserved under symmetric monoidal $\dagger$-equivalences.
%In other words, the presentation $\mathcal{D} \cong \mathcal{C}_P$  always exists and it is independent of the choice of $\mathcal{C}$ and $P$.    
\end{lemma}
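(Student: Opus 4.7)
The plan is to reduce both claims to Theorem \ref{th:symmetricjantheorem}, which promotes the assignment $\mathcal{D} \mapsto (\mathcal{D}, \Pos(\mathcal{D}))$ to a $2$-categorical equivalence between symmetric monoidal dagger categories and pairs $(\mathcal{C},P)$ of a symmetric monoidal anti-involutive category together with a monoidal positivity structure. Under this equivalence, any presentation $\mathcal{D} \cong \mathcal{C}_P$ corresponds to an equivalence $(\mathcal{C},P) \simeq (\mathcal{D}, \Pos(\mathcal{D}))$ in $\aICat_{\mathbb{E}_\infty}^{pos}$, and any symmetric monoidal $\dagger$-equivalence $\mathcal{D}_1 \simeq \mathcal{D}_2$ similarly corresponds to an equivalence $(\mathcal{D}_1,\Pos(\mathcal{D}_1)) \simeq (\mathcal{D}_2,\Pos(\mathcal{D}_2))$. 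So it suffices to prove the intrinsic statement: if $F\colon (\mathcal{C}_1, P_1) \to (\mathcal{C}_2, P_2)$ is an equivalence in $\aICat_{\mathbb{E}_\infty}^{pos}$, then $P_1^* = P_1$ if and only if $P_2^* = P_2$.

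The first step is to observe that the involution $P \mapsto P^*$ on $\pi_0^h(\mathcal{C})$ induced by the standard anti-involutive dual functor is an intrinsic invariant of the symmetric monoidal anti-involutive structure on $\mathcal{C}$. Indeed, by Remark \ref{rem:udfsequivalent} any two choices of dual functor on $\mathcal{C}$ are linked by a canonical monoidal natural isomorphism that is automatically anti-involutive, so the two induced maps on Hermitian isomorphism classes agree. Hence $P \mapsto P^*$ is well-defined on $\pi_0^h(\mathcal{C})$ independently of the choice of duals.

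The second step is to show that $F$ intertwines these involutions. After fixing dual functors on both sides, the monoidal coherence of $F$ and the fact that it preserves duals (as a symmetric monoidal equivalence) give a canonical monoidal natural isomorphism $F \circ (.)^* \cong (.)^* \circ F$, analogous to Lemma \ref{lem:dualofnattr}. I would then check, by a diagram chase, that this isomorphism is anti-involutive when the two functors are endowed with their canonical anti-involutive structures from Lemma \ref{lem:canonicalcandidateUDF} and the coherence datum $\phi$ of $F$. It then follows that $F$ sends the dual Hermitian pairing of $h \in \pi_0^h(\mathcal{C}_1)$ to the dual Hermitian pairing of $F_*(h)$; since $F$ restricts to a bijection $F_* \colon \pi_0^h(\mathcal{C}_1) \to \pi_0^h(\mathcal{C}_2)$ mapping $P_1$ bijectively onto $P_2$, the equivalence $P_1^* = P_1 \Leftrightarrow P_2^* = P_2$ is immediate.

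The main obstacle is the diagram chase in the second step: verifying that the two pieces of coherence data for $F$ (monoidal and anti-involutive) assemble into an anti-involutive natural isomorphism between the two composite dual functors. This is formally analogous to Remark \ref{rem:udfsequivalent}, but requires tracking how $\phi$ interacts with the images under $F$ of evaluation and coevaluation morphisms. Everything else is then formal bookkeeping via Theorem \ref{th:symmetricjantheorem}.
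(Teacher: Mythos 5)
Your proposal is correct and follows essentially the same route as the paper: reduce via Theorem \ref{th:symmetricjantheorem} to presentations $\mathcal{C}_P$, show the uniqueness-of-duals isomorphism $F((.)^*)\cong F(.)^*$ is a monoidal anti-involutive natural isomorphism (the paper's Lemma \ref{lem:lemma2}, your "diagram chase"), and then conclude $F(P^*)=F(P)^*$ at the level of Hermitian isomorphism classes (the paper's Lemma \ref{lem:lemma1}). Your extra remark that $P\mapsto P^*$ is independent of the choice of dual functor, via Remark \ref{rem:udfsequivalent}, matches the paper's surrounding discussion.
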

\begin{proof}
It follows by Theorem \ref{th:symmetricjantheorem} that every symmetric monoidal dagger category $\mathcal{D}$ is $\dagger$-equivalent to one of the form $\mathcal{C}_P$.
To show the lemma, it thus suffices to show that if $F\colon\mathcal{C} \cong \mathcal{C}'$ is an equivalence of anti-involutive categories mapping the monoidal positivity structure $P$ to the monoidal positivity structure $P'$ and $\mathcal{C}_P$ is dagger compact, then so is $\mathcal{C}'_{P'}$.
    For this consider the natural isomorphism $\zeta_x\colon F(x^*) \cong F(x)^*$ specifying uniqueness of duals filling the square
\[
 \begin{tikzcd}
        \mathcal{C} \ar[r,"F"] \ar[d,"(.)^*"] &  \mathcal{C}' \ar[d,"(.)^*"] \ar[dl,Rightarrow, "\zeta", shorten <=3, shorten >=3]
        \\
        \mathcal{C}^{\cop} \ar[r,"F"] &  (\mathcal{C}')^{\cop}
    \end{tikzcd}.
\]
By Lemma \ref{lem:lemma2}, this is a diagram in the $2$-category of symmetric monoidal anti-involutive categories.
Therefore, Lemma \ref{lem:lemma1} applies to the case where $G_1(x) = F(x)^*$ and $G_2(x) = F(x^*)$.
It follows that 
\[
P' = F(P) = F(P^*) = F(P)^* = (P')^*.
\]
\end{proof}

\begin{lemma}
\label{lem:lemma2}
    Let $\mathcal{C}_1, \mathcal{C}_2$ be symmetric monoidal anti-involutive categories with duals, which we equip with their respective anti-involutive dual functors given by Lemma \ref{lem:canonicalcandidateUDF}.
    Let $F\colon \mathcal{C}_1 \to \mathcal{C}_2$ be a symmetric monoidal anti-involutive functor.
    Denote by $\zeta_x\colon F(x^*) \cong F(x)^*$ the canonical isomorphism saying that $F$ preserves duals.
    Then $\zeta$ is a monoidal anti-involutive natural isomorphism.
\end{lemma}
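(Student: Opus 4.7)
The plan is to prove both claims---that $\zeta$ is monoidal and that $\zeta$ is anti-involutive---by a uniform strategy: exhibit both sides of the diagram to be verified as uniqueness-of-duals isomorphisms between two prescribed duals of a common object, and then invoke Proposition \ref{prop:dualsuniqeu}. The key observation that makes this work is that $\zeta_x$ is characterized by the property that under $\zeta_x$, the evaluation $\mathrm{ev}_{F(x)}\colon F(x)^* \otimes F(x) \to 1$ corresponds to the image under $F$ of $\mathrm{ev}_x\colon x^* \otimes x \to 1$ (together with the monoidal data of $F$), and similarly for coevaluations.

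For monoidality, I would verify the two pasting diagrams: one showing $\zeta_{x \otimes y}\circ F(\text{monoidal data for }(.)^*)$ agrees with $(\zeta_x \otimes \zeta_y)$ composed with the monoidal data of $F$ and of the target dual functor, and one showing $\zeta_1$ interacts correctly with the isomorphism $F(1) \cong 1$ used for the unit. Both composites land in $\mathrm{Hom}(F(x^*\otimes y^*), F(x\otimes y)^*)$ (respectively in $\mathrm{Hom}(F(1), 1)$), and both are constructed from uniqueness-of-duals isomorphisms. Unpacking the definition of the monoidal data on a dual functor (which itself is built from uniqueness of duals via the braiding), one sees that each path specifies a dual-compatible isomorphism between the two candidate duals $F(x^* \otimes y^*)$ and $F(x \otimes y)^*$ of the object $F(x) \otimes F(y)$. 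Proposition \ref{prop:dualsuniqeu} then forces them to coincide.

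For the anti-involutive condition, I must show that the square
\[
\begin{tikzcd}[column sep = 40]
(F \circ d_1)(x^*) \ar[r, "\phi_{x^*}"] \ar[d, "F(\text{nat.})"'] & (d_2 \circ F)(x^*) \ar[d, "d_2 \zeta_x"]\\
F((d_1 x)^*) \ar[d, "\zeta_{d_1 x}"'] & (d_2(F(x)^*)) \\
(F(d_1 x))^* \ar[r, "\phi_x^*"'] & (d_2 F(x))^* \ar[u, "\text{nat.}"']
\end{tikzcd}
\]
commutes, where the vertical natural isomorphisms on the left/right are the canonical ones from Lemma \ref{lem:canonicalcandidateUDF} expressing that $(.)^*$ is anti-involutive. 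Both paths around this diagram are isomorphisms between $(F\circ d_1)(x^*)$ and $(d_2 \circ F(x))^*$. Each natural isomorphism appearing---$\phi$, the anti-involutive data on each dual functor, and $\zeta$ itself---is either defined via uniqueness of duals or (in the case of $\phi$) constrained by Lemma \ref{lem:dualofnattr} to interact canonically with duals. Consequently, both composites express the unique dual-preserving isomorphism between two prescribed duals of $d_2 F(x)$, and by Proposition \ref{prop:dualsuniqeu} they agree.

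The main obstacle is organizational: each of the two diagrams contains several layers of coherence isomorphisms (monoidal data of $F$, of $(.)^*$, and of $d_i$; the anti-involutive data $\eta_i$ and $\phi$), and one must keep strict track of what object each arrow lives over to recognize that both sides ultimately compute a uniqueness-of-duals isomorphism for a common pair of duals. The technical heart reduces each side, via Lemma \ref{lem:dualofnattr} applied to the monoidal natural isomorphisms in sight, to an isomorphism intertwining evaluation maps; the conclusion is then an immediate application of Proposition \ref{prop:dualsuniqeu}. No genuinely new calculation is required beyond these coherence book-keeping steps.
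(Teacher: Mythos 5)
Your proposal is correct and follows essentially the same route as the paper: both the monoidality and the anti-involutivity of $\zeta$ are reduced, via Lemma \ref{lem:dualofnattr} applied to the monoidal natural isomorphism $\phi\colon F\circ d_1 \Rightarrow d_2\circ F$, to recognizing each composite in the relevant square as the uniqueness-of-duals isomorphism (equivalently, the duality-preservation data of $F\circ d_1$ and $d_2\circ F$), so that Proposition \ref{prop:dualsuniqeu} forces agreement. The only difference is presentational: you spell out the monoidality pasting diagrams which the paper dispatches by citing that uniqueness-of-duals isomorphisms are monoidal.
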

\begin{proof}
    Let $\phi\colon F(dx) \to dF(x)$ denote the monoidal natural isomorphism making $F$ into an anti-involutive functor.
    The data $d(x^*) \cong d(x)^*$ making $(.)^*$ anti-involutive is the isomorphism specifying uniqueness of duals.
    To show $\zeta$ is anti-involutive, we have to show that the diagram of isomorphisms
    \[
    \begin{tikzcd}
        F((dx)^*) \ar[r,"\zeta_{dx}"] & F(dx)^* 
        \\
        F(d(x^*)) \ar[d,"\phi_{x^*}"] \ar[u] & (dF(x))^* \ar[u,"\phi^*_x"]
        \\
        dF(x^*) & d(F(x)^*) \ar[l,"d\zeta_x"] \ar[u]
    \end{tikzcd}
    \]
    commutes.
    This follows from applying Lemma \ref{lem:dualofnattr} to the monoidal natural isomorphism $\phi\colon F \circ d \Rightarrow d \circ F$.
    Indeed, note that the compositions 
    \[
    F(d(x^*)) \cong F((dx)^*) \cong F(dx)^* \quad  \text{  and  } \quad dF(x^*) \cong d(F(x)^*) \cong (dF(x))^*
    \]
    are exactly the isomorphisms that specify how $F \circ d$ respectively $d \circ F$ preserve duals.
    Since natural isomorphisms specifying uniqueness of duals are monoidal, this finishes the proof.
\end{proof}

\begin{lemma}
\label{lem:lemma1}
Let $\mathcal{C}_1, \mathcal{C}_2$ be symmetric monoidal anti-involutive categories and let $P_1$ be a monoidal positivity structure on $\mathcal{C}_1$.
    Let $\zeta\colon G_1 \Rightarrow G_2$ be a monoidal anti-involutive natural isomorphism between symmetric monoidal anti-involutive functors $G_1, G_2\colon \mathcal{C}_1 \to \mathcal{C}_2$.
    Then $G_1(P_1) = G_2(P_1)$ in the closure.
\end{lemma}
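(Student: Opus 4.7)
The plan is to trace how the Hermitian pairing in $P_1$ gets pushed forward by $G_1$ versus $G_2$, and then observe that the natural isomorphism $\zeta$ relates these two pushforwards by a transfer in the sense of Definition \ref{def:transfer}. Since positivity structures are considered up to closure under transfer, this will prove the result. Note that the monoidal hypothesis plays no essential role in this argument and could in principle be dropped; naturality and anti-involutivity of $\zeta$ are what matter.

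Let me spell out the key computation. Let $\phi^i\colon G_i \circ d \Rightarrow d \circ G_i$ denote the anti-involutive data of $G_i$ for $i=1,2$. Fix $(c,h) \in P_1$; by Definition \ref{def:preservepositivity}, the pairing that $G_i$ produces on $G_i(c)$ is
\[
\tilde h_i := \phi^i_c \circ G_i(h)\colon G_i(c) \to d G_i(c).
\]
I claim that $\tilde h_1$ is the transfer of $\tilde h_2$ along the isomorphism $\zeta_c\colon G_1(c) \to G_2(c)$, i.e.\ that $\tilde h_1 = d\zeta_c \circ \tilde h_2 \circ \zeta_c$. To see this, first use the anti-involutivity of $\zeta$ (the defining square in the definition of anti-involutive natural transformation) at the object $c$ to rewrite
\[
\phi^1_c = d\zeta_c \circ \phi^2_c \circ \zeta_{dc}.
\]
Then use naturality of $\zeta$ applied to the morphism $h\colon c \to dc$ to obtain $\zeta_{dc} \circ G_1(h) = G_2(h) \circ \zeta_c$. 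Chaining these identities yields
\[
\tilde h_1 = \phi^1_c \circ G_1(h) = d\zeta_c \circ \phi^2_c \circ \zeta_{dc} \circ G_1(h) = d\zeta_c \circ \phi^2_c \circ G_2(h) \circ \zeta_c = d\zeta_c \circ \tilde h_2 \circ \zeta_c,
\]
as desired.

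From this, every element of $G_1(P_1)$ lies in the closure of $G_2(P_1)$ under transfer, and by symmetry (applying the argument to the anti-involutive inverse of $\zeta$, or equivalently transferring along $\zeta_c^{-1}$) the converse inclusion holds as well. Therefore the closures of $G_1(P_1)$ and $G_2(P_1)$ coincide, which is precisely the statement $G_1(P_1) = G_2(P_1)$ in the closure in the sense of Definition \ref{def:positivitystructure}.

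The only real obstacle is bookkeeping: one must be careful to distinguish the ``raw'' morphism $G_i(h)\colon G_i(c) \to G_i(dc)$ from the Hermitian pairing $\tilde h_i\colon G_i(c) \to dG_i(c)$ it induces via $\phi^i$, and to recognize that the compatibility square defining anti-involutivity of $\zeta$ is exactly the piece of data needed to interchange $\phi^1$ and $\phi^2$ after passing through $\zeta_{dc}$. Once this bookkeeping is done, the computation is a two-line diagram chase.
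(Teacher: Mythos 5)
Your proof is correct and is essentially identical to the paper's: the identity $\tilde h_1 = d\zeta_c \circ \tilde h_2 \circ \zeta_c$ you derive from naturality of $\zeta$ and the anti-involutivity square is exactly the commutative diagram the paper chases, and both arguments then conclude by closure under transfer. Your side remark that monoidality of $\zeta$ is not actually used is also accurate.
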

\begin{proof}
Let $(\zeta_1)_x\colon G_1(dx) \to d G_1(x)$ and $(\zeta_2)_x\colon G_2(dx) \to d G_2(x)$ be the data making $G_1$ and $G_2$ into anti-involutive functors.
Let $(h_1\colon x \to dx) \in P_1$ be a positive pairing.
    The diagram 
    \[
    \begin{tikzcd}
        G_1(x) \ar[r, "\zeta_x"] \ar[d,"G_1(h_1)"] & G_2(x) \ar[d,"G_2(h_1)"]
        \\
        G_1(dx) \ar[r,"\zeta_{dx}"] \ar[d,"(\zeta_1)_x"] & G_2(dx)  \ar[d,"(\zeta_2)_x"]
        \\
        d G_1(x) & dG_2(x) \ar[l,"d\zeta_{x}"]
    \end{tikzcd}
    \]
    commutes: the upper square by naturality of $\zeta$, the lower part by the fact that $\zeta$ is an anti-involutive natural transformation.
    The leftmost vertical composition is by definition the image of $h_1 \in P_1$ under $G_1$ and the rightmost vertical composition is the image of $h_1 \in P_1$ under $G_2$.
    Because $\zeta_x$ is an isomorphism we see that these two images are transfers of each other.
    We see that if $h_2 \in G_2(P)$, then a transfer of it is in $G_1(P)$ and vice-versa.
    By closure of positivity structures, we see that $G_1(P) = G_2(P)$.
\end{proof}

\begin{example}
    Every Hermitian completion is dagger compact.
    In particular, the dagger category $\Herm_\C$ of finite-dimensional Hermitian vector spaces is dagger compact.
\end{example}

\begin{example}
    The dagger category $\Hilb$ of finite-dimensional Hilbert spaces is dagger compact.
\end{example}

\begin{example}
    The oriented bordism category with their dagger given by reversing the orientation and the direction of bordisms is dagger compact.
\end{example}

\begin{example}
    The span category of Example \ref{ex:span} is dagger compact, also see \cite{baezspans}.
\end{example}

For connecting Definition \ref{def:moderndaggercpt} with more common approaches to dagger compactness in the literature \cite{selinger2007positive, cockett2021dagger}, the following results will be convenient.

\begin{lemma}
\label{lem:compactcondition}
Let $\mathcal{D}$ be a symmetric monoidal dagger category with duals and let $(.)^*\colon \mathcal{D} \to \mathcal{D}$ be a dual functor.
        Then for all objects $x \in \mathcal{D}$ and dualities $\ev_x,\coev_x$ on $x$ the diagram 
    \[
    \begin{tikzcd}[column sep = 50]
        1   & x^* \otimes x \ar[d,"\sigma_{x^*,x}"] \ar[l,"\ev_x"]
        \\
        x \otimes x^* \ar[u,"\coev_x^\dagger"] \ar[r, "\id_{x} \otimes \lambda_x"]  & x \otimes x^*
    \end{tikzcd}
    \]
    commutes.
    Here $\lambda$ is the automorphism expressing uniqueness of duals $(.)^{* \dagger} \cong (.)^{\dagger *}$, compare Remark \ref{rem:lambdasubtle}.
\end{lemma}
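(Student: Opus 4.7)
The plan is to verify the algebraic identity
\[
\coev_x^\dagger \;=\; \ev_x \circ \sigma_{x,x^*} \circ (\id_x \otimes \lambda_x) : x \otimes x^* \to 1,
\]
which is what commutativity of the square amounts to after using that $\sigma_{x^*,x}$ and $\id_x \otimes \lambda_x$ are isomorphisms; this recovers the familiar dagger compact relation $\coev_x^\dagger = \ev_x \circ \sigma_{x,x^*}$ in the degenerate case $\lambda_x = \id$ mentioned in Remark~\ref{rem:lambdasubtle}. The strategy is a direct string-diagram calculation based on the snake identity for the duality $(\ev_x,\coev_x)$.

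First I would apply naturality of the symmetric braiding to slide $\id_x \otimes \lambda_x$ past $\sigma_{x,x^*}$, so that the goal becomes $\coev_x^\dagger = \ev_x \circ (\lambda_x \otimes \id_x) \circ \sigma_{x,x^*}$. Next, expand $\lambda_x \otimes \id_x$ using formula~\eqref{eq:canonicaldualautom}, and invoke the interchange law (bifunctoriality of $\otimes$) twice: first to rewrite $\ev_x \circ (\coev_x^\dagger \otimes \id_{x^* \otimes x})$ as the tensor product $\coev_x^\dagger \otimes \ev_x : (x \otimes x^*) \otimes (x^* \otimes x) \to 1$, and then to combine this with the braiding so as to produce $(\coev_x^\dagger \circ \sigma_{x^*,x}) \otimes \ev_x$. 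At this point the whole right-hand side reads as
\[
\bigl((\coev_x^\dagger \circ \sigma_{x^*,x}) \otimes \ev_x\bigr) \circ (\id_{x^*} \otimes \coev_x \otimes \id_x) \circ \sigma_{x,x^*}.
\]

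The key step is then to observe that the middle composition collapses by a snake identity sandwiched inside $\id_{x^*} \otimes -$:
\[
(\id_{x^* \otimes x} \otimes \ev_x) \circ (\id_{x^*} \otimes \coev_x \otimes \id_x) \;=\; \id_{x^*} \otimes \bigl((\id_x \otimes \ev_x) \circ (\coev_x \otimes \id_x)\bigr) \;=\; \id_{x^* \otimes x}.
\]
After this collapse, the remaining expression is $\coev_x^\dagger \circ \sigma_{x^*,x} \circ \sigma_{x,x^*} = \coev_x^\dagger$, using $\sigma_{x^*,x} \circ \sigma_{x,x^*} = \id_{x \otimes x^*}$ in a symmetric monoidal category. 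I do not anticipate any serious obstacle; the argument is a routine graphical computation, and the main subtlety is merely bookkeeping — carefully matching sources and targets of the multi-factor tensor expressions so that interchange and naturality are applied with the correct bracketings.
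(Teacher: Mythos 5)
Your proof is correct and takes essentially the same route as the paper's: you expand $\lambda_x$ via \eqref{eq:canonicaldualautom} and then conclude using the interchange law, the triangle identity $(\id_x\otimes\ev_x)\circ(\coev_x\otimes\id_x)=\id_x$, and symmetry of the braiding, which are exactly the ingredients of the paper's argument. The only difference is presentational — the paper packages these moves into one pasted diagram (left region by triangle identity and symmetry, right region by interchange), while you perform the same rewrites linearly, with the naturality-of-braiding step made explicit.
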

\begin{proof}
    By the explicit expression \eqref{eq:canonicaldualautom} for $\lambda$, we see that we have to show the diagram
    \[
    \begin{tikzcd}[column sep = 50]
        x \otimes x^* \ar[rr,"\coev_x^\dagger"] \ar[d,"\sigma_{x,x^*}", swap] & & 1
        \\
        x^* \otimes x \ar[d,"\id_{x^*} \otimes \coev_x \otimes \id_x", swap] &  x^* \otimes x \ar[lu,"\sigma_{x^*,x}"] & x^* \otimes x \ar[u,"\ev_x"]
        \\
        x^* \otimes x \otimes x^* \otimes x \ar[rr,"{\sigma_{x^*,x} \otimes \id_{x^* \otimes x}}", swap] \ar[ru,"\id_{x^* \otimes x} \otimes \ev_x", swap]  && x \otimes x^* \otimes x^* \otimes x \ar[u,"\coev^\dagger_x \otimes \id_{x^* \otimes x}",swap]
    \end{tikzcd}
    \]
    commutes.
    The left part commutes by the triangle identity and the fact that the braiding is symmetric. 
    The right part commutes by the interchange law.
\end{proof}

\begin{proposition}
\label{prop:daggercptTFAE}
The following are equivalent for a symmetric monoidal dagger category $\mathcal{D}$ with duals:
\begin{enumerate}[(1)]
    \item $\mathcal{D}$ is dagger compact;
    \item Let $\lambda$ be the isomorphism expressing uniqueness of duals corresponding to the standard dual functor on $\Herm \mathcal{D}$.
    Then for all $x \in \mathcal{D} \subseteq \Herm \mathcal{D}$, $\lambda_x$ is iso-positive;
    \item every object $x \in \mathcal{D}$ admits a dual $(\ev_x\colon x^* \otimes x \to 1,\coev_x\colon 1 \to x \otimes x^*)$ such that the diagram
    \begin{equation}
    \label{eq:classicaldaggercpt}
    \begin{tikzcd}[column sep = 50]
        1   & x^* \otimes x \ar[dl,"\sigma_{x^*,x}"] \ar[l,"\ev_x"]
        \\
        x \otimes x^* \ar[u,"\coev_x^\dagger"]   &
    \end{tikzcd}
    \end{equation}
    commutes;
    %\item $\mathcal{D}$ admits a unitary dual functor such that $\lambda_x$ is the identity for all $x \in \mathcal{D}$.
\end{enumerate}
\end{proposition}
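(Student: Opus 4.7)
The key linking observation is Lemma \ref{lem:compactcondition}: for any chosen dual $(\ev_x,\coev_x)$ of $x$, the identity $\ev_x = \coev_x^\dagger \circ (\id_x \otimes \lambda_x) \circ \sigma_{x^*, x}$ holds. In particular, the diagram \eqref{eq:classicaldaggercpt} commutes for a specific choice of dual precisely when $\lambda_x = \id_{x^*}$ for that choice. The plan is therefore to establish (2) $\Leftrightarrow$ (3) by analysing how $\lambda$ changes with the choice of dual, and then (1) $\Leftrightarrow$ (2) by computing the standard unitary dual functor on the canonical presentation $\mathcal{D} \cong \mathcal{D}_{\Pos(\mathcal{D})}$ from Example \ref{ex:notionofposofdaggercat}.

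For (2) $\Leftrightarrow$ (3), the main step is to derive the transformation of $\lambda$ under a change of dual. Given two duals of $x$ identified by an iso $\alpha\colon x^* \to x'$ with $\ev' \circ (\alpha \otimes \id_x) = \ev$ and $\coev' = (\id_x \otimes \alpha) \circ \coev$, applying Lemma \ref{lem:compactcondition} to each dual and using naturality of the braiding should yield $\lambda'_x = (\alpha^\dagger)^{-1} \circ \lambda_x \circ \alpha^{-1}$. Then (3) $\Rightarrow$ (2): starting from a dual with $\lambda_x = \id_{x^*}$, the standard one satisfies $\lambda'_x = (\alpha^{-1})^\dagger \circ \alpha^{-1}$, which is iso-positive by construction. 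Conversely (2) $\Rightarrow$ (3): writing $\lambda_x = g^\dagger g$ for some iso $g\colon x^* \to y$ and changing the dual along $\alpha = g$ gives $\lambda'_x = \id_y$, providing a dual satisfying (3).

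For (1) $\Leftrightarrow$ (2), I use the embedding $\mathcal{D} \hookrightarrow \Herm \mathcal{D}$ sending $x \mapsto (x, \id_x)$, under which $\mathcal{D}$ is identified with $\mathcal{D}_{\Pos(\mathcal{D})}$. Unraveling Definition \ref{def:dualhermstr} with the anti-involutive data of Lemma \ref{lem:canonicalcandidateUDF}, the standard unitary dual functor sends $(x, \id_x)$ to $(x^*, \lambda_x^{-1})$, since $\id_x^{*-1} = \id_{x^*}$ and the anti-involutive coherence iso $(x^\dagger)^* \cong (x^*)^\dagger$ is $\lambda_x^{-1}$ in the conventions of Remark \ref{rem:lambdasubtle}. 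Dagger compactness asks that the standard dual functor restricts to $\mathcal{D}_{\Pos(\mathcal{D})}$; since every object of $\mathcal{D}_{\Pos(\mathcal{D})}$ is unitarily isomorphic to some $(y, \id_y)$ via the witnessing iso $g\colon x \to y$ with $h = g^\dagger g$, and dagger functors preserve unitary isomorphism, this restriction holds iff $\lambda_x^{-1}$ is iso-positive for all $x$, equivalently (as iso-positives are closed under inverse) iff $\lambda_x$ is iso-positive, which is (2). The main obstacle will be the tensor-factor bookkeeping in deriving the transformation formula for $\lambda$ and keeping track of the direction conventions between $\alpha^\dagger$ and $\alpha^{-1}$.
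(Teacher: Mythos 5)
Your overall route is the same as the paper's: you reduce (1)$\Leftrightarrow$(2) via the canonical presentation $\mathcal{D}\simeq\mathcal{D}_{\Pos(\mathcal{D})}$ and the fact that every $(x,h)$ with $h=g^\dagger g$ is unitarily isomorphic to $(y,\id_y)$ (this is exactly the paper's appeal to unitary essential surjectivity of $\mathcal{D}\hookrightarrow\mathcal{D}_{\Pos(\mathcal{D})}$, together with closure of $\Pos(\mathcal{D})$ under transfer), and you relate (2) and (3) by changing the chosen dual of $x$. The gap is in the key computational step you yourself flagged: the change-of-dual formula for $\lambda$ is wrong. With your conventions ($\ev'\circ(\alpha\otimes\id_x)=\ev_x$, $\coev'=(\id_x\otimes\alpha)\circ\coev_x$), substituting $\coev'$ into \eqref{eq:canonicaldualautom} and using naturality of the braiding and the interchange law gives
\[
\lambda'_x \;=\; \alpha\circ\lambda_x\circ\alpha^\dagger,
\]
not $(\alpha^\dagger)^{-1}\circ\lambda_x\circ\alpha^{-1}$. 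Sanity check in $\Hilb$: rescale the standard self-duality of $\C$ by $\alpha=c\in\C^\times$, so $\coev'(1)=c$; then \eqref{eq:canonicaldualautom} gives $\lambda'=|c|^2$, consistent with $\alpha\lambda_x\alpha^\dagger$ and not with $(\alpha^\dagger)^{-1}\lambda_x\alpha^{-1}=|c|^{-2}$.

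Consequently your (3)$\Rightarrow$(2) still goes through (you get $\lambda'_x=\alpha\alpha^\dagger$ rather than $(\alpha^{-1})^\dagger\alpha^{-1}$, which is still iso-positive since positivity can equivalently be written with the dagger on either factor), but your (2)$\Rightarrow$(3) fails as stated: with $\lambda_x=g^\dagger g$ and $\alpha=g$ the correct formula yields $\lambda'_x=(gg^\dagger)^2$, not the identity. The repair is a one-line change: take $\alpha=(g^\dagger)^{-1}$ (equivalently write $\lambda_x=hh^\dagger$ and transfer along $h^{-1}$). This is essentially what the paper does: it picks $f$ with $\lambda_x^{-1}=f^\dagger f$, makes $x'$ a dual of $x$ via $\ev_x\circ(f^{-1}\otimes\id_x)$ and $(\id_x\otimes f)\circ\coev_x$, and verifies \eqref{eq:classicaldaggercpt} directly from Lemma \ref{lem:compactcondition} and naturality of the braiding; for (3)$\Rightarrow$(2) it avoids any transformation formula by substituting $\coev_x^\dagger=\ev_x\circ\sigma_{x^*,x}^{-1}$ into \eqref{eq:canonicaldualautom} and reading off $\lambda_x=\id$. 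Your (1)$\Leftrightarrow$(2) paragraph matches the paper's argument and is fine, modulo the harmless $\lambda_x$ versus $\lambda_x^{-1}$ convention (iso-positives are closed under inverses).
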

\begin{proof}
Recall that $\mathcal{D}$ is $\dagger$-equivalent to the symmetric monoidal dagger category $\mathcal{D}_{\Pos}$ explained in Example \ref{ex:notionofposofdaggercat}, which in our case is symmetric monoidal.
    We have fully faithful inclusions of symmetric monoidal dagger categories
    \[
    \mathcal{D} \hookrightarrow \mathcal{D}_{\Pos} \hookrightarrow \Herm \mathcal{D}.
    \]
    Let $\ev_x\colon x^* \otimes x \to 1$ denote arbitrary choices of duals.
    Explicitly writing out the definition of $P^*$ for the special case where $d$ is $\dagger$ gives us that $P^* \subseteq P$ if and only if for all $h\colon  x \to x$ iso-positive, the isomorphism 
    \[
    x^* \xrightarrow{h^{*-1}} x^* \xrightarrow{\lambda_x} x^*
    \]
    is iso-positive.
    The reader should be warned at this point that $\lambda_x$ is computed here with respect to the dagger structure on $\Herm \mathcal{D}$ and in particular depends on the chosen object $h\colon x \to dx$ of $\Herm \mathcal{D}$.
    However, since $\mathcal{D} \hookrightarrow \mathcal{D}_{\Pos}$ is unitarily essentially surjective, it suffices to check this condition only for $h = \id_x$.
    We obtain $(1) \iff (2)$.

    For $(2) \implies (3)$, first assume that $\lambda_x$ is iso-positive for all $x$.
    Let $x \in \mathcal{D}$ and let $f\colon  x^* \to x'$ be an isomorphism such that $\lambda_x^{-1} = f^\dagger f$. 
    Realize $x'$ as the dual of $x$ by
    \begin{align*}
    x' \otimes x \xrightarrow{f^{-1} \otimes \id_x} x^* \otimes x \xrightarrow{\ev_x} 1
    \\
    1 \xrightarrow{\coev_x} x \otimes x^* \xrightarrow{\id_x \otimes f} x \otimes x'.
    \end{align*}
    It is straightforward to check the triangle identities.
    Now note that the diagram
    \[
    \begin{tikzcd}[column sep =35]
        1 & x^* \otimes x \ar[l,"\ev_x"] \ar[d,"\sigma_{x^*,x}", swap] 
        \\
        x \otimes x^* \ar[u,"\coev_x^\dagger"] \ar[r,"\id_x \otimes \lambda_x"]  & x \otimes x^*  \ar[dl,"\id_x \otimes f"] 
        \\
        x \otimes x' \ar[u,"\id_x \otimes f^\dagger"] \ar[r,"\sigma_{x,x'}",swap] & x' \otimes x \ar[uu, bend right=60,"f^{-1} \otimes \id_x", swap]
    \end{tikzcd}
    \]
    commutes using Lemma \ref{lem:compactcondition} and naturality of the braiding.
    It follows that the duality between $x'$ and $x$ satisfies condition $(3)$.
    Conversely, suppose we have chosen dualities on all objects satisfying $(3)$.
    Replacing the $\coev_x^\dagger$ in formula \eqref{eq:canonicaldualautom} by $\ev_x \circ \sigma_{x^*,x}^{-1}$ and using the triangle identities yields $\lambda_x = \id_x$.
    Since the identity is iso-positive, we obtain (2).
\end{proof}

\begin{remark}
    Condition \eqref{eq:classicaldaggercpt} is sometimes stated with the dagger on the evaluation map instead.
    This is equivalent, as can be seen by taking the dagger of the diagram and using that the braiding is unitary.
\end{remark}

\begin{remark}
Keeping the work of Penneys in mind \cite{penneysUDF}, it can be opaque whether a definition in a symmetric monoidal dagger category involving a unitary dual functor depends on the choice of duals.
However, according to Definition \ref{def:moderndaggercpt}, it is a property of a symmetric monoidal dagger category to be dagger compact.
\end{remark}

\section{Fermionic dagger categories}
\label{sec:fermdagger}
\subsection{Super Hilbert spaces}
\label{sec:shilb}

Finite-dimensional super Hilbert spaces form one of the most physically relevant examples of dagger categories with duals.
Most of this section is well known, our main new insight being that super Hilbert spaces are `dagger compact up to fermion parity'.
We will call this notion `fermionically dagger compact', see Definition \ref{def:fermdagger}.
%Other symmetric monoidal dagger categories of `fermionic nature' are also typically fermionically dagger compact, such as the spin bordism dagger category.

Recall that a super vector space is defined to be a $\Z/2$-graded (complex) vector space $V = V_0 \oplus V_1$.
We will denote the degree of a homogeneous vector $v \in V$ by $|v| \in \Z/2$ and the grading operator $v \mapsto (-1)^{|v|} v$ by $(-1)^F_V$, also called \emph{fermion parity}.
We will often refer to degree zero vectors as even and degree one vectors as odd.
A linear map $T: V \to W$ between super vector spaces is \emph{even} (or of degree zero) if $T(V_0) \subseteq W_0$ and $T(V_1) \subseteq W_1$.
Similarly, $T$ is \emph{odd} (or of degree one) if $T(V_0) \subseteq W_1$ and $T(V_1) \subseteq W_0$.
We denote the category of finite-dimensional super vector spaces with even linear maps by $\sVect$.
This category is symmetric monoidal with the obvious tensor product and the braiding
\[
v \otimes w \mapsto (-1)^{|v||w|} w \otimes v
\]
given by the Koszul sign rule, motivated by fermions in physics.

Similarly to finite-dimensional vector spaces, this category comes equipped with a canonical symmetric monoidal anti-involution $dV := \ol{V}^*$.
Here the dual $V^* = \Hom_{\Vect}(V,\C)$ of a super vector space $V$ is the vector space of all linear maps.
A functional $f \in V^*$ is of degree zero/one if the corresponding linear map $V \to \C$ is of degree zero/one, where $\C$ is put in even degree.
The dual functor on $\sVect$ associated to this choice of dual is given by
\begin{equation}
    \label{eq:dualoflinearmap}
T^*(f)(v) = f(Tv) \quad T\colon  V \to W, f \in W^*.
\end{equation}
We will identify $\ol{V}^*$ and $\ol{V^*}$ by defining $\ol{f}(\ol{v}) = \ol{f(v)}$.
We choose the monoidal data $\chi\colon  \ol{V}^* \otimes \ol{W}^* \cong \ol{V \otimes W}^*$ given by
\[
\chi(\ol{f} \otimes \ol{g})(\ol{v} \otimes \ol{w}) = (-1)^{|g||v|} \ol{f(v) g(w)} \quad f \in V^*, g \in W^*, v \in V, w \in W.
\]
The canonical isomorphism $\eta_V \colon  V \to d^2 V$ is explicitly given by mapping $v \in V$ to $\Phi_v$, which evaluates functionals at $v$:
\[
\Phi_v(f) := (-1)^{|f||v|} f(v).
\]
For more motivation and details on these sign choices, we refer the reader to Appendix \ref{app:boringsign}.

\begin{proposition}
\label{prop:sHerm}
    The Hermitian completion of the symmetric monoidal anti-involutive category $\sVect$ is the symmetric monoidal $\dagger$-category in which objects are pairs $(V,\langle.,.\rangle)$ consisting of a super vector space and a nondegenerate sesquilinear pairing
    \[
    \langle .,. \rangle\colon  V \times V \to \C
    \]
    such that $V_0$ and $V_1$ are orthogonal and
    \begin{equation}
    \label{eq:shermpairing}
    \langle v,w \rangle = (-1)^{|v||w|} \ol{\langle w,v \rangle}
    \end{equation}
    for all homogeneous $v,w \in V$.
    Morphisms between super vector spaces equipped with such Hermitian pairings $\langle .,. \rangle_V$ and $\langle .,. \rangle_W$ are even linear maps $T\colon  V \to W$.
    The dagger of such a linear map is the unique operator $T^\dagger\colon  W \to V$ such that
    \[
\langle Tv, w \rangle_{W} = \langle v, T^\dagger w \rangle_{V}
\]
for all $v \in V$ and $w \in W$.
The tensor product of Hermitian pairings $\langle .,. \rangle_V$ and $\langle .,. \rangle_W$ on super vector spaces $V$ and $W$ is the Hermitian pairing
    \[
    \langle v_1 \otimes w_1, v_2 \otimes w_2 \rangle := (-1)^{|v_2||w_1|} \langle v_1, v_2 \rangle_V \langle w_1, w_2 \rangle_W.
    \]
\end{proposition}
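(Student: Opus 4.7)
The plan is to unpack Definition \ref{def:hermstr} of a Hermitian pairing in the specific anti-involutive category $(\sVect, d = \ol{(.)}^*, \eta, \chi)$ described in the preceding paragraphs and translate each piece of data into the claimed sesquilinear picture, mimicking Example \ref{ex:vectherm} with careful bookkeeping of Koszul signs.

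First, I would set up the dictionary between morphisms $h\colon V \to \ol V^*$ in $\sVect$ and sesquilinear pairings on $V$. Using the identification $\ol V^* \cong \ol{V^*}$, define
\[
\langle v, w\rangle := h(v)(\bar w),
\]
which is linear in $v$, antilinear in $w$, and nondegenerate precisely when $h$ is invertible. Because morphisms in $\sVect$ are even linear maps, $h$ must satisfy $h(V_i) \subseteq \ol{V_i}^*$, and since $\ol{V_i}^*$ annihilates $\ol{V_{1-i}}$, this immediately yields the orthogonality of $V_0$ and $V_1$. Next I would translate the Hermiticity axiom $h = dh \circ \eta_V$: evaluating both sides on a homogeneous $v$, using $\eta_V(v) = \Phi_v$ with $\Phi_v(f) = (-1)^{|f||v|} f(v)$, and using that $d = \ol{(.)}^*$ introduces a complex conjugate on the values of the functional, I expect the diagram to collapse to exactly
\[
\langle v,w\rangle = (-1)^{|v||w|} \ol{\langle w,v\rangle}
\]
for homogeneous $v,w$. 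The degrees of $h(v)$ and of the relevant functionals match $|v|$ by the evenness of $h$, which is what produces the correct sign factor.

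The statement about morphisms is immediate from Definition \ref{def:Hermitian completion}: morphisms in $\Herm\mathcal{C}$ are just underlying morphisms in $\mathcal{C}$, hence even linear maps in our case. For the dagger, I would apply the formula \eqref{eq:adjoint}, namely $T^\dagger = h_V^{-1} \circ dT \circ h_W$, and unpack each arrow. Using \eqref{eq:dualoflinearmap} for $dT = \ol{T}^*$ and the defining identity $\langle v,w\rangle_V = h_V(v)(\bar w)$, the composition rearranges to the equation
\[
\langle Tv, w\rangle_W = \langle v, T^\dagger w\rangle_V,
\]
which uniquely characterises $T^\dagger$ by nondegeneracy of $\langle\cdot,\cdot\rangle_V$.

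Finally, for the tensor product, I would substitute the given Hermitian pairings $h_V, h_W$ and the explicit $\chi$ into the defining formula \eqref{def:tensorproductofhermstrs} for the tensor product of Hermitian pairings. Evaluating $(\chi \circ (h_V \otimes h_W))(v_1 \otimes w_1)$ on $\ol{v_2 \otimes w_2}$ and applying the stated formula for $\chi$ introduces exactly the sign $(-1)^{|w_1||v_2|}$, giving the claimed tensor product pairing. The main obstacle throughout is simply disciplined sign bookkeeping: all relevant signs must be shown to come solely from the Koszul rule encoded in $\chi$ and from the conjugation-twisted evaluation $\Phi_v$, and one has to verify that choices like the placement of the conjugation in $\ol V^* \cong \ol{V^*}$ do not introduce spurious extra signs. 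Once those identifications are fixed once and for all (as in Appendix \ref{app:boringsign}), each of the four claims becomes a direct computation.
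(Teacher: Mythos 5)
Your proposal is correct and follows essentially the same route as the paper: identify even isomorphisms $h\colon V \to \ol{V}^*$ with nondegenerate sesquilinear pairings making $V_0 \perp V_1$, unpack the Hermiticity condition $dh \circ \eta_V = h$ via $\Phi_v$ to get the Koszul-signed symmetry, read off the dagger from formula \eqref{eq:adjoint}, and compute the tensor product from \eqref{def:tensorproductofhermstrs} using the explicit $\chi$. The paper's proof is exactly this sequence of direct verifications, at a comparable level of detail.
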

\begin{proof}
    A linear isomorphism $h\colon V \cong \ol{V}^*$ is equivalent to a nondegenerate sesquilinear pairing
    \[
    \langle .,. \rangle\colon  V \times V \to \C.
    \]
    The map $h$ preserves the grading if and only if $V_0$ and $V_1$ are orthogonal.
    The diagram
    \begin{equation}
\begin{tikzcd}
V \ar[r,"h"] \ar[d,"\eta_V"] & \ol{V}^* &
\\
\ol{\ol{V}^*}^* \ar[ru,"dh"']& 
\end{tikzcd}
\end{equation}
commutes if and only if \eqref{eq:shermpairing} holds
    for all homogeneous $v,w \in V$.
    The dagger $T^\dagger$ of a linear map $T\colon  V \to W$ between super vector spaces equipped with Hermitian pairings $h_V$ and $h_W$ respectively, is defined to be the composition
    \[
    W \xrightarrow{h_W} \ol{W}^* \xrightarrow{\ol{T}^*} \ol{V}^* \xrightarrow{h_V^{-1}} V.
    \]
    By direct computation one can verify that
    \begin{equation}
    \label{eq:superdagger}
\langle Tv, w \rangle_{W} = \langle v, T^\dagger w \rangle_{V}, 
\end{equation}
for all homogeneous $v \in V$ and $w \in W$.
Writing out the formula \ref{def:tensorproductofhermstrs} for the tensor product gives the monoidal structure on the Hermitian completion.
\end{proof}

\begin{remark}
    For an linear map of odd degree, there would be a Koszul sign in Equation \eqref{eq:superdagger}.
    This is however irrelevant for this paper, as we will only consider $\sVect$ to have even morphisms.
\end{remark}

Note that if $\langle.,.\rangle$ is a Hermitian pairing as in the above proposition and $v \in V$ is homogeneous, then
\[
\langle v,v \rangle = (-1)^{|v|} \overline{ \langle v,v \rangle}
\]
is real when $v$ is even and imaginary when $v$ is odd.
In defining positive definiteness, we decide to call imaginary numbers of the form $a \cdot i$ where $a \in \R_{>0}$ positive:

\begin{definition}
\label{def:shilbspace}
A Hermitian pairing on $V \in \sVect$ is \emph{positive (definite)} if for all even $v \in V$
\[
\langle v,v \rangle \geq 0
\]
and for all odd $v \in V$
\[
 \frac{\langle v,v \rangle}{i} \geq 0.
\]
We also say that $(V,h)$ is a (finite-dimensional) \emph{super Hilbert space}.
Let $\sHilb \subseteq \Herm(\sVect)$ be the symmetric monoidal dagger category on the monoidal positivity structure given by the super Hilbert spaces.
\end{definition}

\begin{remark}
    Somewhat surprisingly, a computation shows that the tensor product of two super Hilbert spaces is again a super Hilbert space.
    In other words, super Hilbert spaces indeed form a monoidal positivity structure.
\end{remark}

\begin{remark}
    We refer readers that are uneasy with the signs in Equation \eqref{eq:shermpairing} and the above discussion to Appendix \ref{app:boringsign}.
Our convention \ref{def:shilbspace} for super Hilbert spaces is not standard in the literature, however see \cite[Sign manifesto]{deligne1999quantum}, \cite[Section 12.5]{moore2014quantum} and \cite{freed1999five}.

\end{remark}

Let $\{e_1, \dots, e_n\} \subset V$ be a basis of homogeneous elements ordered by degree.
We can decompose a Hermitian pairing $h \colon V \to \ol{V}^*$ in components as $h(e_i)(\bar{e}_j) = h_{ij}$ so that $h_{ji} = (-1)^{|e_i||e_j|} \overline{h_{ij}}$. 
Note that $h_{ij} = 0$ if $e_i$ and $e_j$ are of different degree.
In matrix notation
\[
h =
\begin{pmatrix}
A & 0 \\
0 & B
\end{pmatrix},
\]
where $A$ is self-adjoint and $B$ is skew-adjoint.
The pairing $h$ is positive if and only if $A$ and $\frac{B}{i}$ are positive matrices in the ordinary sense.
We can diagonalize these matrices by a Gram-Schmidt process:

\begin{definition}
\label{def:signature}
An \emph{orthonormal basis} with respect to a Hermitian pairing on $V$ is a homogeneous basis $(e_1, \dots, e_n) \subseteq V$ such that 
\[
\langle e_i, e_j \rangle = \delta_{i j} \langle e_j,e_j \rangle,
\]
where $\langle e_j,e_j \rangle \in \{\pm 1, \pm i\}$ will be called the \emph{norm} of the orthonormal vector.
The quadruple $(p_1,p_2,p_3,p_4)$ containing the number of $j$s for which the norm $\langle e_j,e_j \rangle$ are respectively $+1,-1,+i$ and $-i$ is called the \emph{signature} of the pairing.
\end{definition}

It is easy to show the signature is well-defined.
Orthonormal bases are convenient for computations: the dagger of a matrix is the usual conjugate transpose whenever the norms in the domain and codomain agree, and minus the conjugate transpose whenever the norms disagree.

%The following proposition is convenient for computations, and relates the super adjoint to the ungraded adjoint.

% \begin{proposition}
% \label{prop:daggerincoordinates} 
% Let $(V,\langle.,. \rangle_V)$ and $(W, \langle.,.\rangle_W)$ be super Hermitian vector spaces and let $\{e_1, \dots, e_n\}$ and $\{f_1, \dots, f_m\}$ be orthonormal bases.
% Denote elementary matrices by $E^i_\alpha$ for $i = 1, \dots, n$ and $\alpha=1,\dots, m$, so $E^i_\alpha e_j = \delta^i_j f_\alpha$.
% Then $(E^i_\alpha)^\dagger = p_{i,\alpha} E^\alpha_i$ where $p_{i,\alpha} \in \{\pm 1, \pm i\}$.
% Moreover, $p_{i,\alpha}$ is real if and only if $e_i$ and $f_\alpha$ have the same degree and $p_{i,\alpha}$ has a $+$-sign if and only if 
% $\langle e_i, e_i \rangle = \langle f_\alpha, f_\alpha \rangle$.
% \end{proposition}

We discuss positivity structures on $\mathcal{C} = \sVect$.
We have an isomorphism of monoids $\pi_0(\mathcal{C}) \cong \mathbb{N} \times \mathbb{N}$ given by the dimensions of the even and odd parts respectively. 
Super Hermitian vector spaces are unitarily equivalent if and only if they have the same signature in the sense of Definition \ref{def:signature} and so we obtain $\pi_0^h \mathcal{C} \cong \mathbb{N} \times \mathbb{N} \times \mathbb{N} \times \mathbb{N}$. 
The map $\pi_0^h \mathcal{C} \to \pi_0 \mathcal{C}$ is given by
\[
\mathbb{N} \times \mathbb{N} \times \mathbb{N} \times \mathbb{N} \to \mathbb{N} \times \mathbb{N} \quad (p_1, p_2, p_3, p_4) \mapsto (p_1 + p_2, p_3 + p_4).
\]

\begin{remark}
Similarly to Example \ref{ex:hermstrsonsvec}, there are many pathological monoidal positivity structures on $\sVect$ and many of them are even minimal.
However, if we restrict to those notions that are preserved under direct sum, only very reasonable positivity structures such as super Hilbert spaces remain.
This might give a slightly more reasonable notion of `positivity structure' in the $\C$-linear setting, see \cite{henriques2017bicommutant} for a different approach to positivity.
Namely, the symmetric monoidal dagger subcategory of $\Herm(\sVect)$ is fixed once we decide on which Hermitian pairings we allow on the even and the odd line.
In particular, the $\dagger$-equivalence class of such a dagger category is determined by specifying whether we allow positive definite or negative definite pairings on the even part and odd part separately.
Therefore, there are two nontrivial symmetric monoidal full dagger subcategories $\mathcal{C} \subseteq \Herm(\sVect)$ closed under direct sum up to $\dagger$-equivalences $\mathcal{C} \cong \mathcal{C}'$ making the triangle commute.
They are given by $\sHilb$ and $\sHilb_{odd-neg}$, the symmetric monoidal dagger category of super Hermitian vector spaces of which the even part is positive definite and the odd part is negative definite.
Note that there is still a symmetric monoidal $\dagger$-equivalence $\sHilb \cong \sHilb_{odd-neg}$.
This equivalence covers the identity functor $F\colon  \sVect \to \sVect$ equipped with the anti-involutive datum $Fd \cong dF$ given by the fermion parity operator.
\end{remark}

One important datum that categories of fermionic nature come equipped with is a $B\Z/2$-action, which we suggestively denote $(-1)^F$.

\begin{definition}
\label{def:antiinvBAaction}
Let $A$ be an abelian group and $\mathcal{C}$ a symmetric monoidal category.
A \emph{$BA$-action} on $\mathcal{C}$ consists of a collection of monoidal natural automorphisms of the identity $a_c \in \Aut c$ for all $a \in A$ that satisfy $a_c \circ a_c' = (a a')_c$ for all $c \in \mathcal{C}$ and $a,a' \in A$.
If $\mathcal{C}$ is a symmetric monoidal anti-involutive category, the $BA$-action is \emph{anti-involutive} if $a_\bullet$ is an anti-involutive natural automorphism for all $a \in A$.
In other words, $(a^{-1})_{dc} = da_c$ for all $a \in A$ and $c \in \obj \mathcal{C}$.
\end{definition}

    In particular, a \emph{$B\Z/2$-action} is a monoidal natural involution $(-1)^F\colon  \id_\mathcal{C} \Rightarrow \id_{\mathcal{C}}$.
    Note that because $(-1)^F$ is both anti-involutive and an involution, it is also self-adjoint in the sense that $d(-1)^F_x = (-1)^F_{dx}$.

    \begin{example}
    $\sVect$ has a canonical symmetric monoidal $B\Z/2$-action given by mapping a vector space $V$ to the grading operator $(-1)^F_V\colon  V \to V$.
    It is anti-involutive for $V \mapsto \ol{V}^*$ because $\ol{(-1)^F_V} = (-1)^F_{\ol{V}}$.
\end{example}

If $\mathcal{C}$ is a symmetric monoidal anti-involutive category with monoidal positivity structure $P$ and anti-involutive $B\Z/2$-action $(-1)^F$, we denote by $\tilde{P}$ the collection of compositions
\[
x \xrightarrow{(-1)^F_x} x \xrightarrow{h} dx
\]
such that $h \in P$.
This composition is again a Hermitian pairing, because $(-1)^F_x$ is self-adjoint and natural.
We will now show that the dual involution of monoids
\[
(.)^*\colon  \pi_0^h(\sVect) \to \pi_0^h(\sVect)
\]
is given by $P \mapsto \tilde{P}$.

\begin{lemma}
\label{lem:hermstratdual}
Let $(V,h) \in \Herm(\sVect)$ be a super Hermitian vector space with signature $(p_1,p_2,p_3,p_4)$. 
Then the dual super Hermitian vector space $(h^*)^{-1}\colon  \overline{V^*} \to V^{**}$ has signature $(p_1,p_2,p_4,p_3)$.
\end{lemma}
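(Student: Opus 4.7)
The strategy is to diagonalize $h$ in an orthonormal basis of $V$ and directly compute the dual Hermitian pairing on the dual basis of $V^*$; the signature statement will then reduce to the elementary observation that inversion in $\mathbb{C}^\times$ fixes $\pm 1$ but swaps $\pm i$.

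First, I would fix a homogeneous orthonormal basis $\{e_j\}$ of $V$ with norms $\epsilon_j = \langle e_j, e_j\rangle \in \{\pm 1, \pm i\}$. Writing $\ol{e_j}^{\,*}$ for the element of $\ol{V}^*$ dual to the basis vector $\ol{e_j}$ of $\ol{V}$, the characterization in Proposition \ref{prop:sHerm} gives $h(e_j) = \epsilon_j \, \ol{e_j}^{\,*}$. Because the linear-algebra dual functor on $\sVect$ preserves the $\Z/2$-grading, the dual basis $\{f_j := e_j^*\}$ of $V^*$ is again homogeneous with $|f_j| = |e_j|$, so only the norms of the $f_j$ with respect to the dual Hermitian pairing need to be identified.

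Second, I would unwind the two-step composition of Definition \ref{def:dualhermstr} applied to $V^*$. The transpose $h^*$ sends a functional $\phi$ to $\phi \circ h$, so the equation $h^*(\phi_j) = f_j$ determines $\phi_j$ uniquely as $\epsilon_j^{-1} \,(\ol{e_j}^{\,*})^*$, where $(\ol{e_k}^{\,*})^*$ denotes the basis of $(\ol{V}^*)^*$ dual to $\{\ol{e_k}^{\,*}\}$. Post-composing with the isomorphism $(\ol{V}^*)^* \cong \ol{V^*}^*$ induced by the canonical map $T\colon \ol{V^*} \to \ol{V}^*$, $T(\ol f)(\ol v) = \ol{f(v)}$, coming from the anti-involutive data of the dual functor in Lemma \ref{lem:canonicalcandidateUDF}, and then evaluating on $\ol{f_k}$, I compute
\[
h'(f_j)(\ol{f_k}) \;=\; \epsilon_j^{-1}\, \ol{f_k(e_j)} \;=\; \epsilon_j^{-1}\,\delta_{jk}.
\]

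Third, translating back to the sesquilinear convention $\langle \phi,\psi\rangle = h'(\phi)(\ol\psi)$ of Proposition \ref{prop:sHerm}, I conclude that $\{f_j\}$ is orthonormal for the dual pairing with norms $\epsilon_j^{-1}$. Now $\epsilon_j \in \{\pm 1\}$ is fixed under inversion, so the counts $p_1$ and $p_2$ of $+1$s and $-1$s among the even basis vectors are unchanged; on the other hand $\epsilon_j \in \{\pm i\}$ satisfies $\epsilon_j^{-1} = -\epsilon_j$, so on the odd basis vectors the counts of $+i$s and $-i$s are interchanged. This yields the claimed signature $(p_1,p_2,p_4,p_3)$.

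The main obstacle I anticipate is the careful bookkeeping of the four canonically-isomorphic-but-distinct spaces $\ol V^*$, $\ol{V^*}$, $(\ol V^*)^*$ and $\ol{V^*}^*$, together with the canonical map $T$ and its transpose; misplacing a single complex conjugate here would produce the wrong sign precisely on the odd sector, which is exactly where the Koszul-sign subtleties live. Once the identification is pinned down, the whole proof reduces to one evaluation per basis vector, and the off-diagonal orthogonality of the $f_j$ is recorded by the same computation.
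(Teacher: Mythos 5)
Your proposal is correct and follows essentially the same route as the paper's proof: diagonalize $h$ in a homogeneous orthonormal basis, observe that the dual pairing takes the value $\epsilon_j^{-1}$ on the dual basis vectors, and conclude since inversion fixes $\pm 1$ but exchanges $\pm i$. If anything, you are more explicit than the paper about the identification $(\ol{V}^*)^* \cong \ol{V^*}^*$ coming from the anti-involutive data, which the paper's proof leaves implicit in writing $[h^{*-1}(f)](g) = [f\circ h^{-1}](g)$.
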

\begin{proof}
Let $\{e_1, \dots, e_n\}$ be an orthonormal basis of $V$.
This induces a dual basis $\{e^1, \dots, e^n\}$ of $V^*$.
Note that
\[
h(e_i) = h(e_i)(e_i) e^i.
\]
Now note that if $f \in \ol{V}^*,g\in V^*$, then by formula \eqref{eq:dualoflinearmap} $h^{*-1}(f) \in V^{**}$ satisfies
\[
[h^{*-1}(f)](g) = [f \circ h^{-1}](g).
\]
In particular, we compute
\[
h^{*-1}(e^i)(e^i) = h(e_i)(e_i)^{-1}.
\]
Since we have $h(e_i,e_i)^{-1} = h(e_i,e_i)$ if $h(e_i,e_i) \in \{\pm 1\}$ and $h(e_i,e_i)^{-1} = - h(e_i,e_i)$ if $h(e_i,e_i) \in \{\pm i\}$, the result follows.
\end{proof}

\begin{corollary}
\label{cor:shilbfermdaggercpt}
    The standard dual functor on $\Herm(\sVect)$ maps $\sHilb \subseteq \Herm(\sVect)$ to $\sHilb_{odd-neg}$, the dagger category of super Hermitian vector spaces of which the even part is positive definite and the odd part is negative definite.
\end{corollary}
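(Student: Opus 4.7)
The plan is to deduce this corollary directly from Lemma \ref{lem:hermstratdual}. Since $\sHilb \subseteq \Herm(\sVect)$ is the full dagger subcategory on objects whose Hermitian pairing is positive definite in the sense of Definition \ref{def:shilbspace}, and since the standard dual functor on $\Herm(\sVect)$ sends $(V,h)$ to $(\overline{V^*}, (h^*)^{-1})$ by Definition \ref{def:dualhermstr}, it suffices to compute the signature of the dual pairing for objects of $\sHilb$.

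First I would unpack Definition \ref{def:shilbspace} in terms of signatures: a super Hermitian vector space $(V,h)$ lies in $\sHilb$ if and only if every orthonormal basis vector in the even part has norm $+1$ and every orthonormal basis vector in the odd part has norm $+i$. In the $(p_1, p_2, p_3, p_4)$ notation of Definition \ref{def:signature}, this is exactly the condition $p_2 = p_4 = 0$, so the signature of a super Hilbert space has the form $(p_1, 0, p_3, 0)$. Analogously, the objects of $\sHilb_{odd-neg}$ are those with signature of the form $(p_1, 0, 0, p_4)$.

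Now I would apply Lemma \ref{lem:hermstratdual}, which says the dual of a pairing with signature $(p_1, p_2, p_3, p_4)$ has signature $(p_1, p_2, p_4, p_3)$. Applied to a super Hilbert space with signature $(p_1, 0, p_3, 0)$, this gives signature $(p_1, 0, 0, p_3)$, which is exactly the signature condition for $\sHilb_{odd-neg}$. This shows the standard dual functor sends objects of $\sHilb$ to objects of $\sHilb_{odd-neg}$, and since the dual functor on the Hermitian completion acts on morphisms by the underlying dual of linear maps in $\sVect$, no further checking is needed on morphisms.

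There is essentially no obstacle here, as all the heavy lifting has been carried out in Lemma \ref{lem:hermstratdual}: the only subtlety is the bookkeeping with the convention that $+i$ is considered positive on the odd part (Definition \ref{def:shilbspace}), which is precisely what causes the dual to swap $p_3$ and $p_4$ and thereby flip the sign convention on the odd part while preserving it on the even part. This is the categorical manifestation of the spin-statistics phenomenon that motivates the rest of the paper.
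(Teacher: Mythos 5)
Your proposal is correct and matches the paper's own (implicit) argument: the corollary is stated as an immediate consequence of Lemma \ref{lem:hermstratdual}, exactly via the signature bookkeeping you carry out, with $\sHilb$ corresponding to signatures $(p_1,0,p_3,0)$ and the dual landing in $(p_1,0,0,p_3)$, i.e.\ in $\sHilb_{odd-neg}$. Your remark that the object-level computation suffices because both subcategories are full dagger subcategories of $\Herm(\sVect)$ is also the right justification for not checking anything on morphisms.
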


We see that $\sHilb$ is not dagger compact.
However, it still admits a unitary dual functor $\sHilb \to \sHilb^{\cop}$ given by twisting the standard unitary dual functor by $(-1)^F$.
Note that $\sHerm$ is dagger compact, but it also has this $(-1)^F$-twisted unitary dual functor. 
These two dual functors on $\sHerm$ are not unitarily equivalent.

\subsection{The Spin-Statistics Theorem}
\label{sec:spinstatistics}

The categories appearing in the study of unitary topological field theories with fermions turn out to have some abstract features in common.
First of all, categories such as $\sHilb$ and $\Bord^{\Spin}_d$ are symmetric monoidal dagger categories with duals and $B\Z/2$-action.
Also, their dagger category structure is most conveniently constructed by choosing Hermitian pairings for an anti-involution.
However, unlike $\Hilb$ and the oriented bordism dagger category, $\sHilb$ and the spin bordism dagger category are not dagger compact.
Namely, let $P$ denote the positivity structure on $\sVect$ so that $\sVect_P = \sHilb$.
Then the Corollary \ref{cor:shilbfermdaggercpt} says that the standard dual functor maps
\[
\sVect_P \to (\sVect_{\tilde{P}})^{\op}.
\]
This turns out to be a feature, not a bug, because it allows us to prove the Spin-Statistics Theorem.

For the rest of this section, $\mathcal{C}$ denotes a symmetric monoidal anti-involutive category with monoidal anti-involutive $B\Z/2$-action $(-1)^F$ and monoidal positivity structure $P$. 
Note that this induces a unitary monoidal $B\Z/2$-action on $\mathcal{C}_P$.
Separately, $\mathcal{D}$ will denote a symmetric monoidal dagger category with monoidal unitary $B\Z/2$-action $(-1)^F$.
We make a $(-1)^F$-twisted analogue of Definition \ref{def:moderndaggercpt}:

\begin{definition}
\label{def:fermdagger}
    The symmetric monoidal dagger category $\mathcal{C}_P$ is called \emph{fermionically dagger compact} (with respect to $(-1)^F$) if it has duals and the canonical dual functor on $\Herm \mathcal{C}$ restricts to a symmetric monoidal dagger functor
    \[
    \mathcal{C}_P \to \mathcal{C}_{\tilde{P}}^{\op}.
    \]
\end{definition}

Concretely, this means that for all $(h\colon x \to dx) \in P$, the composition
\[
        x^* \xrightarrow{(-1)^F_{x^*}} x^* \xrightarrow{h^{*-1}} (dx)^* \cong d(x^*)
        \]
        is again in $P$.
Analogously to Lemma \ref{lem:compactcondition}, it can be shown that being fermionically dagger compact is a well-defined property of a symmetric monoidal dagger category with unitary monoidal $B\Z/2$-action.

\begin{remark}
\label{rem:fermdaggercptclassical}
Similarly to Proposition \ref{prop:daggercptTFAE}, it can be shown that a symmetric monoidal dagger category $\mathcal{D}$ is dagger compact if and only if for every object $x \in \mathcal{D}$, there exists a duality $\ev_x\colon  x^* \otimes x \to 1$ such that
\begin{equation}
\label{eq:doubleofelbow}
    \begin{tikzcd}[column sep = 50]
        1 \ar[d,"\coev_x"] \ar[r,"\ev_x^\dagger"] & x^* \otimes x \ar[d,"\sigma_{x^*,x}"]
        \\
        x \otimes x^* \ar[r, "(-1)^F_x \otimes \id_{x^*}"] & x \otimes x^*
    \end{tikzcd}
    \end{equation}
commutes.
Thinking of $(-1)^F$ as the operation of `rotating by $2\pi$', we obtain a pictorial interpretation of diagram \ref{eq:doubleofelbow} in Figure \ref{fig:fermdagger}.
This string diagram suggests a graphical calculus for fermionically dagger compact categories analogous to Selinger's calculus for dagger compact categories \cite[Theorem 3.11]{selinger2007positive}, also see \cite[Theorem 7.9]{selinger2011survey}.
\end{remark}

\begin{figure}
    \centering
    \caption{String diagram illustration of Definition \ref{def:fermdagger} using Remark \ref{rem:fermdaggercptclassical}. }
    \label{fig:fermdagger}
\begin{tikzpicture}[scale=.2]
		\node  (5) at (1, 7) {};
		\node  (6) at (2, 7) {};
		\node  (7) at (2, 7) {};
		\node  (8) at (2, 6) {};
		\node  (10) at (5, 5) {};
		\node  (11) at (-2, 5) {};
		\node  (12) at (-2, -10) {};
		\node  (13) at (5, -10) {};
		\node  (14) at (-31, 5) {};
		\node  (15) at (-31, -10) {};
		\node  (16) at (-27, -3) {};
		\node  (17) at (-26, -3) {};
		\node  (18) at (-26, -2) {};
		\node  (19) at (-26, -3) {};
		\node  (20) at (-7.75, -2.5) {};
		\node  (21) at (-7.75, -2.5) {};
  \node (22) at (-7,-3) {$=$};
  \node  (23) at (-31, -12) {$x^*$};
  \node  (24) at (-31, 7) {$x$};
  \node (25) at (-4,5) {$x$};
  \node (26) at (-4,-10) {$x^*$};
  \node (27) at (6,10) {$(-1)^F_x$};
		\draw [bend left=15] (10.center) to (8.center);
		\draw [in=135, out=45, looseness=14.25] (6.center) to (5.center);
		\draw [bend right=15] (11.center) to (7.center);
		\draw [bend left=90, looseness=2.25] (10.center) to (13.center);
		\draw (13.center) to (12.center);
		\draw (14.center) to (17.center);
		\draw [bend left=15] (15.center) to (16.center);
		\draw [in=-60, out=45, looseness=78.75,overlay] (18.center) to (19.center);
\end{tikzpicture}
\end{figure}
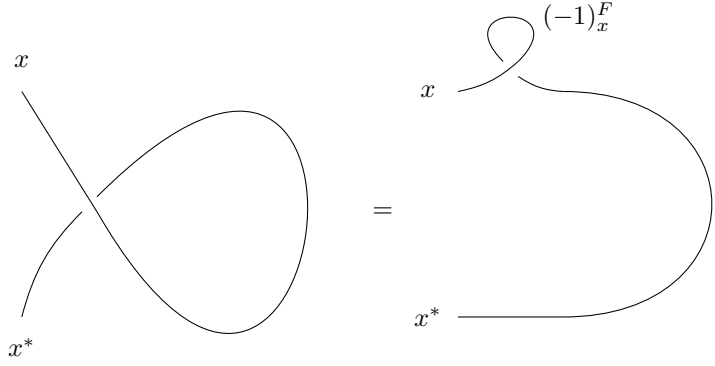

\begin{example}
    Let $\mathcal{D}$ be a symmetric monoidal dagger category, which we equip with the trivial $B\Z/2$-action. 
    Then $\mathcal{D}$ is fermionically dagger compact if and only if it is dagger compact.
\end{example}

\begin{example}
    It follows by Corollary \ref{cor:shilbfermdaggercpt} that $\sHilb$ is fermionically dagger compact.
\end{example}

\begin{remark}
Note that every fermionically dagger compact category $\mathcal{D}$ does have a canonical unitary dual functor 
\[
(.)^*_{(-1)^F}\colon  \mathcal{D} \to \mathcal{D}^{\op},
\]
given by twisting the standard unitary dual functor 
\[
(.)^*\colon  \mathcal{D} \to \mathcal{D}_{\tilde{P}}^{\op}
\]
with $(-1)^F$.
It can explicitly be constructing by picking duals on all objects satisfying \eqref{eq:doubleofelbow}.
\end{remark}

\begin{remark}
Recall that a pivotal structure on a monoidal category $\mathcal{D}$ with duals is a monoidal natural isomorphism $\phi:(.)^{**} \to \id_{\mathcal{D}}$.
Every symmetric monoidal category has a canonical pivotal structure using that $x^{**}$ and $x$ are both dual to $x^*$.
If $\mathcal{D}$ is a monoidal dagger category with a chosen unitary dual functor, the theory of \cite{penneysUDF} gives a corresponding unitary pivotal structure on $\mathcal{D}$.
In case $\mathcal{D}$ is dagger compact and we take the standard unitary dual functor, this will give the canonical pivotal structure $\phi$.
In case $\mathcal{D}$ is fermionically dagger compact and we consider the unitary dual functor of the previous remark, we obtain the pivotal structure $\phi'$ on $\mathcal{D}$ which differs from the pivotal structure $\phi$ with the braiding with $(-1)^F$. 
%Since the pivotal structure induced by the braiding is spherical, it is straightforward to see that $\phi$ is spherical.
For example, if $\mathcal{D} = \sHilb$, the trace with respect to $\phi'$ is the usual ungraded trace
\[
\tr_{\phi'} T = \tr_{\phi} T (-1)^F = \tr_{s} T (-1)^F = \tr_{ungr} T.
\]
\end{remark}

The following lemma will imply our main theorem.

\begin{lemma}
\label{lem:prespinstatistics} 
    Let $\mathcal{D}_1,\mathcal{D}_2$ be fermionically dagger compact categories.
    Let $F\colon (\mathcal{C}_1)_{P_1} \to (\mathcal{C}_2)_{P_2}$ be a symmetric monoidal dagger functor.
    Then $F(\tilde{P_1}) \subseteq \tilde{P_2}$.
\end{lemma}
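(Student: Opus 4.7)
The plan is to combine Lemmas \ref{lem:lemma2} and \ref{lem:lemma1} with fermionic dagger compactness on both $\mathcal{D}_1$ and $\mathcal{D}_2$. Morally, $F$ automatically intertwines the canonical anti-involutive dual functors on $\mathcal{C}_1$ and $\mathcal{C}_2$, and fermionic dagger compactness identifies the effect of dualization on $P_i$ with the effect of the $(-1)^F$-twist, which delivers the claim.

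Concretely, I first use Theorem \ref{th:symmetricjantheorem} to view $F$ as a symmetric monoidal anti-involutive functor $F\colon \mathcal{C}_1 \to \mathcal{C}_2$ sending $P_1$ into $P_2$. By Lemma \ref{lem:lemma2}, the canonical duality-preservation isomorphism $\zeta_x\colon F(x^*) \to F(x)^*$ is a monoidal anti-involutive natural isomorphism between the symmetric monoidal anti-involutive functors $F \circ (.)^*$ and $(.)^* \circ F$ from $\mathcal{C}_1$ to $\mathcal{C}_2^{\cop}$. Applying Lemma \ref{lem:lemma1} with this $\zeta$ and the positivity structure $P_1$ then yields the closure-level equality
\[
F\bigl((P_1)^*\bigr) = \bigl(F(P_1)\bigr)^*
\]
of subsets of $\pi_0^h(\mathcal{C}_2^{\cop})$.

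Next I unpack both sides. Fermionic dagger compactness of $\mathcal{D}_2$ gives $(P_2)^* \subseteq \tilde{P_2}$, and since $F$ preserves positivity this bounds the right-hand side: $\bigl(F(P_1)\bigr)^* \subseteq (P_2)^* \subseteq \tilde{P_2}$. For the left-hand side I need the reverse inclusion $\tilde{P_1} \subseteq (P_1)^*$. Writing $\beta$ for the $(-1)^F$-twist, fermionic dagger compactness of $\mathcal{D}_1$ reads $(P_1)^* \subseteq \beta(P_1) = \tilde{P_1}$; applying the involution $\beta$ to this yields $\beta\bigl((P_1)^*\bigr) \subseteq P_1$, and commuting $(.)^*$ past $\beta$ rewrites the left-hand side as $(\tilde{P_1})^* \subseteq P_1$, which dualizes to $\tilde{P_1} \subseteq (P_1)^*$. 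Chaining everything,
\[
F(\tilde{P_1}) \subseteq F\bigl((P_1)^*\bigr) = \bigl(F(P_1)\bigr)^* \subseteq \tilde{P_2}
\]
holds in closure; since positivity structures are by convention closed, this gives $F(\tilde{P_1}) \subseteq \tilde{P_2}$.

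The only nontrivial new ingredient is the commutativity of $(.)^*$ and $\beta$, which reduces to the identity $((-1)^F_x)^* = (-1)^F_{x^*}$. This is a short diagram chase using that $(-1)^F$ is a monoidal involutive natural transformation of the identity, so in particular $(-1)^F_1 = \id_1$ and $(-1)^F_{x^* \otimes x} = (-1)^F_{x^*} \otimes (-1)^F_x$; naturality of $(-1)^F$ applied to $\ev_x\colon x^* \otimes x \to 1$ combined with the defining universal property of $f \mapsto f^*$ then pins down the identity. All other steps are direct applications of the symmetric monoidal anti-involutive machinery developed in Section \ref{sec:daggerduality}.
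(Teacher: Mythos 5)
Your proposal is correct and follows essentially the same route as the paper: pass to the anti-involutive picture via Theorem \ref{th:symmetricjantheorem}, use Lemma \ref{lem:lemma2} to make the duality-preservation isomorphism anti-involutive, apply Lemma \ref{lem:lemma1} to obtain $F(P_1^*)=F(P_1)^*$ in the closure, and then feed in fermionic dagger compactness on both sides. The only divergence is minor: where the paper uses the equality $\tilde{P_1}=P_1^*$ directly, you deduce the needed inclusion $\tilde{P_1}\subseteq P_1^*$ from the definitional $P_1^*\subseteq\tilde{P_1}$ by commuting dualization past the $(-1)^F$-twist via the (correct) identity $((-1)^F_x)^*=(-1)^F_{x^*}$, which makes that step more explicit but does not change the argument.
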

\begin{proof}
The argument is analogous to the argument in the proof of Lemma \ref{lem:daggercptwelldefd}. 
The symmetric monoidal dagger functor $F$ corresponds to a symmetric monoidal anti-involutive functor $F\colon  \mathcal{C}_1 \to \mathcal{C}_2$ which maps $P_1$ to $P_2$.
The canonical natural isomorphism $\xi_x\colon  F(x^*) \cong F(x)^*$ specifying uniqueness of duals provides an anti-involutive filling of the square
\[
 \begin{tikzcd}
        \mathcal{C}_1 \ar[r,"F"] \ar[d,"(.)^*"] &  \mathcal{C}_2 \ar[d,"(.)^*"] \ar[dl,Rightarrow, "\xi", shorten <=3, shorten >=3]
        \\
        (\mathcal{C}_1)^{\cop} \ar[r,"F"] &  (\mathcal{C}_2)^{\cop}
    \end{tikzcd}.
\]
by Lemma \ref{lem:lemma2}.
Applying Lemma \ref{lem:lemma1} to the case where $G_1(x) = F(x)^*$ and $G_2(x) = F(x^*)$, we obtain
\[
F(\tilde{P_1}) = F(P_1^*) = F(P_1)^* = \widetilde{F(P_1)}.
\]
Since $F(P_1) \subseteq P_2$, it follows that $F(\tilde{P_1}) \subseteq \tilde{P_2}$.
\end{proof}

\begin{theorem}[Main Theorem]
\label{mainth}
    Let $F\colon  \mathcal{D}_1 \to \mathcal{D}_2$ be a symmetric monoidal dagger functor between fermionically dagger compact categories such that every iso-positive involution in $\mathcal{D}_2$ is the identity.
    Then $F$ is $B\Z/2$-equivariant.
\end{theorem}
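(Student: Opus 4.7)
The plan is to present each $\mathcal{D}_i$ via its canonical positivity structure from Example \ref{ex:notionofposofdaggercat}, so that $\mathcal{C}_i = \mathcal{D}_i$ with $d = \dagger$, $\eta = \id$, and $P_i = \Pos(\mathcal{D}_i)$ consisting of the iso-positive self-adjoint automorphisms. In this presentation, a symmetric monoidal dagger functor $F\colon \mathcal{D}_1 \to \mathcal{D}_2$ is realised as an anti-involutive functor with trivial coherence datum $\phi = \id$, and the monoidal $B\Z/2$-action $(-1)^F$ on each $\mathcal{D}_i$ is automatically anti-involutive (each component is a unitary involution, hence its own dagger). Lemma \ref{lem:prespinstatistics} is therefore applicable and yields $F(\tilde{P}_1) \subseteq \tilde{P}_2$.

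First I would note that, because $\id_x \in \Pos(\mathcal{D}_1)$, the morphism $(-1)^F_x = (-1)^F_x \circ \id_x$ belongs to $\tilde{P}_1$ by the definition of this set. Applying Lemma \ref{lem:prespinstatistics}, and using that in the $\Pos$-presentation the image of a Hermitian pairing under $F$ is simply $F$ applied to the morphism (since $\phi = \id$), I conclude $F((-1)^F_x) \in \tilde{P}_2$. Unpacking the definition $\tilde{P}_2 = \{(-1)^F_y \circ h : h \in \Pos(\mathcal{D}_2)\}$, there must exist $\beta \in \Pos(\mathcal{D}_2)$ with
\[
F((-1)^F_x) = (-1)^F_{F(x)} \circ \beta.
\]
Since $(-1)^F_{F(x)}$ is an involution, this rearranges to $\beta = (-1)^F_{F(x)} \circ F((-1)^F_x)$, which is therefore an iso-positive endomorphism of $F(x)$ in $\mathcal{D}_2$.

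To close the argument I would verify that $\beta$ is itself an involution. Because $(-1)^F$ is a natural automorphism of $\id_{\mathcal{D}_2}$, the component $(-1)^F_{F(x)}$ commutes with every endomorphism of $F(x)$, in particular with $F((-1)^F_x)$; both factors are involutions, so $\beta^2 = \id_{F(x)}$. The hypothesis of the theorem now forces $\beta = \id$, yielding $F((-1)^F_x) = (-1)^F_{F(x)}$ for every object $x$. This is exactly the claim that $F$ intertwines the two $B\Z/2$-actions.

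The main obstacle is conceptual rather than computational: one must recognise that the canonical $\Pos$-presentation, with its trivial anti-involutive coherence, is what makes Lemma \ref{lem:prespinstatistics} produce the two natural candidates $F((-1)^F_x)$ and $(-1)^F_{F(x)}$ as elements of $\tilde{P}_2$ differing by a single iso-positive factor $\beta$. In a more general presentation $\mathcal{C}_P$ one would have to drag the coherence datum $\phi$ and an arbitrary Hermitian pairing $h$ on $x$ through the computation, obscuring what is ultimately a one-line invocation of the iso-positive-involution hypothesis.
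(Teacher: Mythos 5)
Your proposal is correct and follows essentially the same route as the paper: present each $\mathcal{D}_i$ via the canonical positivity structure $\Pos(\mathcal{D}_i)$, apply Lemma \ref{lem:prespinstatistics} to the pairing $h=\id$ to conclude that $(-1)^F_{F(x)}\circ F((-1)^F_x)$ is iso-positive, observe it is an involution by naturality, and invoke the hypothesis to get the identity. The only cosmetic difference is that the paper first states the claim for a general iso-positive $f^\dagger f$ and then specializes to $f=\id$, while you work with $\id$ directly.
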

\begin{proof}
Let $d_1 \in \mathcal{D}_1$ be an object.
We claim that if $f^\dagger f \colon  F(d_1) \to F(d_1)$ is an iso-positive automorphism of an object in the image, then $f^\dagger f \circ (-1)^F_{F(d_1)} F((-1)^F_{d_1})$ is again iso-positive.
    Given the claim, the theorem follows by taking $f = \id_{d_1}$ to find that $j := (-1)^F_{F(d_1)} F((-1)^F_{d_1})$ is iso-positive.
    Indeed, by naturality of $(-1)^F_{\mathcal{D}_1}$, the map $j$ is also an involution, so that $j = \id_{F(d_1)}$ as desired. 
    
    To prove the claim, first consider anti-involutive categories $\mathcal{C}_1$ and $\mathcal{C}_2$ with positivity structures $P_1$ and $P_2$ respectively and let $F\colon  \mathcal{C}_1 \to \mathcal{C}_2$ be an anti-involutive functor that maps $P_1$ to $P_2$.
    We obtain from Lemma \ref{lem:prespinstatistics} that $F(\tilde{P_1}) \subseteq \tilde{P_2}$.
    Explicitly, this means that if $h\colon  c_1 \to dc_1$ is in $P_1$, then we get that
    \[
    F(c_1) \xrightarrow{(-1)^F_{c_1}} F(c_1) \xrightarrow{F((-1)^F_{c_1})} F(c_1) \xrightarrow{F(h)} F(dc_1) \cong dF(c_1)
    \]
    is in $P_2$.
    By naturality, the order of the $B\Z/2$-action on $\mathcal{C}_2$ and the $B\Z/2$-action on $\mathcal{C}_1$ pushed forward under $F$ is irrelevant in the above considerations.
    Taking $\mathcal{C}_i$ to be $\mathcal{D}_i$ and $P_i = \Pos(\mathcal{D}_i)$ the positivity structure from Example \ref{ex:notionofposofdaggercat} using Theorem \ref{th:symmetricjantheorem}, we obtain the claim.
\end{proof}

\begin{remark}
    Note that $\mathcal{D}_2 = \sHerm$ is an example of a fermionically dagger compact category which admits non-identity iso-positive involutions.
    It is straightforward to find examples of symmetric monoidal dagger functors from a fermionically dagger compact category to $\sHerm$ that are not $B\Z/2$-equivariant.
For a physically relevant example, consider the one-dimensional oriented bordism category $\Bord_1^{SO}$ with the trivial $B\Z/2$-action.
The functor $Z\colon  \Bord_1^{SO} \to \sHerm_\C$, which assigns the odd line with the negative definite inner product to a point, assembles into a symmetric monoidal dagger functor.
It is not $B\Z/2$-equivariant, because it does not factor through $\Herm_\C$. 
From the perspective of TQFT, this is a `integer spin fermionic theory', see \cite[Appendix E.1]{TachikawaYonekura} for a $1$-dimensional quantum field theory which recovers this TQFT as its low-energy effective field theory.
    This does not violate the Spin-Statistics Theorem, because this is only a Hermitian, not a unitary TQFT.
\end{remark}

\begin{remark}
\label{rem:braidedfunctorpreservesantiinvdata} 
At first sight, it seems as if we didn't use the fact that $F$ is symmetric in the proof of the Main Theorem.
However, we used the braiding to make the dual functor monoidal, instead of op-monoidal, see the discussion under Proposition \ref{prop:dualsuniqeu}.
Since the canonical anti-involutive data of the dual functor depends on the braiding, it is only preserved by symmetric functors.
\end{remark}

\begin{figure}
\centering
\includegraphics[width=0.6\textwidth]{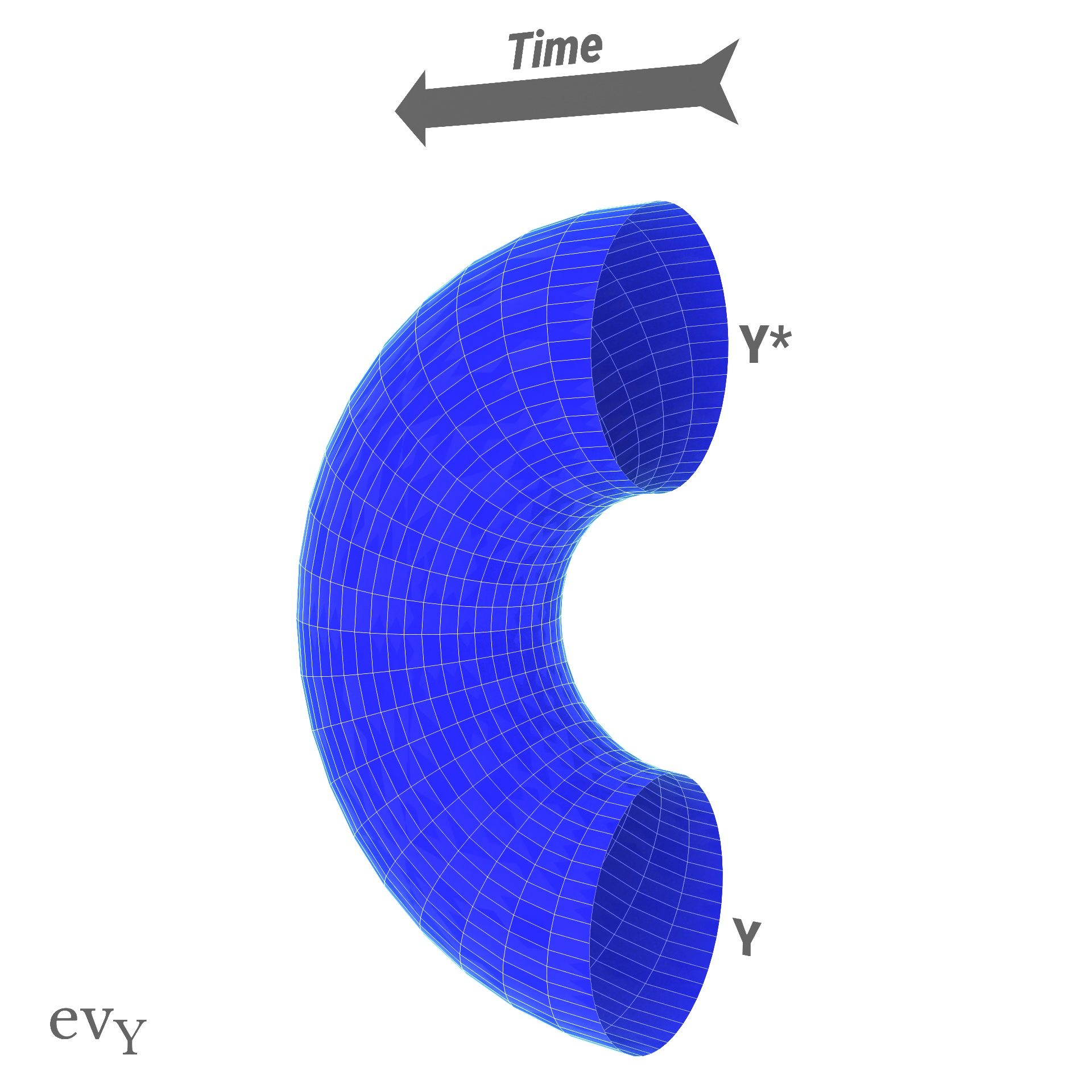}
  \caption{The macaroni bordism from the disjoint union of the space $Y$ and its dual $Y^*$. 
  It is the evaluation map $\ev_Y$ realizing $Y^*$ as the dual of $Y$ in the bordism category.}
  \label{fig:macaroni}
\end{figure}

    Let $\Bord$ denote a symmetric monoidal bordism category with $B\Z/2$-action suggestively denoted $Y \mapsto (-1)^{2s}_Y$.
    Our main theorem motivates us to construct fermionically dagger compact $\dagger$-structures on $\Bord$.
    We will not explain this in detail here, instead referring to \cite{mythesis}.
    As we need it for Corollary \ref{cor:spinstatistics}, we instead sketch how this works for the case where $\Bord = \Bord^{\Spin}_d$, following closely the treatment in \cite[Section 4]{freedhopkins}.
    Here $\Bord^{\Spin}_d$ is the spin bordism category equipped with its $B\Z/2$-action given by the spin flip automorphism $Y \mapsto (-1)^{2s}_{Y}$, as explained in the introduction.

    Recall that a spin structure $(P,\alpha)$ on a $d$-dimensional vector bundle $E \to X$ consists of a $\Spin_d$-principal bundle $P\to X$ and an isomorphism $\alpha: P \times_{\Spin_d} \R^d \cong E$ of real vector bundles, where $\Spin_d$ acts on $\R^d$ via the vector representation.
We model the bordism category to have objects $(d-1)$-dimensional manifolds $Y$ with $\Spin_d$-structure $(P,\alpha)$ on $TY \oplus \underline{\R}$.
We define the \emph{dual} of $(Y,P,\alpha)$ as $(Y,P,\alpha)^* := (Y,P,\id_{TY} \oplus -\id_{\underline{\R}})$, which we think of as `$Y$ with normal vector field pointing in the other direction'.
Morphisms $(Y_1,P_1,\alpha_1) \to (Y_2,P_2,\alpha_2)$ in $\Bord^{\Spin}_d$ consist of a $d$-dimensional manifold $W$ with boundary $Y_1 \sqcup Y_2$ and a spin structure on $TW$ which is compatible with the boundary spin structure $(Y_1,P_1,\alpha_1)^* \sqcup (Y_2,P_2,\alpha_2)$.
We then have to quotient by spin diffeomorphisms of $W$ compatible with the respective boundaries.
It is straightforward to show that $(Y,P,\alpha)^*$ is dual to $(Y,P,\alpha)$ using the macaroni bordism Figure \ref{fig:macaroni} as evaluation.

The \emph{orientation-reversal} involution $\beta: \Bord^{\Spin}_d \to \Bord^{\Spin}_d$ is defined as follows, see \cite[Section 4.1]{freedhopkins} and \cite[Section 4.5]{mythesis}.
Since the spin group sits inside the pin group as an index two subgroup, the associated $\Pin_d^+$-bundle $P \times_{\Spin_d} \Pin^+_d$ decomposes into two $\Spin_d$-bundles
\[
P \times_{\Spin_d} \Pin^+_d = (P \times_{\Spin_d} \Spin_d ) \sqcup (P \times_{\Spin_d} (\Pin^+_d \setminus \Spin_d)),
\]
the first being isomorphic to $P$.
Define $\beta(M,P,\alpha) := (M, P \times_{\Spin_d} (\Pin^+_d \setminus \Spin_d), \beta(\alpha))$, where $\beta(\alpha)(p,g,v) := \alpha(p,\rho(g)v)$.
Here $p \in P, g \in \Pin^+_d \setminus \Spin_d, v \in \R^n$ and $\rho: \Pin^+_d \to GL_d(\R)$ is the vector representation.
It is straightforward to make this $\Z/2$-action compatible with boundaries of bordisms.

Given the involution $\beta$, we can define the anti-involution $dY := \beta(Y)^*$, see \cite[Section 2.2]{mythesis} for details on the relationship between involutions and anti-involutions in symmetric monoidal categories.
We can define Hermitian pairings for this anti-involution as explained in \cite[Proposition 4.8]{freedhopkins}, see \cite[Section 4.6]{mythesis} for a general discussion of Hermitian pairings on bordism categories.
In more detail, first note that if $(Y,P,\alpha) \in \Bord^{\Spin}_d$, we can assume without loss of generality that the spin structure is `constant in the time direction'.
More precisely, $P = Q \times_{\Spin_{d-1}} \Spin_d$ and $\alpha(q,g,v) = \beta(q, \rho(g) v)$, where $(Q,\beta)$ is a $\Spin_{d-1}$-structure on $TY$.
We can then define\footnote{This map needs to be modified in case $d = 1$, see \cite[discussion above Lemma 4.6.11.]{mythesis}.} the map $h: (M, P,\alpha) \to d(M, P,\alpha)$ as the identity on $M$ and $h(q,g) := (q,e_n g)$ on $(q,g) \in Q \times_{\Spin_{d-1}} \Spin_d = P$.
It can be checked that these define Hermitian pairings and so we can give $\Bord^{\Spin}_d$ the dagger structure obtained from the positivity structure given by these Hermitian pairings on all objects as in Definition \ref{def:positivitystructure}.
This positivity structure is clearly monoidal with respect to disjoint union on the bordism category, making $\Bord^{\Spin}_d$ into a symmetric monoidal dagger category.
The $B\Z/2$-action on $\Bord^{\Spin}_d$ is unitary.

The condition of $\Bord^{\Spin}_d$ being fermionically dagger compact has a convenient geometric interpretation.
    Namely, note that by Remark \ref{rem:fermdaggercptclassical}, we have to check whether for every spatial slice $Y$ the macaroni bordism $\ev_Y\colon  Y^* \sqcup Y \to 1$ satisfies that $\ev_Y^\dagger$ differs from $\coev_Y$ with $(-1)^{2s}_Y$ on one of the legs.
The main input to show this is the computational fact that the element $\hat{h}_d := e_d \in \Pin^+_d$ anti-commutes with elements of $\Pin^+_{d-1} \setminus \Spin_{d-1}$, see \cite[Theorem 2.2.25]{mythesis} for details.

\begin{remark}
        The discussion above implies that in the spin bordism dagger category the double of the manifold with boundary $\ev_Y$, defined as the closed manifold $\ev_Y \circ \ev_Y^\dagger$, is $Y \times S^1_{ap}$.
        Here $S^1_{ap}$ is the anti-periodic spin circle, defined by the mapping torus of the spin flip automorphism on a point.
    Note that the manifold $Y \times S^1_{ap}$ bounds $Y \times D^2$, compare \cite[Corollary 7.52]{freed2019lectures} and the proof of \cite[Theorem 11.3]{freedhopkins}.
\end{remark}

\begin{corollary}
\label{cor:spinstatistics}
    Any symmetric monoidal dagger functor $F\colon  \Bord^{\Spin}_d \to \sHilb$ is $B\Z/2$-equivariant.
\end{corollary}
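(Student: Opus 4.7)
The plan is to reduce to the Main Theorem \ref{mainth} by checking its three hypotheses for $\mathcal{D}_1 = \Bord^{\Spin}_d$ and $\mathcal{D}_2 = \sHilb$: fermionic dagger compactness of both source and target, and triviality of iso-positive involutions in $\sHilb$. The first two points are essentially already granted by the preceding discussion; the only genuinely new input needed is the third.

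Fermionic dagger compactness of $\sHilb$ is Corollary \ref{cor:shilbfermdaggercpt}, which identifies the standard dual functor as sending $\sHilb$ into $\sHilb_{odd\text{-}neg} = \sVect_{\tilde P}$. For $\Bord^{\Spin}_d$, I would invoke the sketch given just above the corollary: the spin flip $(-1)^{2s}$ gives the $B\Z/2$-action, the macaroni bordism $\ev_Y$ together with its dagger furnishes the duality data, and the fermionic-dagger-compactness condition of Remark \ref{rem:fermdaggercptclassical}, namely
\[
\ev_Y^\dagger = \bigl((-1)^{2s}_Y \otimes \id_{Y^*}\bigr) \circ \sigma_{Y^*, Y}^{-1} \circ \coev_Y,
\]
is checked by a direct computation in the $\Spin_{d-1} \hookrightarrow \Pin^+_d$ presentation of objects. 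The key fact is the anti-commutation of $e_d \in \Pin^+_d$ with elements of $\Pin^+_{d-1} \setminus \Spin_{d-1}$, producing exactly one insertion of the nontrivial element of $\ker(\Spin_d \to SO_d)$, i.e.\ a spin flip on one leg.

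For the remaining hypothesis, suppose $f \colon \mathcal{H} \to \mathcal{H}'$ is an isomorphism in $\sHilb$ with $A := f^\dagger f$ satisfying $A^2 = \id_{\mathcal{H}}$. Since $f$ is even, $A$ splits as $A_0 \oplus A_1$ with $A_i \in \End(\mathcal{H}_i)$ and $A_i^2 = \id_{\mathcal{H}_i}$. For any nonzero homogeneous $v \in \mathcal{H}$,
\[
\langle v, A v \rangle_{\mathcal{H}} = \langle fv, fv \rangle_{\mathcal{H}'},
\]
which is strictly positive on $\mathcal{H}_0$ and has strictly positive imaginary part (divided by $i$) on $\mathcal{H}_1$, by the definition of a super Hilbert space pairing. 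Thus $A_0$ is a positive definite Hermitian operator on the ordinary Hilbert space $(\mathcal{H}_0, \langle \cdot, \cdot \rangle)$, and $A_1$ is positive definite Hermitian on $(\mathcal{H}_1, \langle \cdot, \cdot \rangle / i)$. A positive definite self-adjoint operator that squares to the identity has all eigenvalues equal to $+1$, so $A_0 = \id$ and $A_1 = \id$, and hence $A = \id_{\mathcal{H}}$.

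With the three hypotheses verified, Theorem \ref{mainth} yields $B\Z/2$-equivariance of $F$. I don't expect any hidden obstacle: the construction of the fermionically dagger compact structure on the spin bordism category is the most intricate piece, but it is precisely the content of the geometric discussion the paper imports from \cite{freedhopkins, mythesis}; the $\sHilb$ side reduces to a one-line spectral argument.
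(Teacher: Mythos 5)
Your proposal is correct and follows essentially the same route as the paper's proof: fermionic dagger compactness of $\sHilb$ via Corollary \ref{cor:shilbfermdaggercpt}, the geometric discussion preceding the corollary for $\Bord^{\Spin}_d$, and then Theorem \ref{mainth} after checking that iso-positive involutions in $\sHilb$ are trivial. Your block-diagonal positivity argument for $A = f^\dagger f$ is exactly the ``straightforward'' orthonormal-basis computation the paper leaves to the reader, just written out carefully.
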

\begin{proof}
It follows by Corollary \ref{cor:shilbfermdaggercpt}, that $\sHilb$ is fermionically dagger compact.
Now let $j$ be an iso-positive involution in $\sHilb$.
Picking an orthonormal basis, we obtain a corresponding matrix $A$.
It is straightforward to show that $A$ is a positive matrix in the usual sense.
Since $A$ is an involution, we obtain that $A$ is the identity matrix.
    Since $\Bord^{\Spin}_d$ is fermionically dagger compact, the result follows by Theorem \ref{mainth}.
\end{proof}

\begin{remark}
    The results of \cite{freedhopkins} apply to the bordism category $\Bord^H_d$ of manifolds with certain more general $H$-structures.
    More precisely, these $H$ are the spacetime structure groups constructed from internal fermionic symmetry groups $G$, see \cite[Section 3.2]{stehouwermorita} for one formulation.
    Corollary \ref{cor:spinstatistics} holds more generally for symmetric monoidal dagger functors
    \[
    F\colon  \Bord^H_d \to \sHilb.
    \]
    In particular, it follows that every symmetric monoidal dagger functor
    \[
    F\colon  \Bord^{SO}_d \to \sHilb
    \]
    lands in $\Hilb$.
\end{remark}

\begin{remark}
Many proofs of the Spin-Statistics Theorem, such as in \cite[Volume 1, Part 2, \S 1.4] {deligne1999quantum} and \cite{baezspinstatistics}, crucially involve a rotation $PT$ between time and some direction in space.
The crux of the proof is then how a rotation squares to fermion parity. 
Depending on the approach, this can come in analogously into showing the spin bordism dagger category is fermionically dagger compact.
Namely, the definition of the dual in the bordism category as explained above involves reflecting in the time direction, while the `bar' $\beta$ can alternatively be defined by reflecting the bordism along an auxiliary direction, see \cite[Section 2.2]{yonekura2019cobordism} for details on this approach.
The Hermitian pairings $h_Y\colon  Y \cong \beta Y^*$ are then defined by rotating between this reflection in space and time.
The fact that $h_{Y^*}$ and the dual of $h_{Y}$ in the sense of Definition \ref{def:dualhermstr} differ by $(-1)^{2s}_Y$, is a consequence of the fact that in $\Spin(d)$ the lift of a rotation by $180^\circ$ squares to $(-1)^{2s}$.
\end{remark}

\appendix

\section{Symmetric monoidal coherent dagger categories}
\label{app:symmondagger}

In this appendix, we will show the symmetric monoidal analogues of the results in \cite{jandagger}.
%This section is only included for completeness.
From an abstract perspective, note that the Cartesian product of categories makes the $2$-category of categories into a symmetric monoidal $2$-category.
Similarly to how monoids ($\mathbb{E}_1$-objects) in this category are monoidal categories, monoidal dagger categories are defined as monoids in the $2$-category of dagger categories.
Therefore, all we would have to do to understand symmetric monoidal dagger categories, is to understand how our constructions behave with respect to Cartesian products of categories.
However, we will take a more pedestrian approach.

In the first part of this section we work monoidally, postponing braidings to the end of the section.
If $\mathcal{C}$ is a monoidal category, $\mathcal{C}^{\op}$ will denote the monoidal category in which we reverse the direction of composition, but not of the tensor product.
As in the main text, a monoidal anti-involution is a fixed point under the action $\mathcal{C} \mapsto \mathcal{C}^{\cop}$ on monoidal categories.
We also provide definitions of monoidal dagger categories that are not symmetric.

\begin{definition}
An anti-involution $(d,\eta)$ on a monoidal category is called \emph{monoidal} if $d$ comes equipped with the data of being a monoidal functor $\mathcal{C} \to \mathcal{C}^{\cop}$ and $\eta$ is a monoidal natural transformation.
A \emph{monoidal anti-involutive functor} is an anti-involutive functor, which is also a monoidal functor such that the natural isomorphism $F \circ d \cong d \circ F$ of functors $\mathcal{C}_1 \to \mathcal{C}_2^{\op}$ is monoidal.
A monoidal anti-involutive natural transformation is a natural transformation that is both anti-involutive and monoidal.
We denote the $2$-category of monoidal anti-involutive categories by $\aICat_{\mathbb{E}_1}$.
A \emph{monoidal dagger category} is a dagger category that is also a monoidal dagger category such that $\otimes$ is a $\dagger$-functor and the unitor and the associator are unitary.
A monoidal functor between monoidal dagger categories is called a \emph{monoidal dagger functor} if it is a dagger functor and the isomorphisms $\mu_{c_1,c_2}\colon  F(c_1) \otimes F(c_2) \to F(c_1 \otimes c_2)$ and $1_{\mathcal{C}_2} \to F(1_{\mathcal{C}_1})$ are unitary.
A \emph{monoidal isometric transformation} between monoidal dagger functors is a natural transformation that is both isometric and monoidal.
Let $\Cat^\dagger_{\mathbb{E}_1}$ denote the $2$-category of monoidal dagger categories.
\end{definition}

Our first goal is to make $\Herm \mathcal{C}$ into a monoidal dagger category given a monoidal anti-involutive category $\mathcal{C}$.

\begin{proposition}
\label{prop:monoidalherm}
Let $\mathcal{C}$ be a monoidal anti-involutive category.
The tensor product of Hermitian pairings given in Definition \ref{def:tensorproductofhermstrs} gives $\Herm \mathcal{C}$ the structure of a monoidal dagger category.
\end{proposition}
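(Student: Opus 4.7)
The plan is to define the monoidal structure on $\Herm \mathcal{C}$ by transporting the structure of $\mathcal{C}$, with Hermitian pairings combined via the formula in Definition \ref{def:tensorproductofhermstrs}, and then to verify the three conditions required by Definition \ref{def:monoidal dagger}: the tensor product is a $\dagger$-functor, the unit pairing exists, and the unitor and associator are unitary. The unit object will be $1_{\mathcal{C}}$ equipped with the monoidal unit datum $u\colon 1 \to d(1)$ of the anti-involution. On morphisms, the tensor product is simply the tensor product in $\mathcal{C}$, which makes sense because morphisms in $\Herm \mathcal{C}$ are just morphisms in $\mathcal{C}$.

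First I would verify that the proposed tensor product of Hermitian pairings $\chi_{c_1,c_2} \circ (h_1 \otimes h_2)$ is again a Hermitian pairing. This amounts to showing the equality
\[
d(\chi_{c_1,c_2} \circ (h_1 \otimes h_2)) \circ \chi_{c_1,c_2} \circ (h_1 \otimes h_2) = \eta_{c_1 \otimes c_2}.
\]
Expanding the left-hand side, one inserts the coherence hexagon for $d$ being a monoidal functor (relating $d(\chi) \circ \chi$ to $\chi \circ (d \otimes d)$ applied componentwise) and then uses $dh_i \circ h_i = \eta_{c_i}$ for each factor. The result reduces to $\chi \circ (\eta_{c_1} \otimes \eta_{c_2})$, which equals $\eta_{c_1 \otimes c_2}$ precisely because $\eta$ is a monoidal natural transformation. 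The same argument, applied to the hexagon once rather than twice, shows that $u\colon 1 \to d(1)$ is a Hermitian pairing, using the coherence $\eta_1 = d(u) \circ u$ built into the definition of a monoidal anti-involution.

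Next I would check that $\otimes$ is a $\dagger$-functor, i.e.\ $(f \otimes g)^\dagger = f^\dagger \otimes g^\dagger$. Writing out both sides using formula \eqref{eq:adjoint}, the right-hand side is $(h_1^{-1} \otimes h_2^{-1}) \circ (df \otimes dg) \circ (h_1' \otimes h_2')$, while the left is $(h_1 \otimes h_2)^{-1} \circ \chi^{-1} \circ d(f \otimes g) \circ \chi \circ (h_1' \otimes h_2')$. These coincide by naturality of $\chi$ combined with the fact that the monoidal structure on $d$ is given by $\chi$; this is a direct diagram chase and presents no difficulty.

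Finally, the unitors and the associator of $\mathcal{C}$, viewed as morphisms in $\Herm \mathcal{C}$ between appropriate tensor products of Hermitian pairings, must be unitary. Recalling that a morphism $\alpha\colon (c,h) \to (c',h')$ is unitary iff $h = d\alpha \circ h' \circ \alpha$ (i.e.\ $h'$ transfers to $h$ along $\alpha^{-1}$), this reduces to the monoidal coherence axioms stipulated for $(d,\chi,u)$: the pentagon for $\chi$ on $d$ gives unitarity of the associator, and the two triangle coherences involving $u$ give unitarity of the left and right unitors. The main obstacle I anticipate is bookkeeping — keeping the various coherence hexagons, the naturality squares of $\chi$, and the monoidality of $\eta$ straight — rather than any conceptual subtlety; all the real content has already been packaged into the definition of a monoidal anti-involutive category.
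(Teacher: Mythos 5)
Your proposal is correct in substance and follows essentially the same route as the paper: the tensor pairing is Hermitian because $d$ and $\eta$ are monoidal, the unit pairing is the unit datum $u$ of $d$ (using that $\eta$ is monoidal), $(f\otimes g)^\dagger=f^\dagger\otimes g^\dagger$ follows from naturality and monoidality of $\chi$, and unitarity of the associator and unitors comes from the coherence axioms of the monoidal functor $d$ together with naturality of the associator and unitors. Two small slips to fix when writing it up: the condition of Definition \ref{def:hermstr} is $h=dh\circ\eta_c$, equivalently $d(h^{-1})\circ h=\eta_c$, so your displayed equality $d(\chi_{c_1,c_2}\circ(h_1\otimes h_2))\circ\chi_{c_1,c_2}\circ(h_1\otimes h_2)=\eta_{c_1\otimes c_2}$ (and likewise $dh_i\circ h_i=\eta_{c_i}$) is not composable as written, and unitarity of the associator rests on the associativity coherence of the monoidal functor $d$ (not a pentagon for $\chi$) plus naturality, exactly as in the paper's proof.
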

\begin{proof}
We define the tensor product $(c_1, h_1) \otimes (c_2,h_2)$ by Definition \ref{def:tensorproductofhermstrs}, which still makes sense without a braiding on $\mathcal{C}$.
We start by verifying that this defines a Hermitian pairing on $c_1 \otimes c_2$.
For this we have to show the diagram
\[
\begin{tikzcd}
d(c_1 \otimes c_2)   & d c_1 \otimes d c_2 \arrow[l,"\chi_{c_1,c_2}"]   & c_1 \otimes c_2 \arrow[l,"{h_1 \otimes h_2}", swap] \arrow[dd,"\eta_{c_1 \otimes c_2}"] \arrow[dl,"{\eta_{c_1} \otimes \eta_{c_2}}"]
\\
\ & d^2 c_1 \otimes d^2 c_2 \arrow[dl,"{\chi_{dc_1,dc_2}}"] \arrow[u,"{d h_1 \otimes d h_2}"] & \ 
\\
d(dc_1 \otimes dc_2) \arrow[uu,"{d(h_1 \otimes h_2)}"]   & \ & d^2(c_1 \otimes c_2)  \arrow[ll,"{d\chi_{c_1,c_2}}"]
\end{tikzcd}
\]
commutes. 
This follows because $d$ and $\eta$ are monoidal and $h_1$ and $h_2$ are Hermitian pairings.
We take the associator $(c_1,h_1) \otimes ((c_2,h_2) \otimes (c_3, h_3)) \to ((c_1,h_1) \otimes (c_2,h_2)) \otimes (c_3, h_3)$ to be the associator $(c_1 \otimes c_2) \otimes c_3 \cong c_1 \otimes (c_2 \otimes c_3)$ of $\mathcal{C}.$
The pentagon identity follows immediately from the pentagon identity in $\mathcal{C}$.
Because $\eta$ is monoidal, the morphism $\eta(1)\colon  1 \to d^2(1)$ is equal to $du \circ u^{-1}$, where $u\colon  d(1) \to 1$ is the unitality data of $d$.
We see that $u$ is a Hermitian pairing on $1$ and so $(1,u) \in \Herm \mathcal{C}$ is a monoidal unit with the unitors equal to those of $\mathcal{C}$.

To show this makes $\Herm \mathcal{C}$ into a monoidal dagger category, we have to prove that $(f_1 \otimes f_2)^\dagger = f_1^\dagger \otimes f_2^\dagger$ and that the unitors and the associator are unitary.
Let $f_1\colon  (c_1,h_1) \to (c_1', h_1')$ and $f_2\colon  (c_2,h_2) \to (c_2',h_2')$ be morphisms in $\Herm \mathcal{C}$.
To obtain $(f_1 \otimes f_2)^\dagger = f_1^\dagger \otimes f_2^\dagger$, we note the following diagram commutes because $d$ is monoidal:
\[
\begin{tikzcd}[column sep=40]
d(c_1 \otimes c_2) & d(c_1' \otimes c_2') \arrow[l,"{d(f_1 \otimes f_2)}"]
\\
dc_1 \otimes dc_2 \arrow[u] & dc_1' \otimes dc_2' \arrow[u] \arrow[l,"df_1 \otimes df_2"]
\\
c_1 \otimes c_2 \arrow[u,"{h_1 \otimes h_2}"] & c_1' \otimes c_2' \arrow[u,"{h_1' \otimes h_2'}"] \arrow[l,"{f_1^\dagger \otimes f_2^\dagger}"]
\end{tikzcd}.
\]
For verifying the left unitor $\lambda_c\colon  (1, u) \otimes (c,h) \to (c,h)$ is unitary, we note the following diagram commutes
\[
\begin{tikzcd}[column sep =30]
d(1 \otimes c) & dc \arrow[l, "d(\lambda_c)"] 
\\
d1 \otimes dc \arrow[u] & \ 
\\
1 \otimes dc \arrow[u, "u \otimes \id"] \arrow[ruu,"{\lambda_{d(c)}}", swap]& \ 
\\
1 \otimes c \arrow[u,"{\id \otimes h}"] \arrow[r, "\lambda_c"]& c\arrow[uuu, "h", swap]
\end{tikzcd}.
\]
This follows by naturality of $\lambda$ and the unit condition on the functor $d$.
The right unitor is analogous.
Showing the associator is unitary is equivalent to showing the following commutes
\[
\begin{tikzcd}
d(c_1 \otimes (c_2 \otimes c_3)) \arrow[r,"d\alpha"] &  d((c_1 \otimes c_2) \otimes c_3)
\\
dc_1 \otimes d(c_2 \otimes c_3) \arrow[u]& d(c_1 \otimes c_2) \otimes dc_3\arrow[u]
\\
dc_1 \otimes (dc_2 \otimes dc_3) \arrow[u]& (dc_1 \otimes dc_2) \otimes dc_3 \arrow[l, "\alpha"]\arrow[u]
\\
c_1 \otimes (c_2 \otimes c_3) \arrow[u,"{h_1 \otimes (h_2 \otimes h_3)}"] & (c_1 \otimes c_2) \otimes c_3 \arrow[u,"{(h_1 \otimes h_2) \otimes h_3}"] \arrow[l, "\alpha"]
\end{tikzcd}.
\]
It does by associativity of the monoidal functor $d$ and the associator being natural.
\end{proof}

Next, we provide $\Herm$ on $1$-morphisms of $\aICat_{\mathbb{E}_1}$:

\begin{lemma}
For a monoidal anti-involutive functor, $\Herm F \colon  \Herm \mathcal{C}_1 \to \Herm \mathcal{C}_2$ is a monoidal dagger functor.
\end{lemma}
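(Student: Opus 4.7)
The plan is to equip $\Herm F$ with the monoidal data inherited from $F$, check the coherence axioms (which are immediate from $F$ being monoidal and from the construction of the tensor product in $\Herm \mathcal{C}_i$), inherit dagger compatibility from the non-monoidal Hermitian completion, and then verify that the resulting $\mu$ and $\epsilon$ are unitary. The last verification is the only nontrivial step and it is precisely where the monoidality of the anti-involutive coherence data $\phi$ is used.

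In more detail, on objects we have $\Herm F (c,h) = (F(c), \phi_c \circ F(h))$, and we define the monoidal structure on $\Herm F$ by taking $\mu^{\Herm F}_{(c_1,h_1),(c_2,h_2)}$ to be the underlying morphism of $\mu^F_{c_1,c_2}\colon F(c_1) \otimes F(c_2) \to F(c_1 \otimes c_2)$ in $\mathcal{C}_2$, and similarly $\epsilon^{\Herm F} := \epsilon^F\colon 1 \to F(1)$. Because morphisms in $\Herm \mathcal{C}_i$ are just morphisms of the underlying categories, the associativity and unit hexagon/triangle identities, as well as naturality of $\mu^{\Herm F}$, hold on the nose because they hold for $\mu^F$ and $\epsilon^F$ in $\mathcal{C}_2$. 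Compatibility with the dagger, $\Herm F(f^\dagger) = \Herm F(f)^\dagger$, is not a monoidal condition and was already established in \cite{jandagger}.

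It remains to show that $\mu^{\Herm F}_{(c_1,h_1),(c_2,h_2)}$ and $\epsilon^{\Herm F}$ are unitary, i.e.\ isometries in $\Herm \mathcal{C}_2$. Unwinding \eqref{eq:adjoint}, the condition $(\mu^F)^\dagger \mu^F = \id$ in $\Herm \mathcal{C}_2$ is equivalent to the assertion that the outer rectangle of the diagram
\[
\begin{tikzcd}[column sep=35]
F(c_1) \otimes F(c_2) \ar[r,"h_{F(c_1)} \otimes h_{F(c_2)}"] \ar[d,"\mu^F"'] & dF(c_1) \otimes dF(c_2) \ar[r,"\chi"] & d(F(c_1)\otimes F(c_2)) \\
F(c_1 \otimes c_2) \ar[r,"F(h_{12})"'] & F(d(c_1 \otimes c_2)) \ar[r,"\phi_{c_1\otimes c_2}"'] & dF(c_1 \otimes c_2) \ar[u,"d\mu^F"']
\end{tikzcd}
\]
commutes, where $h_{12} = \chi \circ (h_1 \otimes h_2)$ is the tensor Hermitian pairing in $\mathcal{C}_1$ from Definition \ref{def:tensorproductofhermstrs} and $h_{F(c_i)} = \phi_{c_i} \circ F(h_i)$. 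This diagram splits into two pieces: one expressing that $F$ is a monoidal functor (so $F(h_1 \otimes h_2)$ intertwines the monoidal constraints $\mu^F$ via the $\chi$'s), and one which is exactly diagram \eqref{eq:monoidalantiinvfunctor} saying that the anti-involutive coherence datum $\phi\colon F \circ d \Rightarrow d \circ F$ is monoidal. Combining these two, the outer rectangle commutes. The analogous statement for $\epsilon^F$ being an isometry reduces to diagram \eqref{eq:monoidalunital}, i.e.\ the unitality half of monoidality of $\phi$.

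The main obstacle is thus simply bookkeeping: rewriting the unitarity condition $f^\dagger f = \id$ in $\Herm \mathcal{C}_2$ in terms of the Hermitian pairings $\phi_c \circ F(h_c)$ and $\chi \circ (h_1 \otimes h_2)$, and then recognizing the resulting compatibility diagrams as exactly \eqref{eq:monoidalunital} and \eqref{eq:monoidalantiinvfunctor}. No further input beyond monoidality of $(F,\phi)$ is needed.
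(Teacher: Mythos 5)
Your proposal is correct and follows essentially the same route as the paper: equip $\Herm F$ with the underlying monoidal data of $F$, note that the dagger compatibility and coherence are inherited from the non-monoidal case, and reduce unitarity of $\epsilon$ and $\mu$ to the diagrams \eqref{eq:monoidalunital} and (monoidality of $F$ plus) \eqref{eq:monoidalantiinvfunctor}, exactly as in the paper's decomposition into a square for monoidality of $F$ and a rectangle for monoidality of $\phi$.
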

\begin{proof}
We saw in Proposition \ref{prop:monoidalherm} that $u_{\mathcal{C}_i}\colon  1_{\mathcal{C}_i} \to d1_{\mathcal{C}_i}$ defines a Hermitian pairing on $1_{\mathcal{C}_i}$, which made it into the monoidal unit of $\Herm \mathcal{C}_i$.
We have to show that $\epsilon\colon  1_{\mathcal{C}_2} \to \Herm F(1, u_{\mathcal{C}_2})$ is unitary.
Writing out the definition is exactly diagram \ref{eq:monoidalunital} in the definition of a monoidal anti-involutive functor.

Let $(c,h),(c',h') \in \Herm \mathcal{C}_1$.
We want to show that the monoidal data $\mu_{c,c'}\colon  \Herm F(c,h) \otimes \Herm F(c',h') \to \Herm F((c,h) \otimes (c',h'))$ is a unitary isomorphism.
Writing out the Hermitian pairings on the objects involved, this boils down to showing the following diagram commutes.
\[
\begin{tikzcd}[column sep = 40]
F(c) \otimes F(c') \arrow[r,"{F(h) \otimes F(h')}"] \arrow[d,"{\mu_{c,c'}}"] & F(dc) \otimes F(dc') \ar[r,"{\phi_c \otimes \phi_{c'}}"] \ar[d,"{\mu_{dc,dc'}}"] & dF(c) \otimes dF(c') \ar[r,"{\chi_{F(c) \otimes F(c')}}"] & d(F(c) \otimes F(c'))
\\
F(c \otimes c') \arrow[r,"F(h \otimes h')"] & F(dc \otimes dc') \arrow[r,"{F(\chi_{c,c'})}"] & F(d(c \otimes c')) \arrow[r,"{\phi_{c\otimes c'}}"] & dF(c\otimes c') \arrow[u,"{d \mu_{c,c'}}"]
\end{tikzcd}.
\]
The left square commutes because $F$ is a monoidal functor.
The right rectangle commutes by definition of $F$ being a monoidal anti-involutive functor, see Diagram \eqref{eq:monoidalantiinvfunctor}.
\end{proof}

\begin{corollary}
\label{cor:monoidalHerm2functor}
$\Herm$ is a $2$-functor
\[
\aICat_{\mathbb{E}_1} \to \Cat^\dagger_{\mathbb{E}_1}.
\]
\end{corollary}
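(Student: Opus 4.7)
The plan is to bootstrap from the non-monoidal 2-functor $\Herm\colon \aICat \to \dagger\Cat$ of Theorem \ref{th:herm2functor}, together with Proposition \ref{prop:monoidalherm} and the preceding lemma, which together already handle the object and 1-morphism levels in the monoidal setting. So the corollary amounts to checking three things: that $\Herm$ sends 2-morphisms of $\aICat_{\mathbb{E}_1}$ to 2-morphisms of $\Cat^\dagger_{\mathbb{E}_1}$, that horizontal and vertical composition of 1- and 2-morphisms are preserved in a manner compatible with the added monoidal data, and that identities are preserved.

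For the 2-morphism level, given a monoidal anti-involutive natural transformation $u\colon (F,\phi) \Rightarrow (G,\psi)$ between monoidal anti-involutive functors $\mathcal{C}_1 \to \mathcal{C}_2$, I would observe that the underlying anti-involutive natural transformation is already sent by the non-monoidal $\Herm$ to an isometric natural transformation $\Herm u\colon \Herm F \Rightarrow \Herm G$ between dagger functors $\Herm \mathcal{C}_1 \to \Herm \mathcal{C}_2$. The additional content to verify is that $\Herm u$ is monoidal as a natural transformation between the now-monoidal dagger functors, but since the monoidal structure on $\Herm F$ and $\Herm G$ is inherited directly from that on $F$ and $G$ (it is literally $\mu_{c,c'}$ and $\epsilon$, just reinterpreted on objects with Hermitian pairings), the monoidality diagrams for $u$ pull back verbatim to monoidality diagrams for $\Herm u$.

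For functoriality, the composition $\Herm(G \circ F) = \Herm G \circ \Herm F$ at the level of underlying dagger functors is immediate from Theorem \ref{th:herm2functor}; what one must additionally check is that the composite monoidal/monoidal-anti-involutive data agree, which is a routine pasting diagram chase using that both $\mu$ and $\phi$ compose in the expected way. Similarly, $\Herm(\id_{\mathcal{C}})$ is the identity monoidal dagger functor. Vertical and horizontal composition of 2-morphisms is likewise inherited directly.

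I do not expect any serious obstacle here; every step is either already established by Proposition \ref{prop:monoidalherm} and the preceding lemma, or is a direct transcription of the monoidal data from $\mathcal{C}$ to $\Herm \mathcal{C}$. The only mildly subtle point is making sure the coherence condition $\eta_{dx} = d\eta_x$ interacts correctly with the monoidal data $\chi$ when composing 1-morphisms, but this is forced by the definition of a monoidal anti-involutive category and does not require any new argument beyond naturality and monoidality of the coherences already used in Proposition \ref{prop:monoidalherm}. In short, the corollary packages existing lemmas and a few routine checks into the statement of 2-functoriality.
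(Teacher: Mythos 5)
Your proposal is correct and follows essentially the same route as the paper: both arguments reduce everything to the non-monoidal $2$-functoriality of $\Herm$ together with Proposition \ref{prop:monoidalherm} and the lemma on $1$-morphisms, observing that since morphisms in $\Herm\mathcal{C}$ are just morphisms in $\mathcal{C}$ and the monoidal data of $\Herm F$ is literally that of $F$, the $2$-morphism level and the preservation of composition require no new computation. The paper's proof is just a more compressed version of the same observation.
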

\begin{proof}
    Monoidal anti-involutive natural transformations are defined as natural transformations that are both anti-involutive and monoidal and similar for monoidal isometric transformations on monoidal dagger categories.
Therefore we have already defined $\Herm$ on $2$-morphisms of $\aICat_{\mathbb{E}_1}$.
Also, we still have $\Herm(F_1 \circ F_2) = \Herm(F_1) \circ \Herm(F_2)$ as monoidal dagger functors, because the monoidal data $\mu_1, \mu_2$ of $\Herm(F_1)$ and $\Herm(F_2)$ are simply given by the monoidal data of $F_1$ and $F_2$.
The result now follows from Theorem \ref{th:jantheorem}.
\end{proof}

Note that Definition \ref{def:monoidalposstr} of a monoidal positivity structure still makes sense without a braiding. 

\begin{theorem}
\label{th:monoidaljanth}
    $\Herm$ is adjoint to the $2$-functor
    \[
    \Cat^\dagger_{\mathbb{E}_1} \to \aICat_{\mathbb{E}_1}.
    \]
    It induces an equivalence 
    \[
    \Cat^\dagger_{\mathbb{E}_1} \simeq (\aICat_{\mathbb{E}_1})_P,
    \]
    between monoidal dagger categories and monoidal anti-involutive categories with monoidal positivity structure. 
\end{theorem}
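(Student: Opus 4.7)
The strategy is to leverage the non-monoidal Theorem \ref{th:jantheorem} together with Corollary \ref{cor:monoidalHerm2functor}, which already promotes $\Herm$ to a $2$-functor $\aICat_{\mathbb{E}_1} \to \Cat^\dagger_{\mathbb{E}_1}$. All that remains is to upgrade the adjunction data and the correspondence with positivity structures to the monoidal setting, and in both cases this amounts to verifying that the relevant coherence isomorphisms are monoidal (respectively unitary) under the enhanced structures.

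For the adjunction, I view a monoidal dagger category $\mathcal{D}$ as a monoidal anti-involutive category with trivial $\eta$. The non-monoidal unit $\iota_{\mathcal{D}}\colon \mathcal{D} \to \Herm \mathcal{D}$, sending $d \mapsto (d, \id_d)$ and a morphism to itself, is monoidal dagger: under Proposition \ref{prop:monoidalherm}, the tensor of trivial pairings is trivial, the unit pairing $u$ is $\id_{1_{\mathcal{D}}}$, and the associator and unitors of $\mathcal{D}$ transport verbatim to unitary isomorphisms of $\Herm \mathcal{D}$. This realizes the non-monoidal universal property of Corollary \ref{cor:cofree} inside $\Cat^\dagger_{\mathbb{E}_1}$ once I check that under the bijection between anti-involutive functors $\mathcal{D} \to \mathcal{C}$ and dagger functors $\mathcal{D} \to \Herm \mathcal{C}$, monoidal enhancements on one side correspond to monoidal enhancements on the other. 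Concretely, diagram \eqref{eq:monoidalantiinvfunctor} in the definition of a monoidal anti-involutive functor is exactly the statement that the monoidal structure map $\mu_{d_1,d_2}\colon F(d_1) \otimes F(d_2) \to F(d_1 \otimes d_2)$ of the lift $\tilde F$ is unitary in $\Herm \mathcal{C}$, and diagram \eqref{eq:monoidalunital} is the analogous statement for the unit isomorphism $\epsilon$. This extends to 2-morphisms since a monoidal anti-involutive transformation and a monoidal isometric transformation are both defined by adding the same monoidality condition to their underlying data.

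For the equivalence $\Cat^\dagger_{\mathbb{E}_1} \simeq (\aICat_{\mathbb{E}_1})_P$, I reuse the non-monoidal equivalence of Theorem \ref{th:jantheorem} and verify that both constructions preserve the monoidal refinement. Starting from a monoidal dagger $\mathcal{D}$, the argument of Example \ref{ex:monoidaldaggerpositivity}, which only uses that $\otimes$ is a dagger functor and so requires no braiding, shows that $\Pos(\mathcal{D})$ is a monoidal positivity structure. Starting from $(\mathcal{C}, d, P)$ with $P$ monoidal, $\mathcal{C}_P$ is a full subcategory of $\Herm \mathcal{C}$ closed under the tensor product of Proposition \ref{prop:monoidalherm} and containing the unit pairing $u$, so it inherits a monoidal dagger structure. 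These assignments remain mutually inverse because they did in the non-monoidal setting and no additional data are introduced.

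The main obstacle is purely one of coherence bookkeeping: tracking that every associator, unitor, $\mu$ and $\epsilon$ that appears is monoidal or unitary in the enhanced sense. What makes this manageable at the $\mathbb{E}_1$ level is that the tensor product of Hermitian pairings in Proposition \ref{prop:monoidalherm} uses only the coherence isomorphism $\chi$ and not a braiding, so all verifications reduce to the monoidal functoriality of $d$ and of $F$ together with the naturality of the monoidal coherence isomorphisms, with no braiding-related twists entering until one moves to the $\mathbb{E}_\infty$ setting in Theorem \ref{th:symmetricjantheorem}.
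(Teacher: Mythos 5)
Your proposal is correct and follows essentially the same route as the paper: both bootstrap from the non-monoidal Theorem \ref{th:jantheorem} together with Corollary \ref{cor:monoidalHerm2functor}, and then verify that the adjunction data and the positivity-structure correspondence respect the monoidal refinement, the key check being that diagrams \eqref{eq:monoidalunital} and \eqref{eq:monoidalantiinvfunctor} encode exactly the unitarity of the lifted monoidal structure maps. The only cosmetic difference is that you package the adjunction via the hom-correspondence of Corollary \ref{cor:cofree}, whereas the paper upgrades the explicit unit $U_\mathcal{D}$ and counit $K_\mathcal{C}$ to monoidal (dagger, respectively anti-involutive) functors and checks their naturality and the triangle identities directly.
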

\begin{proof}
We first review the unit and counit of the adjunction appearing in the nonmonoidal case \cite[Theorem 4.9]{jandagger}.
The unit is given by sending the dagger category $\mathcal{D}$ to the dagger functor
\begin{equation}
    U_\mathcal{D}\colon  \mathcal{D} \to \Herm \mathcal{D},
\end{equation}
which includes $\mathcal{D}$ into its Hermitian completion by taking the Hermitian pairing on $x \in \mathcal{D}$ to be $h = \id_x\colon  x \to x^\dagger = x$.
The counit is given by sending the anti-involutive category $\mathcal{C}$ to the anti-involutive functor 
\begin{equation}
    K_{\mathcal{C}}\colon  \Herm \mathcal{C} \to \mathcal{C},
\end{equation}
which forgets the Hermitian pairing and has anti-involutive data $h_\bullet\colon  K_\mathcal{C} \circ \dagger \cong d \circ K_{\mathcal{C}}$ given by $(x,h) \mapsto h\colon  x \to dx$.

    We now make these into monoidal functors in the case where $\mathcal{C}$ is a monoidal anti-involutive category.
    The anti-involutive functor $K_\mathcal{C}\colon  \Herm \mathcal{C} \to \mathcal{C}$ has tautological monoidal data, which is associative.
    The fact that the natural isomorphism $h_\bullet\colon  K_{\mathcal{C}} \circ \dagger \cong d \circ K_{\mathcal{C}}$ is monoidal follows by the formula for the tensor product of two Hermitian pairings.
    Note that also that $K_{\mathcal{C}}$ is fully faithful and essentially surjects onto the objects that admit a Hermitian pairing.
    
    Now let $\mathcal{D}$ be a monoidal dagger category.
    We make $U_\mathcal{D}\colon  \mathcal{D} \to \Herm \mathcal{D}$ into a monoidal functor with the identity monoidal data which is natural and associative, also see Example \ref{ex:monoidaldaggerpositivity}.
    The identity is clearly also a unitary isomorphism
    \[
    (x_1, \id_{x_1}) \otimes (x_2, \id_{x_2}) \cong (x_1 \otimes x_2, \id_{x_1 \otimes x_2})
    \]
    in $\Herm \mathcal{D}$.
    Note that $U$ is strictly unital.

    Next, we show that $U$ is still natural in $\mathcal{D}$ and $K$ is natural in $\mathcal{C}$.
    For this, we only have to note that $K$ intertwines monoidal anti-involutive functors $\mathcal{C}_1 \to \mathcal{C}_2$ with their induced monoidal dagger functors $\Herm \mathcal{C}_1 \to \Herm \mathcal{C}_2$.
    The fact that $U$ and $K$ still define a strict $2$-adjunction in this monoidal scenario follows and so we proved the first statement.

    For the second statement, we recall that in the non-monoidal case, $U_\mathcal{D}$ gives an equivalence between the dagger category $\mathcal{D}$ and the dagger subcategory of $\Herm \mathcal{D}$, given by the positivity structure explained in Example \ref{ex:notionofposofdaggercat}.
    This positivity structure is monoidal as mentioned in Example \ref{ex:monoidaldaggerpositivity}, which also works without the braiding on $\mathcal{C}$.
    Since we already proved that $U_\mathcal{D}$ is a monoidal dagger functor and $U_\mathcal{D}$ restricts to an equivalence of dagger categories which is still a monoidal functor, it restricts to an equivalence of monoidal dagger categories.
    If $P$ is a monoidal positivity structure on $\mathcal{C}$ it induces a monoidal positivity structure on $\Herm \mathcal{C}$.
    Moreover, $K_\mathcal{C}$ preserves these positivity structures.
    Therefore if $P$ is monoidal, $K_{\mathcal{C}}$ induces an equivalence of monoidal anti-involutive categories equipped with monoidal positivity structures.
\end{proof}

We finish with the braided and symmetric situations, which are easier because the coherence data is on higher morphisms.
If $\mathcal{C}$ is braided by $\beta$, we give $\mathcal{C}^{\op}$ the braiding $\beta^{\op}_{c_1,c_2} := \beta^{-1}_{c_1,c_2}$.
We list the straightforward generalizations to the braided setting for reference.

\begin{definition}
    A \emph{braided monoidal dagger category} is a monoidal dagger category equipped with a unitary braiding.
    A \emph{braided monoidal dagger functor} is a braided monoidal functor which is a monoidal dagger functor.
    A \emph{braided monoidal isometric natural transformation} is a monoidal natural transformation that is objectwise an isometry.
    Let $\Cat^\dagger_{\mathbb{E}_2}$ denote the $2$-category of braided monoidal dagger categories.
A monoidal anti-involution $(d,\eta, \chi)$ on the braided category $\mathcal{C}$ is \emph{braided} if $d\colon  \mathcal{C} \to \mathcal{C}^{\cop}$ is a braided functor.
A \emph{braided anti-involutive functor} is a monoidal anti-involutive functor for which the underlying functor is braided.
A \emph{braided anti-involutive natural transformation} between braided anti-involutive categories is simply a monoidal anti-involutive natural transformation.
We denote the $2$-category of braided monoidal anti-involutive categories by $\aICat_{\mathbb{E}_2}$.
\end{definition}

\begin{lemma}
\label{lem:braidedherm}
If $\mathcal{C}$ is a braided anti-involutive category, then $\Herm \mathcal{C}$ is braided.
If $\mathcal{C}$ is symmetric, then so is $\Herm \mathcal{C}$.
\end{lemma}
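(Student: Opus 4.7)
The plan is to transfer the braiding directly from $\mathcal{C}$ to $\Herm \mathcal{C}$. Since a morphism $(c_1, h_1) \to (c_2, h_2)$ in $\Herm \mathcal{C}$ is by definition simply a morphism $c_1 \to c_2$ in $\mathcal{C}$, the natural candidate is $\beta^{\Herm}_{(c_1, h_1), (c_2, h_2)} := \beta_{c_1, c_2}$, which is automatically a morphism $(c_1, h_1) \otimes (c_2, h_2) \to (c_2, h_2) \otimes (c_1, h_1)$ in $\Herm \mathcal{C}$. Naturality of this prospective braiding and both hexagon axioms transfer for free from $\mathcal{C}$, because the tensor product on morphisms and the associator on $\Herm \mathcal{C}$ are inherited verbatim from $\mathcal{C}$ by the proof of Proposition \ref{prop:monoidalherm}. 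So after this step $\Herm \mathcal{C}$ is already a braided monoidal category.

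The real content of the lemma is that this braiding is unitary in the dagger structure on $\Herm \mathcal{C}$, i.e.\ $\beta^{\dagger} = \beta^{-1}$. Writing $H_1 = \chi_{c_1,c_2} \circ (h_1 \otimes h_2)$ and $H_2 = \chi_{c_2,c_1} \circ (h_2 \otimes h_1)$ for the tensor product Hermitian pairings of Definition \ref{def:tensorproductofhermstrs}, and unpacking the formula $f^{\dagger} = H_1^{-1} \circ df \circ H_2$ from \eqref{eq:adjoint}, one finds
\[
\beta_{c_1,c_2}^{\dagger} = (h_1 \otimes h_2)^{-1} \circ \chi_{c_1,c_2}^{-1} \circ d(\beta_{c_1,c_2}) \circ \chi_{c_2,c_1} \circ (h_2 \otimes h_1).
\]
The compatibility of $d$ with the braidings on $\mathcal{C}$ and $\mathcal{C}^{\cop}$, combined with the convention $\beta^{\cop}_{a,b} = \beta^{-1}_{a,b}$, is exactly the identity
\[
\chi_{c_1,c_2}^{-1} \circ d(\beta_{c_1,c_2}) \circ \chi_{c_2,c_1} = \beta_{dc_1, dc_2}^{-1}.
\]
Substituting this into the expression above and then applying naturality of $\beta$ in the form $\beta_{dc_1, dc_2}^{-1} \circ (h_2 \otimes h_1) = (h_1 \otimes h_2) \circ \beta_{c_1, c_2}^{-1}$ collapses the composite to $\beta_{c_1, c_2}^{-1}$, establishing unitarity.

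For the symmetric case there is essentially nothing further to do: the equation $\beta_{c_2, c_1} \circ \beta_{c_1, c_2} = \id_{c_1 \otimes c_2}$ holds in $\mathcal{C}$ and is an equation between morphisms in $\mathcal{C}$, hence continues to hold when these morphisms are interpreted in $\Herm \mathcal{C}$. I expect the only nontrivial bookkeeping to be matching conventions: making sure that the direction of $\chi_{c_1, c_2}\colon dc_1 \otimes dc_2 \to d(c_1 \otimes c_2)$ and the choice $\beta^{\cop} = \beta^{-1}$ combine into precisely the identity displayed above when one writes out what it means for $d \colon \mathcal{C} \to \mathcal{C}^{\cop}$ to be a braided monoidal functor. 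Once that translation is made explicit, the computation is mechanical.
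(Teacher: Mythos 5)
Your proposal is correct and follows essentially the same route as the paper: the braiding of $\mathcal{C}$ is taken verbatim on $\Herm\mathcal{C}$, and unitarity is exactly the commutativity of the diagram combining naturality of $\beta$ (your identity $\beta_{dc_1,dc_2}^{-1}\circ(h_2\otimes h_1)=(h_1\otimes h_2)\circ\beta_{c_1,c_2}^{-1}$) with $d$ being braided for the inverse braiding on $\mathcal{C}^{\cop}$ (your identity $\chi_{c_1,c_2}^{-1}\circ d(\beta_{c_1,c_2})\circ\chi_{c_2,c_1}=\beta_{dc_1,dc_2}^{-1}$). The paper presents this as one commuting diagram rather than an equational computation, but the content is identical, including the observation that the symmetric case is immediate.
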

\begin{proof}
As before, we need to define $(c_1, h_1) \otimes (c_2, h_2) \cong (c_2,h_2) \otimes (c_1,h_1)$ by the braiding of $\mathcal{C}$ and show it is a unitary isomorphism.
For this we have to show the diagram
\[
\begin{tikzcd}[column sep = 40]
d(c_1 \otimes c_2) & d(c_2 \otimes c_1) \arrow[l,"{d(\beta_{c_1, c_2})}", swap]
\\
dc_1 \otimes dc_2 \arrow[u] \arrow[r,"{\beta_{dc_1, dc_2}}"] & dc_2 \otimes dc_1 \arrow[u]
\\
c_1 \otimes c_2 \arrow[u,"h_1 \otimes h_2"] \arrow[r,"{\beta_{c_1, c_2}}"] & c_2 \otimes c_1 \arrow[u,"{h_2 \otimes h_1}"]
\end{tikzcd}
\]
commutes.
The lower part commutes because the braiding is a natural isomorphism and the upper part commutes because $d$ is braided, using the braiding on $\mathcal{C}^{\op}$ defined above.
The second statement is clear.
\end{proof}

\begin{theorem}
\label{th:braidedjantheorem}
\begin{itemize}
    \item The $2$-category of braided monoidal dagger categories is equivalent to the $2$-category of braided monoidal anti-involutive categories equipped with monoidal positivity structure:
    \[
    \Cat^\dagger_{\mathbb{E}_2} \simeq (\aICat_{\mathbb{E}_2})_P.
    \]
    \item The $2$-category of symmetric monoidal dagger categories is equivalent to the $2$-category of symmetric monoidal anti-involutive categories equipped with monoidal positivity structure:
    \[
    \Cat^\dagger_{\mathbb{E}_\infty} \simeq (\aICat_{\mathbb{E}_\infty})_P.
    \]
\end{itemize} 
\end{theorem}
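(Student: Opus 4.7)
The plan is to bootstrap from the monoidal case established in Theorem \ref{th:monoidaljanth}, using the crucial fact that braided and symmetric structures are a property of monoidal functors (and a compatibility on monoidal natural transformations), rather than extra data. So almost nothing new needs to be constructed; one only needs to verify that each piece of the $\mathbb{E}_1$-adjunction automatically respects this property.

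First, I would check that $\Herm$ refines to a $2$-functor $\aICat_{\mathbb{E}_2} \to \Cat^\dagger_{\mathbb{E}_2}$. On objects this is exactly Lemma \ref{lem:braidedherm}: the braiding of $\mathcal{C}$ already defines a braiding on $\Herm \mathcal{C}$, and the commutative diagram in the proof of that lemma simultaneously expresses naturality of this braiding and its unitarity with respect to the tensor product of Hermitian pairings, so the braided monoidal dagger axioms hold. On $1$-morphisms I would observe that $\Herm F$ acts on underlying morphisms exactly as $F$ does, so $F$'s hexagon axioms immediately transfer to $\Herm F$. On $2$-morphisms there is nothing to check, since braided monoidal isometric transformations are by definition just monoidal ones. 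The symmetric case requires only the additional remark that symmetry of $F$ passes to $\Herm F$ for the same reason.

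Second, I would check that the unit $U_\mathcal{D}\colon \mathcal{D} \to \Herm \mathcal{D}$ and counit $K_\mathcal{C}\colon \Herm \mathcal{C} \to \mathcal{C}$ of the $\mathbb{E}_1$-adjunction are braided. Both functors act as the identity on underlying morphisms and carry either identity (for $U_\mathcal{D}$) or tautological (for $K_\mathcal{C}$) monoidal data; since the braiding of $\Herm \mathcal{C}$ was defined directly from that of $\mathcal{C}$, and similarly for $\Herm \mathcal{D}$ from $\mathcal{D}$, both unit and counit are strictly braided. The triangle identities and naturality in $\mathcal{C}$ and $\mathcal{D}$ are inherited from the monoidal case because they do not involve the braiding. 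Finally, for the equivalence with monoidal positivity structures, the canonical positivity structure $\Pos(\mathcal{D})$ of Example \ref{ex:notionofposofdaggercat} is monoidal by Example \ref{ex:monoidaldaggerpositivity}, a construction that uses only the dagger and the tensor product; hence the inverse $2$-functors of Theorem \ref{th:monoidaljanth} lift to the braided (and symmetric) setting without modification.

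The only potential obstacle is notational bookkeeping around opposite categories: one must verify that the sign conventions on $\mathcal{C}^{\op}$, $\mathcal{C}^{\cop}$ and on the inverse braiding $\beta^{\op}_{c_1,c_2} := \beta^{-1}_{c_1,c_2}$ are consistent with $d\colon \mathcal{C} \to \mathcal{C}^{\cop}$ being braided in the technical sense, and that the induced braiding on $\Herm \mathcal{C}$ indeed matches the braiding one obtains by viewing $\Herm \mathcal{C}$ itself as a fixed point under $(-)^{\cop}$. Since Lemma \ref{lem:braidedherm} was already phrased with this convention, these checks are purely bureaucratic, and the theorem follows.
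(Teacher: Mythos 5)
Your proposal is correct and follows essentially the same route as the paper: reduce to the $\mathbb{E}_1$-case of Theorem \ref{th:monoidaljanth}, use Lemma \ref{lem:braidedherm} on objects, note that braidedness/symmetry is a mere property of the $1$-morphisms and vacuous on $2$-morphisms, and observe that the unit $U_\mathcal{D}$ and counit $K_\mathcal{C}$ are braided. The paper simply states these last points more tersely (``which is obvious''), whereas you spell them out; there is no substantive difference.
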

\begin{proof}
Note that a braided monoidal dagger functor is simply a monoidal dagger functor which is additionally braided.
Therefore
    \[
    \Herm\colon  \aICat_{\mathbb{E}_2} \to \Cat^\dagger_{\mathbb{E}_2}
    \]
    and
    \[
    \Herm\colon  \aICat_{\mathbb{E}_\infty} \to \Cat^\dagger_{\mathbb{E}_\infty},
    \]
which we defined on objects in Lemma \ref{lem:braidedherm}, is well-defined on $1$- and $2$-morphisms.
The fact that these define $2$-functors now follows from the monoidal case.
Following the line of proof of Theorem \ref{th:monoidaljanth}, we see that the only thing left to show is that the unit and counit are braided, which is obvious.
\end{proof}

\section{Sign conventions for super Hilbert spaces}
\label{app:boringsign}

In this appendix, we will elaborate on and justify our definition of the category of super Hilbert spaces from a more categorical perspective.
Along the way, we will compare with some other possible sign conventions, highlighting the more common and equivalent definition of $\Z/2$-graded Hilbert space in the literature, as well as some nonequivalent definitions.

We start by motivating our choices of signs in Section \ref{sec:shilb} from a categorical perspective.
Note that there is an obvious way to make the assignment $V \mapsto \ol{V}$ into a symmetric monoidal functor $\sVect \to \sVect$.
This is a $\Z/2$-action in the sense that there is a canonical monoidal natural isomorphism $\lambda_V\colon  V \cong \ol{\ol{V}}$ such that $\lambda_{\ol{V}} = \ol{\lambda_V}$.
We will consider other possible sign choices for this $\Z/2$-action towards the end of this section, for example taking the monoidal data to be
\[
\ol{V \otimes W} \cong \ol{V} \otimes \ol{W} \quad \ol{v \otimes w} \mapsto (-1)^{|v||w|} \ol{v} \otimes \ol{w},
\]
or by composing $\lambda_V$ with $(-1)^F_V$.
However, we would argue that the $\Z/2$-action $\lambda$ we chose is canonical.

The functor $V \mapsto V^*$ we provided in Section \ref{sec:shilb} is a dual functor: the evaluation map $V^* \otimes V \to \C$ is given by evaluating functionals.
Since $V$ is finite-dimensional, the triangle identities hold for the coevaluation map $\C \to V \otimes V^*$ which sends $1$ to the finite sum
\[
\sum_i \epsilon^i \otimes e_i,
\]
where $\{e_i\}$ is a basis of $V$ with dual basis $\{\epsilon^i\}$.
It can be verified that the isomorphism $\Phi_V$ in the previous section is the unique isomorphism witnessing the fact that $V$ and $V^{**}$ are both duals of $V^*$.
Note that this uses the braiding on $\sVect$.
Also, two choices of dual functor are symmetric monoidally naturally isomorphic, so our choice of dual can be made without loss of generality.

Using the dual functor, there is a canonical way to make a $\Z/2$-action into an anti-involution, see \cite[Section 2.2]{mythesis} for details.
Explicitly, the functor $d\colon  \sVect \to \sVect^{\op}$ is given by $dV := \ol{V}^*$ and $\eta\colon  \id_{\sVect} \Rightarrow d^2$ is given by the composition
\begin{equation}
\label{eq:etaformula}
V \xrightarrow{\lambda_V} \ol{\ol{V}} \xrightarrow{\Phi_{\ol{\ol{V}}}} \ol{\ol{V}}^{**} \to \ol{\ol{V}^*}^* = d^2 V.
\end{equation}
The unlabeled map is the isomorphism witnessing the fact that the symmetric monoidal functor $V \mapsto \ol{V}$ preserves duals.
The monoidal data of $d$ is given by the string of isomorphisms
    \[
    \ol{V \otimes W}^* \cong (\ol{V} \otimes \ol{W})^* \cong \ol{W}^* \otimes \ol{V}^* \cong \ol{V}^* \otimes \ol{W}^*.
    \]
Note that both this isomorphism as well as $\eta$ depend on the braiding of $\sVect$.
Explicitly working out these isomorphisms yields our conventions for $d$ from the previous section.

In the literature, a \emph{$\Z/2$-graded Hilbert space} is more commonly defined without the Koszul sign present in Equation \eqref{eq:shermpairing}.
One can fit such $\Z/2$-graded Hilbert spaces into our framework of Hermitian completions as follows.
Consider the modification of the canonical symmetric monoidal anti-involution $(d,\eta)$ on $\sVect$ defined the last section by both
\begin{enumerate}
    \item changing $\eta_V$ to $\eta'_V := \eta_V \circ (-1)^F_V$;
    \item changing the monoidal data $\chi\colon  d(V \otimes W) \cong dV \otimes dW$ of the functor $d$ to the monoidal data $\chi'$ obtained by composing with
    \[
    v \otimes w \mapsto (-1)^{|v||w|} v \otimes w.
    \]
\end{enumerate}
We will consider other possible modifications in Remark \ref{rem:wrongsignconvention}.
Analogously to Proposition \ref{prop:sHerm}, one can then show:

\begin{proposition}
    The Hermitian completion 
    \[
    \Herm(\sVect, d, \eta',\chi')
    \]
    has objects super vector spaces $V = V_0 \oplus V_1$ equipped with a nondegenerate sesquilinear pairing $\langle .,. \rangle\colon  V 
\times V \to \C$ such that $V_0$ and $V_1$ are orthogonal and
\[
\langle v,w \rangle = \ol{\langle w,v \rangle},
\]
for all $v,w \in V$.
The dagger of a degree-preserving linear map is the unique operator $T^\dagger\colon  W \to V$ such that
    \[
\langle Tv, w \rangle_{W} = \langle v, T^\dagger w \rangle_{V},
\]
for all $v \in V$ and $w \in W$.
The tensor product of Hermitian pairings $\langle .,. \rangle_V$ and $\langle .,. \rangle_W$ on super vector spaces $V$ and $W$ is the Hermitian pairing 
    \[
    \langle v_1 \otimes w_1, v_2 \otimes w_2 \rangle := \langle v_1, v_2 \rangle_V \langle w_1, w_2 \rangle_W.
    \]
\end{proposition}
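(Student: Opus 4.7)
The plan is to mimic the proof of Proposition \ref{prop:sHerm} and track how the two modifications $\eta \rightsquigarrow \eta'$ and $\chi \rightsquigarrow \chi'$ alter the resulting formulas. The identification of a grading-preserving isomorphism $h \colon V \to \overline{V}^*$ with a nondegenerate sesquilinear pairing making $V_0$ and $V_1$ orthogonal depends only on the functor $d$, so this step is unchanged.

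For the Hermitian condition, I would unpack $dh \circ \eta'_V = h$. Since $\eta'_V(v) = (-1)^{|v|}\eta_V(v)$, this is equivalent to $(dh \circ \eta_V)(v) = (-1)^{|v|} h(v)$. Evaluating at $\overline{w}$ and reusing the computation from the proof of Proposition \ref{prop:sHerm} for the left-hand side, the condition becomes
\[
(-1)^{|v||w|}\overline{\langle w, v\rangle} = (-1)^{|v|} \langle v, w\rangle.
\]
Since $V_0 \perp V_1$, both sides vanish unless $|v| = |w|$, in which case $(-1)^{|v||w|} = (-1)^{|v|^2} = (-1)^{|v|}$, the two signs cancel, and we obtain the ungraded symmetry $\langle v, w\rangle = \overline{\langle w, v\rangle}$.

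The dagger formula depends on $h_V$, $h_W$ and the functor $d$, but not on $\eta$ or $\chi$, so the adjunction identity $\langle Tv, w\rangle_W = \langle v, T^\dagger w\rangle_V$ carries over verbatim from Proposition \ref{prop:sHerm}. For the tensor product, applying Definition \ref{def:tensorproductofhermstrs} with $\chi'$ in place of $\chi$ produces the pairing of Proposition \ref{prop:sHerm} multiplied by an extra Koszul sign $(-1)^{|v_1||w_1|}$ coming from the modification of $\chi$. Since $V_0 \perp V_1$ forces $|v_1| = |v_2|$ whenever the pairing is nonzero, this extra sign coincides on its support with the original prefactor $(-1)^{|v_2||w_1|}$, so the two cancel and the tensor-product formula simplifies to $\langle v_1, v_2\rangle_V \langle w_1, w_2\rangle_W$.

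The main obstacle is the sign bookkeeping: one has to recognize that the fermion-parity twist in $\eta'$ exactly cancels the Koszul sign built into $\eta$ on the diagonal of the grading, and similarly that the Koszul sign in $\chi'$ absorbs the Koszul prefactor of the original tensor-product formula once the orthogonality $V_0 \perp V_1$ is used. Beyond this sign analysis, no new ideas are needed relative to the proof of Proposition \ref{prop:sHerm}.
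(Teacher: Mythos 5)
Your proposal is correct and follows essentially the same route the paper intends: the paper gives no separate argument, stating only that the result follows ``analogously to Proposition~\ref{prop:sHerm}'', and your sign bookkeeping (the $(-1)^{|v|}$ from $\eta'$ cancelling the Koszul sign $(-1)^{|v||w|}$ on the diagonal of the grading, and the extra Koszul sign from $\chi'$ absorbing the prefactor $(-1)^{|v_2||w_1|}$ once $V_0\perp V_1$ forces $|v_1|=|v_2|$) is exactly the computation that ``analogously'' refers to. The observation that the dagger formula \eqref{eq:adjoint} involves only $h_V$, $h_W$ and $d$, hence is unaffected by the modifications, is also correct.
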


We then define the symmetric monoidal dagger category of $\Z/2$-graded Hilbert spaces $\Hilb_{\Z/2}$ as the full dagger subcategory on those inner products which satisfy 
\[
\langle v,v \rangle \geq 0,
\]
for all $v \in V$.
It turns out that this symmetric monoidal dagger category is equivalent to $\sHilb$.\footnote{This result is well-know to experts, also see the exercises in \cite[Section 12.5]{moore2014quantum}}

\begin{theorem}
\label{th:sHilbvsZ2grhilb}
    There is a symmetric monoidal $\dagger$-equivalence $\sHilb \simeq \Hilb_{\Z/2}$.
\end{theorem}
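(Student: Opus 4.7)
My plan is to apply Theorem \ref{th:symmetricjantheorem}, which reduces the $\dagger$-equivalence to exhibiting a symmetric monoidal anti-involutive equivalence between the two anti-involutive structures $(\sVect, d, \eta, \chi)$ and $(\sVect, d, \eta', \chi')$ on $\sVect$ that carries the super Hilbert positivity structure onto the $\Z/2$-graded Hilbert positivity structure. I would take the underlying symmetric monoidal functor to be the identity $F = \id_{\sVect}$ with trivial monoidal coherence $\mu_{V,W} = \id$, and specify the anti-involutive datum $\phi_V\colon \ol{V}^* \to \ol{V}^*$ to be multiplication by $1$ on the even part and by $-i$ on the odd part.

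First I would verify the anti-involutive coherence of Definition \ref{def:anti-invfunctor}: since $\eta' = \eta \circ (-1)^F$, naturality of $\eta$ reduces this to the identity $d\phi_V = \phi_{dV} \circ (-1)^F_{d^2V}$, which holds because $d$ is antilinear on scalars ($\ol{-i} = i$) and $(-i)(-1) = i$. Second, the monoidality diagram \eqref{eq:monoidalantiinvfunctor} with $\mu = \id$ reduces to the scalar identity
\[
\beta(|f|)\,\beta(|g|) \;=\; \beta(|f| + |g| \bmod 2) \cdot (-1)^{|f||g|},
\]
where $\beta(0) = 1$ and $\beta(1) = -i$ encode $\phi$. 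Only the case $|f| = |g| = 1$ is nontrivial, and both sides equal $-1$ via the identity $(-i)^2 = -1$, which is exactly the reason for choosing $-i$ rather than a real scalar. Third, the transferred pairing satisfies $\langle v, w \rangle' = (-i)^{|v|}\langle v, w \rangle$ on matching degrees, turning the positive-imaginary odd component of a super Hilbert inner product into a positive-real odd component of a $\Z/2$-graded Hilbert inner product; replacing $-i$ by $i$ furnishes the inverse anti-involutive equivalence.

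The main subtlety will be in interpreting the action of $\phi_{V \otimes W}$ on $d(V \otimes W)$: its scalar factor depends on the mod-$2$ degree of an element, so $\beta(|f|+|g|)$ must be read as $\beta(0) = 1$ (not $(-i)^2 = -1$) when both inputs are odd. The Koszul sign $(-1)^{|f||g|}$ built into the source $\chi$ exactly compensates for this ``mod-$2$ defect,'' and this compensation is precisely the role of the Koszul twist modification that turns $\chi$ into $\chi'$.
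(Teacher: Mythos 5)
Your proposal is correct and takes essentially the same route as the paper: the paper deduces the theorem from the more general Proposition \ref{prop:ungrsignconvention}, applied with the $B\Z/4$-action $i^F_V$ given by multiplication by $i$ on the odd part, and the anti-involutive datum used there, $\phi_V = d\, i^F_V$, is exactly your multiplication by $-i$ on the odd part of $\ol{V}^*$ for the identity functor. The checks you outline (coherence with $\eta' = \eta \circ (-1)^F$, the identity $(-i)^2 = -1$ absorbing the Koszul twist relating $\chi$ and $\chi'$, and the rotation of positive-imaginary odd norms to positive-real ones) are precisely the computations carried out in the paper's proof of that proposition and its specialization to $\sVect$.
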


Instead of giving the proof, we show the following generalization of this equivalence.

\begin{proposition}
\label{prop:ungrsignconvention}
    Let $(\mathcal{C},d,\chi,\eta)$ be a symmetric monoidal anti-involutive category with monoidal positivity structure $P$.
    Let $(-1)^F$ denote a monoidal anti-involutive $B\Z/2$-action, which refines to a (not necessarily monoidal) anti-involutive $B\Z/4$-action, suggestively denoted $i^F$.
    Let $\eta' = \eta \circ (-1)^F$ and let $\chi'$ denote the monoidal 
    data 
    \[
\chi_{x,y}' := i^F_{dx \otimes dy} \circ (i^F_{dx} \otimes i^F_{dy})^{-1} \circ \chi_{x,y}.
\]
    Let $P_{i^F}$ be the collection of compositions $c \xrightarrow{i^F_c} c \xrightarrow{h} dc$ for $h \in P$.
Then the symmetric monoidal anti-involutive category $(\mathcal{C}, d, \chi, \eta)$ with positivity structure $P$ is symmetric monoidally $\dagger$-equivalent to $(\mathcal{C},d, \eta' , \chi')$ with positivity structure $P_{i^F}$.
\end{proposition}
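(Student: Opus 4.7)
The plan is to exhibit an explicit symmetric monoidal anti-involutive equivalence
\[
F\colon (\mathcal{C}, d, \chi, \eta) \longrightarrow (\mathcal{C}, d, \chi', \eta')
\]
that sends $P$ to $P_{i^F}$, and then invoke Theorem~\ref{th:symmetricjantheorem} to pass to a symmetric monoidal $\dagger$-equivalence of the associated dagger categories $\mathcal{C}_P \simeq \mathcal{C}_{P_{i^F}}$. I take $F$ to be the identity on underlying objects and morphisms, equipped with trivial monoidal data $\mu = \id$, $\epsilon = \id$, and anti-involutive datum $\phi_x := i^F_{dx}$ viewed as an automorphism $F(dx) = dx \to dx = dF(x)$. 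This choice is essentially forced by the intended effect on pairings: naturality of $i^F$ yields $\phi_c\circ F(h) = i^F_{dc}\circ h = h\circ i^F_c$, which is exactly a generator of $P_{i^F}$, so $F(P) = P_{i^F}$ on the nose.

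The next task is to verify the coherence axioms for $F$. The anti-involutive square of Definition~\ref{def:anti-invfunctor} reduces, after absorbing $(-1)^F_c$ through $\eta_c$ by naturality, to the identity $d(i^F_{dx})\circ(-1)^F_{d^2x} = i^F_{d^2x}$, which is immediate from anti-involutivity $d(i^F_{dx}) = (i^F_{d^2x})^{-1}$ together with $(i^F)^2 = (-1)^F$. The main coherence, the hexagon~\eqref{eq:monoidalantiinvfunctor}, reduces with $\mu=\id$ to
\[
\chi'_{x,y}\circ(i^F_{dx}\otimes i^F_{dy}) \;=\; i^F_{d(x\otimes y)}\circ \chi_{x,y},
\]
which is precisely the defining formula for $\chi'$ combined with naturality of $i^F$ applied to the morphism $\chi_{x,y}$; informally, the nonmonoidality of $i^F$ measured by $(i^F_{dx}\otimes i^F_{dy})\cdot(i^F_{dx\otimes dy})^{-1}$ is exactly what the modification $\chi \rightsquigarrow \chi'$ absorbs. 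For the unit diagram~\eqref{eq:monoidalunital}, monoidality of $(-1)^F = (i^F)^2$ already forces $(-1)^F_1 = \id_1$, so $i^F_1$ is an involution, and anti-involutivity then provides the relation $i^F_{d1} = d(i^F_1)^{-1}$ needed to close the check. Being the identity on the underlying braided category, $F$ is automatically symmetric.

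I then verify that each transferred pairing $h' := i^F_{dc}\circ h$ really is Hermitian in $(\mathcal{C}, d, \eta')$. The chase
\[
d(h')\circ\eta'_c \;=\; dh\circ d(i^F_{dc})\circ\eta_c\circ(-1)^F_c \;=\; dh\circ i^F_{d^2c}\circ\eta_c \;=\; dh\circ\eta_c\circ i^F_c \;=\; h\circ i^F_c \;=\; h'
\]
uses only anti-involutivity $d(i^F_{dc}) = (i^F_{d^2c})^{-1}$, the relation $(i^F)^2=(-1)^F$, naturality of $i^F$ along $\eta_c$, and the Hermitian condition on $h$. Since the underlying functor is the identity, $F$ is trivially an equivalence (an anti-involutive inverse is given by the analogous recipe with $(i^F)^{-1}$ in place of $i^F$), so Theorem~\ref{th:symmetricjantheorem} delivers the desired symmetric monoidal $\dagger$-equivalence.

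The main obstacle is the hexagon~\eqref{eq:monoidalantiinvfunctor}: everything turns on the precise interaction between the cocycle-like obstruction to monoidality of $i^F$ and the modification defining $\chi'$. All remaining coherence reduces, after a little diagram chasing, to the two basic algebraic relations $(i^F)^2 = (-1)^F$ and $d(i^F_{\bullet}) = (i^F_{d\bullet})^{-1}$ together with naturality of $i^F$ and monoidality of $(-1)^F$.
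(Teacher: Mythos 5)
Your proposal is correct and follows essentially the same route as the paper: the identity functor equipped with an anti-involutive datum built from a component of $i^F$, coherence checked via $(i^F)^2=(-1)^F$, anti-involutivity $d(i^F_\bullet)=(i^F_{d\bullet})^{-1}$ and naturality, and then Theorem \ref{th:symmetricjantheorem} to pass to the dagger categories. The only (harmless) deviation is that you take $\phi_x = i^F_{dx}$ where the paper takes the inverse datum $\phi_x = d i^F_x$; your choice in fact matches the stated positivity structure $P_{i^F}$ on the nose.
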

\begin{proof}
We will first show that $P_{i^F}$ is a monoidal positivity structure, so that it actually defines a symmetric monoidal dagger category.
    For this we need to show that if $h\colon c \to dc$ is a Hermitian pairing for $(\mathcal{C}, d, \eta)$, then $h \circ i_c^F$ is a Hermitian pairing for $(\mathcal{C}, d, \eta \circ (-1)^F)$.
Indeed, by definition of the action being anti-involutive we have
\[
d(i^F_c) = (i^F_{dc})^{-1} = (-1)^F_{dc} \circ i^F_{dc}
\]
and so
\[
d(h \circ i^F_c) = d(i^F_c) \circ dh = (-1)^F_{dc} \circ i^F_{dc} \circ h \circ \eta_c^{-1}.
\]
Using naturality of the $B\Z/2$-action, we obtain the desired.
Monoidality follows immediately by the choice of $\chi'$ and the fact that $P$ is monoidal.

By Theorem \ref{th:symmetricjantheorem}, we finish the proof by finding a symmetric monoidal anti-involutive equivalence
\[
(\mathcal{C}, d, \eta, \chi) \simeq (\mathcal{C}, d, \eta', \chi'),
\]
which induces $(c,h) \mapsto (c, h \circ i_c^F)$ on the Hermitian completions
\[
\Herm (\mathcal{C},d,\chi,\eta) \to  \Herm (\mathcal{C},d,\chi',\eta').
\]
Note that the pair of $\id_\mathcal{C}$ and the monoidal natural transformation $\phi_x := di^F_x\colon  dx \to dx$ defines a symmetric monoidal anti-involutive functor
\[
F\colon  (\mathcal{C}, d, \eta, \chi) \to (\mathcal{C}, d, \eta', \chi').
\]
Indeed, the diagram saying that $F$ is anti-involutive is
\[
\begin{tikzcd}
    x \ar[d,"(-1)^F_x"] \ar[r, "\eta_x"] & d^2 x \ar[dd,"di^F_{dx}"]
    \\
     x \ar[d,"\eta_x"] & 
     \\
     d^2 x \ar[r,"d^2 i^F_x"] & d^2 x
\end{tikzcd}
\]
This follows from the fact that $i^F$ is anti-involutive, $(i^F)^2 = (-1)^F$ and $(-1)^F$ is natural.
The functor is monoidally anti-involutive by construction of $\chi'$.
The identity is a symmetric monoidal equivalence and so $F$ is an equivalence of symmetric monoidal anti-involutive categories.
This finishes the proof.
\end{proof}

The conditions of Proposition \ref{prop:ungrsignconvention} are satisfied for $\sVect$ with its standard anti-involution and $B\Z/2$-action with $i^F_V$ defined by multiplying by $i$ on the odd part of $V$.
%In particular, the two equivalences induced by $(\pm i)^F$ are not unitarily naturally isomorphic.
%This follows from the previous remark given that $P \neq P_{(-1)^F}$. 
%Indeed, two inner products on the odd line $\Pi \C$ which have $\langle v, v \rangle \in \pm i \R_{\geq 0}$ are not unitarily isomorphic.
Also note that in this example $i^F$ is not monoidal.
For example, the diagram
\[
\begin{tikzcd}
\Pi \C \otimes \Pi \C \arrow[d,"i^F \otimes i^F = -\id"] \arrow[r,"\cong"] & \C \arrow[d,"\id"]
\\
\Pi \C \otimes \Pi \C \arrow[r,"\cong"]& \C
\end{tikzcd}
\]
does not commute.
More generally, the automorphism 
\[
i^F_{V_1 \otimes V_2} \circ (i^F_{V_1} \otimes i^F_{V_2})^{-1}
\]
of $V_1 \otimes V_2$ measuring the 
failure of $i^F$ being monoidal is exactly given by $v \otimes w \mapsto (-1)^{|v||w|} v \otimes w$.
This yields a proof of Theorem \ref{th:sHilbvsZ2grhilb}

\begin{remark}
    In the setting of Proposition \ref{prop:ungrsignconvention}, there is an inverse $B\Z/4$-action we will denote $(i^{-1})^F$.
    We also obtain $\mathcal{C}_{P_{(i^{-1})^F}} \simeq \mathcal{C}_P$.
The dagger categories $\mathcal{C}_{P_{(i^{-1})^F}}$ and $\mathcal{C}_{P_{i^F}}$ have their set of Hermitian pairings related by $(-1)^F$ and so are different if $P \neq P_{(-1)^F}$.
In particular there is no equivalence of symmetric monoidal dagger categories $\mathcal{C}_{P_{(i^{-1})^F}} \simeq \mathcal{C}_{P_{i^F}}$ commuting with the anti-involutive functors to $\mathcal{C}$, even though they are always abstractly equivalent, see Lemma \ref{lem:compareposstrs}.
\end{remark}

\begin{example}
    Note that Proposition \ref{prop:ungrsignconvention} does not hold for $\sVect_\R$.
    Firstly, the $B\Z/2$-action does not have a square root, even though we can still talk about modifying the monoidal structure of $d$ by 
    $v \otimes w \mapsto (-1)^{|v||w|} v \otimes w$.
    Additionally, modifying $\eta$ for the anti-involution $d$ on $\sVect_\R$ given by $V \mapsto V^*$ by $(-1)^F$ gives a non-equivalent anti-involutive category; in one of the two all objects admit Hermitian pairings, but in the other one some do not.
    To see this concretely, note that an odd-dimensional real vector space in odd degree admits an Hilbert space structure in the ungraded sense, but not in the graded sense, as it would give a symplectic form.
\end{example}

    \begin{example}
\label{ex:ungradedshilbdual}
    It is very tempting to think that $\Hilb^{\Z/2}$ is dagger compact, i.e. the dual Hermitian pairing does have the same signature as the original.
    Indeed, it is not hard to check that if $h\colon  V \to dV$ is a Hermitian pairing in the usual convention
    \[
    \langle v,w \rangle = \ol{\langle w,v \rangle},
    \]
    then the Hermitian pairing
    \[
    V^* \xrightarrow{h^{*-1}} (dV)^* \cong d(V^*)
    \]
    has the same signature as $h$ if we take the obvious isomorphism $(dV)^* \cong d(V^*)$.
    However, in defining $\Hilb^{\Z/2}$, the monoidal data $\chi$ of $d$ is changed into $\chi'$.
    As a result, the canonical isomorphism $d(V^*) \cong (dV)^*$ saying that $d$ preserves duals is changed by $(-1)^F$.
    Since this isomorphism is used in Definition \ref{def:dualhermstr}, the dual Hermitian pairing also has its signature reversed on the odd part in this convention.
    Note that this had to be the case by Theorem \ref{th:sHilbvsZ2grhilb}.
\end{example}

\begin{remark}
\label{rem:wrongsignconvention}
    In this section, we focused on modifying the anti-involution of the previous section in a way that is convenient for describing the category of $\Z/2$-graded Hilbert spaces.
    We then showed this category is equivalent to the super Hilbert spaces described in the previous section.
    We now briefly discuss sign conventions that do not recover the correct symmetric monoidal dagger category of super Hilbert spaces.
    
    Firstly, we mention the option of changing $\eta$ to $\eta'$, but keeping $\chi$.
    Then Hermitian pairings will satisfy 
    \[
    \langle v,w \rangle = \ol{\langle w,v \rangle},
    \]
    but their tensor product is still given by
    \[
    \langle v_1 \otimes v_2, w_1 \otimes w_2 \rangle := (-1)^{|v_2||w_1|} \langle v_1, v_2 \rangle_V \langle w_1, w_2 \rangle_W.
    \]
    As a consequence, the tensor square of any odd line will always have negative definite inner product.
    We see that with this anti-involution, there is no reasonable definition of Hilbert space closed under tensor product, as this anti-involution admits no minimal monoidal positivity structure.
    Note that if we were to decide to modify $\chi$ and not $\eta$, we arrive at a similar problem.
    In fact, the resulting categories are equivalent by Proposition \ref{prop:ungrsignconvention}.
\end{remark}

\begin{remark}
We have now explored several options for changing the canonical anti-involution on $\sVect$.
The reader might instead prefer to modify the $\Z/2$-action $V \mapsto \ol{V}$ by changing the isomorphisms
\[
\lambda_V\colon  V \cong \ol{\ol{V}} \quad \mu_{V,W}\colon  \ol{V \otimes W} \cong \ol{V} \otimes \ol{W}
\]
and see how they affect the induced anti-involution $dV = \ol{V}^*$.
Overall this translation is straightfoward.
However, we note that the formula \eqref{eq:etaformula} for $\eta$ also changes by a $(-1)^F$ if we change $\mu$ into
\[
\mu'_{V,W}(v \otimes w) = (-1)^{|v||w|} \mu_{V,W}(v \otimes w).
\]
For example, we see that to obtain $\Hilb^{\Z/2}$, we only need to change $\lambda$ to $\lambda_V' = \lambda_V \circ (-1)^F_V$.
If we would instead change $\chi$ to $\chi'$, we would obtain a problematic anti-involution, as explained in Remark \ref{rem:wrongsignconvention}.
\end{remark}

\begin{remark}
    Even though from a categorical perspective the definition of $\Z/2$-graded Hilbert spaces is less well-motivated than super Hilbert spaces, the category does have the conceptual advantage that it does not require the explicit choice of a square root of $-1$ to say what an odd degree Hilbert space is.
\end{remark}

\subsection*{Declarations}

\subsubsection*{Funding}

This article is based on the author's PhD thesis, written at the Max Planck Institute for Mathematics in Bonn and defended at the University of Bonn.
During the writing of the article, the author was employed by Dalhousie University and financially supported by the AARMS and the SCGCS.

\subsubsection*{Data availability}

This work does not generate any datasets. One
can obtain the relevant materials from the references below.

\subsubsection*{Conflict of interest}

The author has no competing interests to declare that are relevant to the content of this article.

\bibliography{biblio}{}
\bibliographystyle{plain}

\end{document}